\long\def\ca#1\cb{} 
\newcommand{\abs}[2][]{#1| #2 #1|}
\newcommand{\norm}[2][]{#1| \! #1| #2 #1| \! #1|}
\newcommand{\ket}[1]{|#1\rangle}               
\newcommand{\bra}[1]{\langle #1|}              
\newcommand{\dya}[1]{\ket{#1}\!\bra{#1}}
\newcommand{\dyad}[2]{\ket{#1}\!\bra{#2}}        
\newcommand{\ipa}[2]{\langle #1,#2\rangle}      
\newcommand{\AC}{\mathcal{A}}
\newcommand{\DC}{\mathcal{D}}
\newcommand{\EC}{\mathcal{E}}
\newcommand{\GC}{\mathcal{G}}
\newcommand{\IC}{\mathcal{I}}
\newcommand{\ZC}{\mathcal{Z}}
\newcommand{\Tr}{{\rm Tr}}
\newcommand{\pass}{\text{pass}}
\newcommand{\ave}[1]{\langle #1\rangle}               
\renewcommand{\geq}{\geqslant}
\renewcommand{\leq}{\leqslant}
\newcommand{\mted}[3]{\langle#1|#2|#3\rangle }
\newcommand{\leak}{\text{leak}^{\text{EC}}_{\text{obs}}}
\newcommand{\vectext}{\text{vec}}
\newcommand{\At}{\widetilde{A}}
\newcommand{\Bt}{\widetilde{B}}
\newcommand{\Ab}{\overline{A}}
\newcommand{\Bb}{\overline{B}}
\newcommand{\ot}{\otimes}
\newcommand{\ad}{^\dagger}
\newcommand*{\id}{\openone}
\newcommand{\rhot}{\tilde{\rho}}
\newcommand{\al}{\alpha }
\newcommand{\ep}{\epsilon}
\newcommand{\St}{\tilde{\mathbf{S} }}
\newcommand{\Sh}{\hat{\mathbf{S} }}
\newtheoremstyle{example}{\topsep}{\topsep}%
{}
{}
{\bfseries}
{:}
{   }
{\thmname{#1}\thmnumber{ #2}}
\theoremstyle{example}
\newtheorem{theorem}{Theorem}
\newtheorem{lemma}{Lemma}
\newtheorem{proposition}{Proposition}
\theoremstyle{definition}
\begin{document}

\title{Reliable numerical key rates for quantum key distribution}

\author{Adam Winick}
\affiliation{Institute for Quantum Computing and Department of Physics and Astronomy, University of Waterloo, N2L3G1 Waterloo, Ontario, Canada}

\author{Norbert L\"utkenhaus}
\affiliation{Institute for Quantum Computing and Department of Physics and Astronomy, University of Waterloo, N2L3G1 Waterloo, Ontario, Canada}

\author{Patrick J. Coles}
\thanks{Current affiliation: Theoretical Division, Los Alamos National Laboratory, Los Alamos, NM 87545, USA.}
\affiliation{Institute for Quantum Computing and Department of Physics and Astronomy, University of Waterloo, N2L3G1 Waterloo, Ontario, Canada}

\begin{abstract}
In this work, we present a reliable, efficient, and tight numerical method for calculating key rates for finite-dimensional quantum key distribution (QKD) protocols. We illustrate our approach by finding higher key rates than those previously reported in the literature for several interesting scenarios (e.g., the Trojan-horse attack and the phase-coherent BB84 protocol). Our method will ultimately improve our ability to automate key rate calculations and, hence, to develop a user-friendly software package that could be used widely by QKD researchers. 
\end{abstract}

\maketitle

\section{Introduction}

The possibility of large-scale quantum computers in the near future has spawned the field of quantum-safe cryptography \cite{Campagna2015}. This includes classical techniques based on computational hardness as well as information-theoretic approaches via physical assumptions. The latter invokes either assumptions about the physical channel \cite{Wyner1975}, or less restrictive, the assumption only that the laws of quantum physics are correct, which is the basis for quantum key distribution (QKD). See Ref.~\cite{Scarani2009} for a review of QKD and Ref.~\cite{Lo2014} for an update of recent progress. Both classical methods and QKD will likely play a role in quantum-safe cryptographic implementations.

The maturity of QKD technology is evidenced by a recent QKD satellite launch \cite{Xin2011} as well as developments of fiber-based networks \cite{Peev2009, Sasaki2011, Wang2014}, suggesting that global networks are on the horizon. Still, there remains important open problems in QKD theory, such as (1) optimizing the practicality of QKD protocols to make them easily implementable, and (2) understanding the effects of device imperfections and side channels. 

Solving these problems requires a robust theoretical method for evaluating a QKD protocol's performance, which involves a detailed security analysis. Performance is then quantified by the key rate - the number of bits of secret key obtained per exchange of quantum signal. Unfortunately, analytical methods for calculating the key rate are highly technical, are often limited in scope to particular protocols, and invoke inequalities that introduce looseness into the calculation.

We therefore focus our efforts on numerical methods, which are inherently more robust to both device imperfections and changes in protocol structure. Furthermore, numerics can be made user-friendly, such that the user needs only to define the specifications of the protocol and then the computer performs the key rate calculation. As an example, our group recently released a software package for this purpose.\footnote{This software can be downloaded from the website: https://lutkenhausgroup.wordpress.com/qkd-software/.} 

The key rate calculation involves minimizing a convex function over all eavesdropping attacks that are consistent with the experimental data \cite{Renner2005,Renner2005a, Watanabe2008, Matsumoto2013}. When employing numerics, one issue that arises is the efficiency of this optimization. This issue is particularly important for high-dimensional QKD protocols, or protocols with many signal states, since the relevant optimization involves many parameters. In such cases, the computational time can be very long - sometimes days - so it is crucial to implement a high efficiency algorithm. 
 
Another issue with numerics is reliability. This is more subtle but also more important than the efficiency issue. Due to the inherent paranoia in cryptography, it is natural to ask whether numerically calculated key rates are trustworthy. After all, computers have finite numerical precision. Furthermore, optimization algorithms never truly reach the global optimum, as termination conditions always have some non-zero tolerance. Since QKD is now a serious, real-world technology, key rates must come with a security guarantee, and hand-waving at these numerical issues will not suffice. 

In this work, we present a numerical method that solves both the reliability and efficiency issues. Our method provides reliable lower bounds on the key rate with arbitrary tightness for finite-dimensional QKD protocols. Furthermore it is highly efficient and typically returns a key rate within seconds or less on one's personal computer. To illustrate our method, we apply it to three practically interesting scenarios. Namely we consider the Trojan-horse attack~\cite{Vakhitov2001, Gisin2006, Lucamarini2015}, the BB84 protocol with phase-coherent signal states \cite{Huttner1995,Lo2007}, and the BB84 protocol with detector efficiency mismatch \cite{Fung2009}. We improve upon literature key rates in all three cases.

Directly calculating the key rate involves a minimization problem (see Sec.~\ref{sctbackground}). When solving this on a computer, the algorithm will terminate before reaching the global optimum and hence will return an upper bound on the true key rate. However, we are interested in reliable lower bounds, i.e., achievable key rates. In previous work \cite{Coles2016}, we noted this issue as motivation for transforming the optimization problem to the so-called dual problem \cite{Boyd2010}. This transforms the minimization problem into a maximization problem. Therefore, the dual problem will return a lower bound on the key rate, as desired. This method led to novel insights for particular protocols as discussed in \cite{Coles2016}. However, in order to simplify the optimization in the dual problem, we invoked an inequality that in some cases introduces looseness into the key rate and, furthermore, makes the optimization problem non-convex. Ultimately the non-convexity reduces the efficiency of this approach, making it difficult to apply to protocols with large numbers of signal states.

We therefore present a method here that retains the efficiency of convex optimization, but also has the reliability of the dual problem. Our approach is to break up the calculation into two steps. The first step approximately minimizes the convex function, and hence finds an eavesdropping attack that is close to optimal. The second step takes this approximately optimal attack and converts it into a lower bound on the key rate. Breaking it up into these two steps adds flexibility to our method, in that any algorithm can be employed for the initial minimization of the convex function.

Our main technical result is to provide a recipe for performing the second step, i.e., for converting a near-optimal attack into a tight lower bound on the key rate. At the technical level, we derive our main result first by linearizing the problem and then by transforming to the dual problem of the subsequent linearized problem. The idea is that, for a convex function, any linearization (about any point) will undercut (and hence lower bound) the curve. One obtains the tightest lower bound by this method if one linearizes about a point corresponding to the global minimum of the convex function.

We emphasize several differences between this work and Ref.~\cite{Coles2016}. Our method presented herein is tight, essentially giving the optimal key rate, whereas Ref.~\cite{Coles2016} invoked the Golden-Thompson inequality which for certain protocols introduces looseness into the calculated key rates. This difference can be seen below in our Figures \ref{fgrEffMismatch}, \ref{fgrTrojan}, and \ref{fgrLP}, which show a clear gap between our key rates and those computed from the method in Ref.~\cite{Coles2016}. Another difference, as noted earlier, is the lack of convexity of the method in Ref.~\cite{Coles2016}. For a small number of optimization parameters, e.g., as encountered with entanglement-based QKD protocols, the non-convex method in Ref.~\cite{Coles2016} works quite well. On the other hand, for the method in Ref.~\cite{Coles2016}, prepare-and-measure QKD protocols require adding a number of optimization parameters that scale quadradically in the number of signal states, as we need to describe the pairwise quantum-mechanical overlaps between the signals (see Ref.~\cite{Coles2016} and the discussion below around Eq.~\eqref{eqnSourceReplace95}), which can lead to long computation times. In contrast, the convexity of the method presented herein leads to more efficient scaling of the computation time with the number of signal states for prepare-and-measure QKD, despite the larger amount of open parameters of this method.

In what follows we first give background on key rate calculations in the next section. Then we present our main result in Sec.~\ref{sctmainresult}. In Sec.~\ref{sctqkdframework} we describe how our approach applies to a general class of QKD protocols. We illustrate our method for three interesting example protocols in Sec.~\ref{sctexamples}, and finally we conclude in Sec.~\ref{sctconclusion}. Technical details can be found in the Appendix.

\section{Background}\label{sctbackground}

The well-known asymptotic key rate formula \cite{Devetak2005} is given by the difference of two information-theoretic quantities associated, respectively, with privacy amplification (PA) and error correction (EC). These two terms appear in the following expression for the key rate per signal
\begin{align}
\label{eqnPAEC3}
K &= p_{\pass}\left(\min_{\rho \in\mathbf{S} } \widehat{f}(\rho) -  \leak \right) \\
\label{eqnPAEC3_2}
&= \left(\min_{\rho \in\mathbf{S} } f(\rho)\right)  - p_{\pass}\cdot \leak\,.
\end{align}
We explain this expression in more detail in Sec.~\ref{sctqkdframework}. For now, we note that $p_{\pass}$ refers to the probability for passing the post-selection (e.g., sifting) in the protocol, $\leak$ 
denotes the number of bits of information (per signal that passed post-selection) that Alice publicly reveals during error correction, and $f(\rho) = p_{\pass}\cdot\widehat{f}(\rho)$ is a function defined below in Eq.~\eqref{eqnPAEC4b}.

The first term in \eqref{eqnPAEC3} is the PA term. Here, $\rho$ is the density operator shared by Alice, Bob, and possibly other parties involved in the protocol. (Note that prepare-and-measure protocols can be recast as entanglement-based protocols and are described by same mathematics, see Sec.~\ref{sctqkdframework} for elaboration.) We assume that the state has an i.i.d.\ (independent, identically distributed) structure, and hence it makes sense to discuss the state $\rho$ associated with a single round of quantum communication. This i.i.d.\ structure corresponds to Eve doing a so-called collective attack. However, the security of our derived asymptotic key rate also holds against the most general attacks (coherent attacks) if one imposes that the protocol involves a random permutation of the rounds (a symmetrization step) such that the Quantum de Finetti theorem \cite{Renner2005, Renner2007} applies.

The density operator $\rho$ is unknown, but the asymptotic experimental data gives linear constraints on it, of the form
\begin{align} \label{eqnconstraints1234}
\Tr(\Gamma_i\rho) = \gamma_i, \quad\forall i\,,
\end{align}
where the $\Gamma_i$ are Hermitian operators. Let $\mathbf{S}$ denote the set of states that satisfy these constraints
\begin{align} \label{eqnconstraintSet}
\mathbf{S} = \{\rho\in \mathbf{H_{+}} \mid \Tr(\Gamma_i \rho) = \gamma_i, \forall i\}\,,
\end{align}
where $\mathbf{H_{+}}$ is the set of positive semidefinite operators. Also, we add the identity to the set $\{\Gamma_i\}$ to enforce that $\Tr(\rho ) = 1$, giving a total of $n$ constraints.

The key rate calculation is an optimization problem, since we must consider the worst-case scenario (the most powerful eavesdropping attack) that is consistent with the experimental data. Hence Eq.~\eqref{eqnPAEC3} involves minimizing over all $\rho \in\mathbf{S}$. Note that the error correction term is exactly determined by the observations, and hence we can pull it out of the optimization in \eqref{eqnPAEC3_2}. Similarly, $p_{\pass}$ is known from the observations and is pulled into the optimization in \eqref{eqnPAEC3_2}.

As discussed in Sec.~\ref{sctqkdframework}, $f(\rho)$ can be written as
\begin{align} \label{eqnPAEC4b}
f(\rho) = D(\hspace{2pt}   \GC(\rho) || \ZC(\GC(\rho))      \hspace{2pt})\,,
\end{align}
where $D(\sigma ||\tau):= \Tr(\sigma \log \sigma) -\Tr(\sigma \log \tau)$ is the relative entropy, $\GC$ is a completely positive (CP) map, and $\ZC$ is a completely positive trace preserving (CPTP) map, more specifically a pinching quantum channel. (Sec.~\ref{sctqkdframework} discusses the meaning of $\GC$ and $\ZC$, which are respectively related to the post-selection and the key map of the QKD protocol). Due to the joint convexity of the relative entropy and the fact that $\GC$ and $\ZC$ are linear maps, the function $f(\rho)$ is convex in $\rho$. Furthermore the problem
\begin{align} \label{eqnPAEC4}
\alpha :=\min_{\rho \in\mathbf{S} } f(\rho) 
\end{align}
is a convex optimization problem since the set $\mathbf{S}$ is convex (see, e.g., Ref.~\cite{Boyd2010}). While efficient numerical methods are known for such convex problems, the key rate calculation is unique compared to other convex problems, in that getting ``close'' to the optimal point is not good enough. One needs a reliable lower bound on the key rate, i.e., guaranteed security.

\section{Main result}\label{sctmainresult}

\subsection{Reliable lower bound}\label{sctmainresult1}

We now show how to lower bound the minimization problem in \eqref{eqnPAEC4}. Our strategy is to break up the key rate calculation into two steps:
\begin{itemize}
\item Step 1: Find an eavesdropping attack that is close to optimal, which gives an upper bound on the key rate.
\item Step 2: Convert this upper bound to a lower bound on the key rate.
\end{itemize}

With our approach, Step~1 does not need to be perfect - any eavesdropping attack may be used as an input for Step~2. However, if Step~1 returns the optimal attack, our lower bound calculated by Step 2 will be tight. Furthermore, our method for Step 2 is continuous around the optimal attack. Thus, finding a near-optimal attack produces a near-optimal lower bound.

Step~1 may be solved in various ways using convex optimization methods \cite{Boyd2010}. For concreteness, Sec.~\ref{sctalgorithm} presents one such method, which exploits the structure of our problem and is relatively fast. 

On the other hand, our main result is a method for performing Step 2. We approach Step 2 via a sequence of theorems that successively improve the reliability and robustness of the lower bounds which they return. First, Theorem~\ref{thm1} presents the conceptual foundation for our lower bounding method. However, this theorem is stated under a restrictive assumption that may not hold in special cases. Therefore, we extend our result in Theorem~\ref{thm2}. Finally, we improve our result once more in Theorem~\ref{thm3}, which addresses numerical imprecision and is directly useful for numerical key rate calculations. Sec.~\ref{sctTightness} presents the argument that our method yields arbitrarily tight bounds on the key rate.

We now present our main result in its simplest conceptual form. To state this result, we first define the gradient of $f$ at point $\rho$, whose representation in the standard basis $\{\ket{j}\}$ is
\begin{align}
\label{eqnGradDef836}
\nabla f(\rho):=\sum_{j,k} d_{jk}\dyad{j}{k},\quad\text{with   }d_{jk}:=\frac{\partial f(\sigma)}{\partial \sigma_{jk}} \Bigg | _{\sigma=\rho} 
\end{align}
and $\sigma_{jk}:= \mted{j}{\sigma}{k}$.

\bigskip
\begin{theorem}
\label{thm1}
Given any $\rho \in \mathbf{S}$, if $\nabla f(\rho)$ exists, then
\begin{align}
\label{eq:DualLowerBound}
\alpha \geq \beta(\rho)\,,
\end{align}
where $\al$ was defined in \eqref{eqnPAEC4} and
\begin{align}
\label{eq:DualLowerBound2}
\beta(\sigma) &:= f(\sigma)-\Tr(\sigma^T \nabla f(\sigma))+\max_{\vec{y} \in \mathbf{S}^*(\sigma)} \vec{\gamma} \cdot \vec{y}\,, \\
\label{eq:DualLowerBound3}
\mathbf{S^*}(\sigma) &:= \left\{ \vec{y} \in \mathbb{R}^n \mid \sum_i y_i \Gamma_i^T \leq \nabla f(\sigma) \right\}\,.
\end{align}
Here, the transpose $^T$ is taken in the same basis as that used to define the gradient in \eqref{eqnGradDef836}, and $\vec{\gamma} = \{\gamma_i\}$ is the vector of expectation values from \eqref{eqnconstraints1234}, which is of size $n$ (the total number of constraints). Furthermore, equality in \eqref{eq:DualLowerBound} holds if $f(\rho) = \alpha$, i.e., if $\rho$ corresponds to an optimal attack.
\end{theorem}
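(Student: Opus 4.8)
The plan is to exploit convexity of $f$ together with linear-programming (LP) duality. The key observation is that, for a convex differentiable function, the first-order Taylor expansion about any point is a global underestimator: for all $\sigma, \rho$ in the (convex) domain,
\begin{align}
\label{eqnProofProposalTaylor}
f(\sigma) \geq f(\rho) + \Tr\!\big((\sigma-\rho)^T \nabla f(\rho)\big)\,.
\end{align}
First I would apply \eqref{eqnProofProposalTaylor} inside the minimization defining $\al$ in \eqref{eqnPAEC4}, holding $\rho$ fixed and letting $\sigma$ range over $\mathbf{S}$. This replaces the minimization of the convex function $f$ by the minimization of an affine function over $\mathbf{S}$, yielding
\begin{align}
\label{eqnProofProposalLP}
\al \;\geq\; f(\rho) - \Tr\!\big(\rho^T \nabla f(\rho)\big) + \min_{\sigma \in \mathbf{S}} \Tr\!\big(\sigma^T \nabla f(\rho)\big)\,.
\end{align}
The remaining inner problem is a semidefinite program: minimize the linear functional $\sigma \mapsto \Tr(\sigma^T \nabla f(\rho))$ subject to $\sigma \geq 0$ and the affine constraints $\Tr(\Gamma_i \sigma) = \gamma_i$.

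The second step is to take the Lagrange dual of this inner SDP. Introducing a real multiplier vector $\vec{y} \in \mathbb{R}^n$ for the equality constraints $\Tr(\Gamma_i\sigma)=\gamma_i$, the dual constraint that the coefficient of $\sigma$ in the Lagrangian be positive semidefinite becomes exactly $\sum_i y_i \Gamma_i^T \leq \nabla f(\rho)$ — i.e.\ membership in $\mathbf{S}^*(\rho)$ as defined in \eqref{eq:DualLowerBound3} — and the dual objective is $\vec{\gamma}\cdot\vec{y}$. Weak duality then gives $\min_{\sigma \in \mathbf{S}} \Tr(\sigma^T\nabla f(\rho)) \geq \max_{\vec{y}\in\mathbf{S}^*(\rho)} \vec{\gamma}\cdot\vec{y}$, and substituting into \eqref{eqnProofProposalLP} produces precisely $\al \geq \beta(\rho)$ with $\beta$ as in \eqref{eq:DualLowerBound2}. (One must be mildly careful that the transpose is handled consistently: since $\mathbf{H}_+$ is closed under transpose and $\Tr(A^TB)=\Tr(AB^T)$, writing the SDP in terms of $\sigma^T$ versus $\sigma$ is immaterial, but I would state this explicitly to match the basis convention fixed in \eqref{eqnGradDef836}.)

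For the tightness claim, suppose $f(\rho)=\al$, so $\rho$ is a global minimizer. Then $\rho$ is optimal for the original convex problem \eqref{eqnPAEC4}, and by the first-order optimality condition the linearized objective $\Tr(\sigma^T\nabla f(\rho))$ is also minimized over $\mathbf{S}$ at $\sigma=\rho$; hence \eqref{eqnProofProposalLP} holds with equality in its first inequality, giving $\al = f(\rho) - \Tr(\rho^T\nabla f(\rho)) + \min_{\sigma\in\mathbf{S}}\Tr(\sigma^T\nabla f(\rho))$. It then remains to argue that strong duality holds for the inner SDP, so that the $\min$ equals the $\max$ over $\mathbf{S}^*(\rho)$; this follows from Slater's condition provided $\mathbf{S}$ contains a positive-definite point (which is the natural non-degeneracy assumption for a sensibly posed protocol). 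I expect the main obstacle to be exactly this last point — ensuring a constraint qualification so that strong duality, not merely weak duality, holds at the optimum — and relatedly the hypothesis that $\nabla f(\rho)$ exists, which can fail when $\GC(\rho)$ or $\ZC(\GC(\rho))$ is rank-deficient (the $\log$ in the relative entropy blows up); this is presumably why the statement carries the caveat "if $\nabla f(\rho)$ exists" and why the authors announce a refinement in Theorem~\ref{thm2}.
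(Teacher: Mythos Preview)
Your proposal is correct and essentially identical to the paper's proof: convex first-order underestimator, then SDP weak duality for the inequality; first-order optimality plus strong duality for the equality case. The one substantive difference is in verifying Slater's condition: you propose assuming $\mathbf{S}$ contains a strictly positive-definite point (primal strict feasibility), whereas the paper verifies \emph{dual} strict feasibility unconditionally---since $\Gamma_1=\id$ is among the constraints, $\vec{y}=(\lambda_{\min}(\nabla f(\rho))-1,0,\dots,0)$ always gives $\sum_i y_i\Gamma_i^T<\nabla f(\rho)$---so no extra non-degeneracy assumption on $\mathbf{S}$ is needed.
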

\bigskip

The proof is given in Appendix~\ref{sctAppProof}. It involves linearizing the convex function $f$ about point $\rho$ and then transforming the subsequent linearized problem to its dual problem (see~\cite{Boyd2010} for discussion of duality).

Figure~\ref{fgrMainResult} illustrates the basic idea of Theorem~\ref{thm1}. Theorem~\ref{thm1} takes any feasible eavesdropping attack $\rho$, which gives an upper bound $f(\rho)$ on $\alpha$, and converts it into a reliable lower bound on $\alpha$. The fact that \eqref{eq:DualLowerBound2} involves a maximization is crucial for the reliability of the calculation. Since maximization involves approaching the solution from below, every number that the computer outputs is a lower bound on $\alpha$, even if the computer does not reach the global maximum.

\begin{figure}[tbp]
\begin{center}
\vspace{0.2cm}
\begin{overpic}[width=8.1cm]{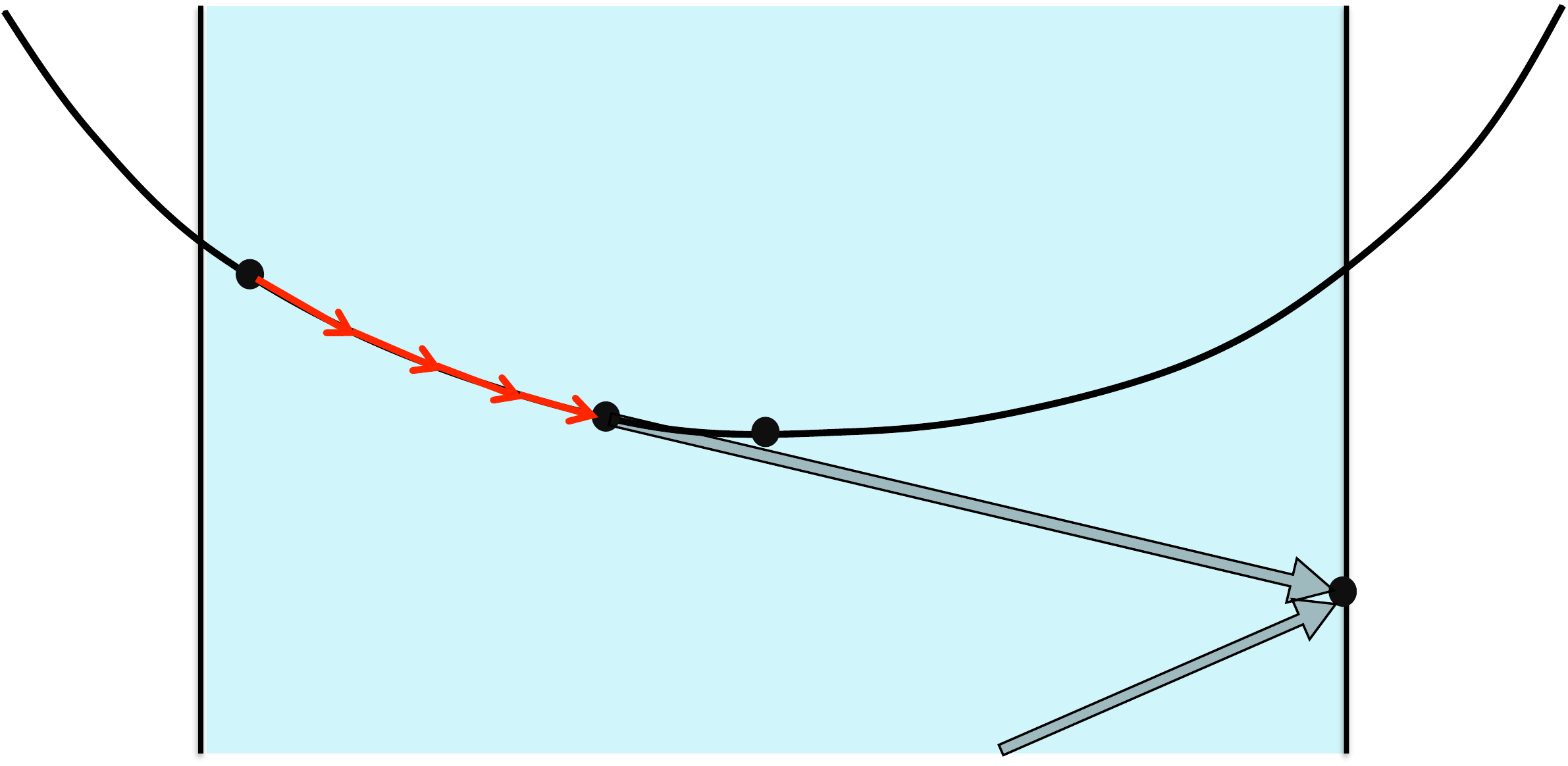}
\put(2,37){\normalsize $f$}
\put(15,33.9){\normalsize $\rho_0$}
\put(38,25){\normalsize $\rho$}
\put(48,23.9){\normalsize $\rho^*$}
\put(79,43.9){\normalsize $\mathbf{S}$}
\put(23,29.2){\normalsize Step 1}
\put(42,9.2){\normalsize Step 2}
\put(87,11.9){\normalsize Our lower}
\put(88,8){\normalsize bound}
\put(58,18.5){\rotatebox{-13}{\normalsize Linearization}}
\put(65,3.5){\rotatebox{22}{\normalsize Dual}}
\end{overpic}
\caption{Illustration of our lower-bounding method. Step~1 is any algorithm that takes an initial feasible point $\rho_0$ and outputs another feasible point $\rho$, which may or may not be close to the optimal attack $\rho^*$. Note that $f(\rho)$ provides an upper bound on $f(\rho^*)$ and, hence, on the key rate. Step~2 converts this into a lower bound, by solving the dual problem of the linearization of $f$ about point $\rho$. Since the linearization undercuts the curve $f$ and since the dual problem is a maximization, our lower bound is reliable even if the numerical calculation does not reach the global optimum.}
\label{fgrMainResult}
\end{center}
\end{figure}

\subsection{A more robust bound}\label{sctmainresult2}

The assumption that $\nabla f$ exists over the entirety of $\mathbf{S}$ does not necessarily hold. In particular, the gradient has the form
\begin{align}
\label{eqngradient1}
\left[ \nabla f(\rho) \right]^T = \GC^\dagger (\log \GC(\rho)) - \GC^\dagger (\log \ZC(\GC(\rho))) \,,
\end{align}
which is derived using rules for derivatives of matrix traces \cite{Berger2007}. (For example, the derivative of $\Tr (G(X))$ is $g(X)^T$ where $g(\cdot)$ is the scalar derivative of $G(\cdot)$.) By inspection, we see that if $\GC(\rho)$ is singular (i.e., not full rank), $\nabla f(\rho)$ may not exist, and hence Theorem~\ref{thm1} would not apply. Ideally we would like a bound that holds for all $\rho \in \mathbf{S}$. Towards this end, it is helpful to introduce a slight perturbation that maps $\GC(\rho)$ to its interior, i.e., points on the boundary become interior points. The key fact that makes this perturbation useful is that we can upper bound the difference between the function $f$ evaluated on the perturbed and unperturbed inputs, and this ultimately leads to our result in Theorem~\ref{thm2} below. We introduce this perturbation as follows.

Let $\epsilon$ be such that $0< \epsilon < 1$, and consider a perturbed channel
\begin{align}
\label{eqnPerturbedChannel}
\GC_{\epsilon}(\rho) := (\DC_{\epsilon}\circ \GC)(\rho)
\end{align}
where $\DC_{\epsilon}$ is the depolarizing channel,
\begin{align}
\label{eqnPerturbedChannel2}
\DC_{\epsilon}(\rho) =(1-\epsilon) \rho + \epsilon \hspace{2 pt} \tau
\end{align}
where $ \tau  = \id / d'$ is the maximally mixed state. Here (and for future reference) we define $d$ and $d'$, respectively, to be the dimension of $\rho$ and $\GC(\rho)$. Furthermore, we define a perturbed function
\begin{align}
\label{eqnPerturbedfunction1}
f_{\epsilon}(\rho) :=D(\GC_{\epsilon}(\rho) || \ZC( \GC_{\epsilon}(\rho)))\,,
\end{align}
which is just \eqref{eqnPAEC4b} with $\GC$ replaced by $\GC_{\epsilon}$. The key point is the following lemma.

\bigskip
\begin{lemma}
\label{lemma_gradient_exists}
The gradient $\nabla f_{\epsilon}(\rho)$, which is given by
\begin{align}
\label{eqngradient1123}
\left[ \nabla f_{\epsilon}(\rho) \right]^T = \GC_{\epsilon}^\dagger (\log \GC_{\epsilon}(\rho)) - \GC_{\epsilon}^\dagger (\log \ZC(\GC_{\epsilon}(\rho))) \,,
\end{align}
always exists for all $\rho \geq 0$.
\end{lemma}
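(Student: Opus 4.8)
The plan is to show that the right-hand side of \eqref{eqngradient1123} is well-defined for every $\rho\geq 0$, the only potential obstruction being the two logarithms. The key observation is that the depolarizing channel $\DC_\epsilon$ with $0<\epsilon<1$ always outputs a \emph{full-rank} operator: for any $\rho\geq0$ with $\Tr\rho=1$ (or more generally $\Tr\rho$ bounded), we have $\GC_\epsilon(\rho) = (1-\epsilon)\GC(\rho) + \epsilon\,\Tr(\GC(\rho))\,\id/d' \geq \epsilon\,\Tr(\GC(\rho))\,\id/d' > 0$ as long as $\GC(\rho)\neq 0$. Since $\GC$ is completely positive, $\GC(\rho)\geq 0$, so $\GC_\epsilon(\rho)$ is strictly positive definite whenever $\GC(\rho)$ is nonzero, and hence $\log\GC_\epsilon(\rho)$ exists. (If $\GC(\rho)=0$ the function $f_\epsilon$ is trivial/the relevant term vanishes, which I would handle as a degenerate edge case or note it does not arise.)

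Next I would handle the second logarithm, $\log\ZC(\GC_\epsilon(\rho))$. Here I use that $\ZC$ is a pinching channel, hence CPTP and, crucially, a \emph{positive} map that acts as $\sigma\mapsto\sum_j P_j\sigma P_j$ for orthogonal projectors $\{P_j\}$ summing to $\id$. Applying a pinching cannot decrease the rank on its support: since $\GC_\epsilon(\rho)>0$, each block $P_j\GC_\epsilon(\rho)P_j$ is positive definite on the range of $P_j$, so $\ZC(\GC_\epsilon(\rho)) = \sum_j P_j\GC_\epsilon(\rho)P_j > 0$ is again full rank. Therefore $\log\ZC(\GC_\epsilon(\rho))$ is also well-defined, and both terms on the right-hand side of \eqref{eqngradient1123} make sense.

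Finally I would justify that this expression is indeed the gradient of $f_\epsilon$, i.e. that $f_\epsilon$ is differentiable at $\rho$ and that differentiating through the chain $\rho\mapsto\GC_\epsilon(\rho)\mapsto D(\cdot\,\|\,\ZC(\cdot))$ produces \eqref{eqngradient1123}. Since $D(\sigma\|\tau)=\Tr(\sigma\log\sigma)-\Tr(\sigma\log\tau)$ is a smooth function of $(\sigma,\tau)$ on the open set where both arguments are positive definite, and we have just established that $\GC_\epsilon(\rho)$ and $\ZC(\GC_\epsilon(\rho))$ lie in that open set for all $\rho\geq0$, the composition is differentiable; the formula then follows from the matrix-trace derivative rules already cited for \eqref{eqngradient1} (the derivative of $\Tr(G(X))$ is $g(X)^T$, together with linearity of $\GC_\epsilon$ and $\ZC$ and the adjoint identity $\Tr(A\,\GC_\epsilon(B))=\Tr(\GC_\epsilon^\dagger(A)\,B)$), and the pinching $\ZC$ being self-adjoint so the cross term combines as shown.

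The main obstacle is not any single hard estimate but rather being careful about the domain: one must pin down that ``$\rho\geq0$'' together with the relevant normalization forces $\GC_\epsilon(\rho)$ strictly positive (handling or excluding the trivial $\GC(\rho)=0$ case), and that pinching preserves strict positivity. Once strict positivity of both logarithm arguments is in hand, differentiability and the explicit formula are routine applications of the already-cited matrix calculus, so I would keep that part brief.
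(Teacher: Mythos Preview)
Your approach is essentially the paper's: establish that both arguments of the logarithms are strictly positive, after which the gradient formula is well-defined by the cited matrix-calculus rules. However, you have misread the definition of the perturbation. The paper defines $\DC_\epsilon(\sigma)=(1-\epsilon)\sigma+\epsilon\,\id/d'$ with a \emph{fixed} additive term $\epsilon\,\id/d'$, not the trace-preserving variant $\epsilon\,\Tr(\sigma)\,\id/d'$ that you wrote. Consequently $\GC_\epsilon(\rho)=(1-\epsilon)\GC(\rho)+\epsilon\,\id/d'\geq\epsilon\,\id/d'>0$ holds unconditionally for any $\rho\geq0$, and your worried edge case $\GC(\rho)=0$ simply never arises; there is nothing to handle or exclude.

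For the second logarithm, your argument (a pinching preserves strict positivity because each block $P_j\sigma P_j$ is positive definite on the range of $P_j$) is correct and is a mild variant of the paper's route. The paper instead uses linearity of $\ZC$ together with $\ZC(\id)=\id$ to write $\ZC(\GC_\epsilon(\rho))=(1-\epsilon)\ZC(\GC(\rho))+\epsilon\,\id/d'\geq\epsilon\,\id/d'>0$. Both arguments are one line and yield the same conclusion. Your additional paragraph justifying differentiability through the chain rule is more explicit than the paper's treatment, which is content to stop once the logarithms are shown to be well-defined.
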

\bigskip

As elaborated in Appendix~\ref{sctAppGradient}, Lemma~\ref{lemma_gradient_exists} follows from the fact that $\GC_{\epsilon}(\rho) > 0$ and similarly $\ZC(\GC_{\epsilon}(\rho)) > 0$. Hence the logarithm of these two matrices in \eqref{eqngradient1123} is well-defined.

The above lemma motivates the following theorem, which is a more robust version of Theorem~\ref{thm1}.

\bigskip
\begin{theorem}
\label{thm2}
Let $\rho \in \mathbf{S}$, where $\rho$ is $d\times d$ and $\GC(\rho)$ is $d' \times d'$. Let $\epsilon \in \mathbb{R}$. If $0 < \epsilon \leq 1/[e(d'-1)]$, where $e$ is the base of the natural logarithm, then
\begin{align}
\label{eq:DualLowerBound5215}
\alpha \geq \beta_{\epsilon}(\rho) - \zeta_\epsilon
\end{align}
where
\begin{align}
\label{eqnBetaEpsilonDef}
\beta_{\epsilon}(\sigma) &:=  f_{\epsilon}(\sigma)-\Tr(\sigma^T \nabla f_{\epsilon}(\sigma))+\max_{\vec{y} \in \mathbf{S}_{\epsilon}^*(\sigma)} \vec{\gamma} \cdot \vec{y}\,, \\
\label{eq:DualLowerBound3234}
\mathbf{S_{\epsilon}^*}(\sigma) &:= \left\{ \vec{y} \in \mathbb{R}^n \mid \sum_i y_i \Gamma_i^T \leq \nabla f_{\epsilon}(\sigma) \right\}\,,\\
\label{eqnDeltaEpsilonDef}
\zeta_{\ep} &:= 2 \epsilon (d' - 1) \log \frac{d'}{\epsilon (d' - 1)} \,.
\end{align}
\end{theorem}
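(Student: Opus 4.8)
The plan is to reduce Theorem~\ref{thm2} to Theorem~\ref{thm1} applied to the perturbed function $f_\epsilon$, and then pay for the perturbation with a continuity estimate. Since $\GC_\epsilon=\DC_\epsilon\circ\GC$ is affine in $\rho$, the function $f_\epsilon(\rho)=D(\GC_\epsilon(\rho)\,\|\,\ZC(\GC_\epsilon(\rho)))$ is convex in $\rho$ by the same argument used for $f$ (joint convexity of relative entropy composed with linear/affine maps), and by Lemma~\ref{lemma_gradient_exists} its gradient $\nabla f_\epsilon(\rho)$ exists for every $\rho\geq0$. Consequently the proof idea behind Theorem~\ref{thm1} -- linearize $f_\epsilon$ about $\rho$ using the gradient inequality for convex functions, then invoke weak duality for the resulting conic linear program over $\mathbf{S}$ -- applies verbatim with $f$ replaced by $f_\epsilon$, giving $\alpha_\epsilon:=\min_{\sigma\in\mathbf{S}}f_\epsilon(\sigma)\geq\beta_\epsilon(\rho)$ for every $\rho\in\mathbf{S}$, with $\beta_\epsilon$ and $\mathbf{S}_\epsilon^*$ exactly as in \eqref{eqnBetaEpsilonDef}--\eqref{eq:DualLowerBound3234}.

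It then suffices to show $\alpha\geq\alpha_\epsilon-\zeta_\epsilon$, and for this I would establish the pointwise bound $f_\epsilon(\sigma)-f(\sigma)\leq\zeta_\epsilon$ for all $\sigma\in\mathbf{S}$: taking $\rho^*\in\mathbf{S}$ to attain $\alpha$ (a minimizer exists by compactness of $\mathbf{S}$ and lower semicontinuity of the relative entropy) yields $\alpha_\epsilon\leq f_\epsilon(\rho^*)\leq f(\rho^*)+\zeta_\epsilon=\alpha+\zeta_\epsilon$. To get the pointwise bound, use the pinching identity $D(X\,\|\,\ZC(X))=H(\ZC(X))-H(X)$ (valid because the off-diagonal blocks of $X$ drop out of $\Tr[X\log\ZC(X)]=\Tr[\ZC(X)\log\ZC(X)]$), so that $f_\epsilon(\sigma)-f(\sigma)$ is a difference of four von Neumann entropies. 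Pair these as $[H(\ZC(\GC_\epsilon(\sigma)))-H(\ZC(\GC(\sigma)))]$ and $[H(\GC(\sigma))-H(\GC_\epsilon(\sigma))]$; the second pair is $\leq0$ by concavity of $H$, and the first pair is controlled by a Fannes--Audenaert-type continuity bound once one notes that $\|\GC_\epsilon(\sigma)-\GC(\sigma)\|_1=\epsilon\|\tau-\GC(\sigma)\|_1\leq2\epsilon(d'-1)/d'$ and that the pinching $\ZC$ is trace-distance non-increasing, so $\ZC(\GC_\epsilon(\sigma))$ and $\ZC(\GC(\sigma))$ are $d'$-dimensional states at trace distance $T:=\epsilon(d'-1)/d'$. (Alternatively one may skip the concavity step and simply bound $|f_\epsilon-f|$ by a sum of two such continuity terms, which is where the factor $2$ in $\zeta_\epsilon$ naturally appears.)

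The only substantive work -- and the main obstacle -- is then to collapse $T\log(d'-1)+h(T)$ (or twice it), where $h$ is the binary entropy, into the closed form $\zeta_\epsilon=2\epsilon(d'-1)\log[d'/(\epsilon(d'-1))]$. This is exactly where the hypothesis $\epsilon\leq1/[e(d'-1)]$ is used: setting $x:=\epsilon(d'-1)$ it gives $x\leq1/e$, which (i) forces $T=x/d'<1/2$ so that $h$ is increasing at $T$, (ii) makes $\log[d'/x]>0$ and hence $\zeta_\epsilon>0$, and (iii) lets the elementary estimates close -- e.g.\ $h(x/d')\leq(x/d')\log(d'/x)+(x/d')\log e$ via $-(1-t)\ln(1-t)\leq t$, after which $d'\geq2$ together with $x\leq1/e$ absorb the leftover $\log(d'-1)$ and $\log e$ contributions into $2x\log(d'/x)$. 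Combining $\alpha\geq\alpha_\epsilon-\zeta_\epsilon$ with $\alpha_\epsilon\geq\beta_\epsilon(\rho)$ then yields \eqref{eq:DualLowerBound5215}.
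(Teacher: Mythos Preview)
Your overall architecture is exactly the paper's: apply the Theorem~\ref{thm1} argument verbatim to the perturbed function $f_\epsilon$ (licensed by Lemma~\ref{lemma_gradient_exists}) to get $\alpha(\epsilon):=\min_{\sigma\in\mathbf{S}}f_\epsilon(\sigma)\geq\beta_\epsilon(\rho)$, and then pay $\zeta_\epsilon$ via a pointwise continuity bound $|f(\sigma)-f_\epsilon(\sigma)|\leq\zeta_\epsilon$. The paper in fact follows your parenthetical alternative (two Fannes terms, no concavity shortcut), so conceptually you are aligned.

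Two technical points are worth flagging. First, you treat $\GC(\sigma)$ as a normalized state when estimating $\|\GC_\epsilon(\sigma)-\GC(\sigma)\|_1$, but $\GC$ is only completely positive and trace-\emph{nonincreasing} (it contains the post-selection projector), so $\Tr\GC(\sigma)\leq 1$ in general and the bound $\|\tau-\GC(\sigma)\|_1\leq 2(d'-1)/d'$ need not hold. The paper instead uses $0\leq\GC(\sigma)\leq\id$ (which does follow from the specific form of $\GC$) to bound each eigenvalue of $\GC(\sigma)-\id/d'$ by $1-1/d'$ in absolute value, yielding $\|\GC(\sigma)-\GC_\epsilon(\sigma)\|_1\leq\epsilon(d'-1)$. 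Second, the algebraic ``collapse'' of $T\log(d'-1)+h(T)$ into $\zeta_\epsilon$ that you sketch is unnecessary: with the trace-norm bound $\kappa:=\epsilon(d'-1)$ in hand, the paper simply invokes the older Fannes form $|H(\rho)-H(\sigma)|\leq\kappa\log(d'/\kappa)$, valid for $\kappa\leq 1/e$, on each of the two entropy differences; the hypothesis $\epsilon\leq 1/[e(d'-1)]$ is precisely $\kappa\leq 1/e$, and $2\kappa\log(d'/\kappa)$ is already $\zeta_\epsilon$ on the nose.
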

\bigskip

The proof is given in Appendix \ref{sctAppThm2}. The basic idea of Theorem~\ref{thm2} is that it is essentially just Theorem~\ref{thm1}, but applied to a slightly perturbed function $f_{\epsilon}$. Alternatively, one can think of Theorem~\ref{thm2} as being Theorem~\ref{thm1} applied to a perturbed state, $\GC_{\epsilon}(\rho)$ instead of $\GC(\rho)$, where the perturbation maps states on the boundary to the interior. 

Note that Theorem~\ref{thm2} generalizes Theorem~\ref{thm1}, since we have that
\begin{align}
\lim_{\epsilon \to 0+}\beta_{\epsilon}(\rho)-\zeta_\epsilon = \beta (\rho) \,,
\end{align}
assuming that $\nabla f(\rho)$ exists. However, Theorem~\ref{thm2} is more robust than Theorem~\ref{thm1}, because it holds for any $\rho \in \mathbf{S}$, even if $\nabla f(\rho)$ does not exist.

\subsection{Finite precision computation}\label{sctmainresult4}

Finding a lower bound with our method requires a computer program for any nontrivial QKD protocol. The finite precision inherent to computational methods introduces errors that threaten the reliability of the calculated lower bounds.

With this threat in mind, we identify all possible sources of errors in evaluating Theorem~\ref{thm1}. First, the variables will not be precise: $\rho$ will not exactly satisfy the constraints, and $\{\Gamma_i\}$ and $\{\gamma_i\}$ will not be exact. Second, function evaluations will not be precise: every function evaluation introduces errors.

The aforementioned errors fall into broader categories. Variable imprecision is implementation independent since we can characterize the imprecision in a universal manner. Function-evaluation imprecision is implementation dependent since in general, its characterization varies widely with the particular algorithm (particularly for evaluating nontrivial functions such as the matrix logarithm appearing in our objective function). Since the latter kind of errors depend on the implementation, a universal treatment of them is not possible. In principle, it is possible to bound the effect of such errors for a particular implementation \cite{Al-Mohy2012}, although it is beyond the scope of this article. On the other hand, we give a full treatment of implementation-independent errors in what follows.

From a computational perspective, it is virtually impossible to find an element strictly in $\mathbf{S}$. Furthermore, for many problems, the computer representation $\{\tilde{\Gamma}_i\}$ and $\{\tilde{\gamma}_i\}$ do not equal $\{\Gamma_i\}$ and $\{\gamma_i\}$ by the nature of their numerical construction. This lack of strictly constraint-satisfying density matrices motivates the need for a relaxed theorem.

We show (See Appendix~\ref{impreciserepresentations}) that both strict set membership and imprecise constraints can be described by the inequalities
\begin{align}
\label{eqnrelaxedconstraints1}
\abs{\Tr(\tilde{\Gamma}_i \rho)-\tilde{\gamma}_i} \leq \epsilon' \,, \forall i\,,
\end{align}
where $\epsilon' > 0$ is an appropriately chosen number. Determining $\epsilon'$ depends heavily on how $\{\tilde{\Gamma}_i\}$ and $\{\tilde{\gamma}_i\}$ are constructed, so we leave out a precise analysis. In general, given a suitable high-precision computing environment, $\epsilon'$ may be made arbitrarily small. (For example, for our calculations in Sec.~\ref{sctexamples}, we choose $\epsilon' < 10^{-12}$.) The bound on the unknown constraint violations motivates the introduction of a relaxed set
\begin{align}
\label{eqnrelaxedconstraints2}
\St_{\epsilon'} := \left\{\rho\in \mathbf{H_{+}} \middle | \abs{\Tr(\tilde{\Gamma}_i \rho)-\tilde{\gamma}_i} \leq \epsilon', \forall i \right\}\,.
\end{align}
So long as $\epsilon'$ is larger than the constraint violations, the following relation holds (See Appendix~\ref{impreciserepresentations}):
\begin{align}
\label{eqnrelaxedconstraints212342}
\mathbf{S} \subseteq \St_{\epsilon'}\,.
\end{align}
With this new set, we now present a relaxed version of Theorem~\ref{thm2}, with the proof given in Appendix~\ref{app:proofRelaxed}.

\bigskip
\begin{theorem}
\label{thm3}
Let $\rho\in \St_{\epsilon'}$, where $\rho $ is $d\times d$, $\epsilon' > 0$, and $0 < \epsilon \leq 1/[e(d'-1)]$. Then
\begin{align}
\label{eqnthm3mainresult}
\alpha \geq \beta_{\epsilon\epsilon'}(\rho)-\zeta_\epsilon
\end{align}
where $\zeta_\epsilon$ was defined in \eqref{eqnDeltaEpsilonDef} and
\begin{align}
\label{eqnthm3betaeeprime}
\beta_{\epsilon\epsilon'}(\sigma) := &f_{\epsilon}(\sigma)-\Tr(\sigma^T\nabla f_{\epsilon}(\sigma)) \notag\\
&+   \max_{ (\vec{y},\vec{z}) \in \St_{\epsilon}^*(\rho)}  \left( \vec{\tilde{\gamma}} \cdot \vec{y}  - \epsilon' \sum_{i=1}^n z_i \right) \,.
\end{align}
Here, the set $\St_{\epsilon}^*(\sigma)$ is defined by
\begin{align}
&\St_{\epsilon}^*(\sigma) := \notag\\
&\left\{ (\vec{y},\vec{z}) \in (\mathbb{R}^{n}, \mathbb{R}^{n}) \mid -\vec{z} \leq \vec{y} \leq \vec{z} ,\hspace{4pt} \sum_{i=1}^n  y_i   \tilde{\Gamma}_i^T  \leq \nabla f_{\epsilon}(\sigma) \right\} \,.
\end{align}
\end{theorem}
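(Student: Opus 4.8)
The plan is to reduce Theorem~\ref{thm3} to Theorem~\ref{thm2} by a careful re-derivation of the dual step that accounts for the relaxed constraints \eqref{eqnrelaxedconstraints2} instead of the exact ones \eqref{eqnconstraintSet}. Recall that the proof of Theorem~\ref{thm1} (and hence Theorem~\ref{thm2}) linearizes $f_{\epsilon}$ about $\rho$ and takes the Lagrangian dual of the resulting linear program. For Theorem~\ref{thm3} the only change is that the feasible set is $\St_{\epsilon'}$ rather than $\mathbf{S}$, so I would run the same argument but with the primal linear program now being $\min_{\sigma \in \St_{\epsilon'}} \bigl[ f_{\epsilon}(\rho) + \Tr((\sigma-\rho)^T \nabla f_{\epsilon}(\rho)) \bigr]$, which lower bounds $\alpha_{\epsilon} := \min_{\sigma \in \St_{\epsilon'}} f_{\epsilon}(\sigma)$ by convexity of $f_\epsilon$ (the tangent plane undercuts the graph), and $\alpha_{\epsilon}$ in turn lower bounds the perturbed objective over $\mathbf{S}$ since $\mathbf{S}\subseteq\St_{\epsilon'}$ by \eqref{eqnrelaxedconstraints212342}.

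The core step is computing the dual of this linearized primal. Each relaxed constraint $\abs{\Tr(\tilde\Gamma_i \sigma) - \tilde\gamma_i}\le\epsilon'$ is a pair of linear inequalities, $\Tr(\tilde\Gamma_i\sigma)-\tilde\gamma_i \le \epsilon'$ and $-(\Tr(\tilde\Gamma_i\sigma)-\tilde\gamma_i) \le \epsilon'$, together with $\sigma\ge 0$. Introducing nonnegative multipliers for each of these, I would collect the two multipliers for constraint $i$ into the combination $y_i$ (their difference, the ``net'' multiplier, which is free in sign) and $z_i$ (their sum, which satisfies $z_i\ge\abs{y_i}$, i.e. $-\vec z\le\vec y\le\vec z$). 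The semidefinite multiplier for $\sigma\ge0$ yields, after minimizing the Lagrangian over $\sigma\ge0$, the constraint $\sum_i y_i\tilde\Gamma_i^T \le \nabla f_\epsilon(\rho)$ exactly as in $\St_\epsilon^*(\sigma)$; the $\epsilon'$-terms from the two inequality bounds combine into the penalty $-\epsilon'\sum_i z_i$ in the dual objective, and the $\tilde\gamma_i$ data appears as $\vec{\tilde\gamma}\cdot\vec y$. This reproduces exactly \eqref{eqnthm3betaeeprime}, so weak duality gives $\alpha_\epsilon \ge \beta_{\epsilon\epsilon'}(\rho)$. Finally I would invoke the same perturbation estimate used in Theorem~\ref{thm2} — namely that $f_\epsilon$ and $f$ differ by at most $\zeta_\epsilon$ on the relevant states when $0<\epsilon\le 1/[e(d'-1)]$ — to pass from $\min_{\sigma\in\mathbf S} f_\epsilon(\sigma) \ge \alpha_\epsilon \ge \beta_{\epsilon\epsilon'}(\rho)$ to $\alpha = \min_{\sigma\in\mathbf S} f(\sigma) \ge \beta_{\epsilon\epsilon'}(\rho) - \zeta_\epsilon$, which is \eqref{eqnthm3mainresult}.

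The main obstacle I anticipate is purely bookkeeping rather than conceptual: getting the signs and the identification of $y_i$ and $z_i$ right when translating the four families of multipliers (two per relaxed constraint, one PSD multiplier, and dealing with the fact that the $\tilde\Gamma_i$ need not include a normalization constraint the way the exact $\Gamma_i$ did) into the clean form $(\vec y,\vec z)\in\St_\epsilon^*(\sigma)$. One subtlety worth checking carefully is that strong duality is not actually needed here — only weak duality, which always holds — so I do not need a Slater-type interior-point condition for $\St_{\epsilon'}$; this is what makes the bound reliable, since any dual-feasible $(\vec y,\vec z)$ gives a valid lower bound even without solving the dual to optimality. A second point to verify is that the perturbation bound $\zeta_\epsilon$ from Theorem~\ref{thm2} applies verbatim: it was derived for states in $\mathbf S$, but since we only ever evaluate the $\zeta_\epsilon$ correction on the chain $\min_{\sigma\in\mathbf S} f \ge \min_{\sigma\in\mathbf S} f_\epsilon - \zeta_\epsilon$, the relaxed set $\St_{\epsilon'}$ enters only through the already-established inclusion \eqref{eqnrelaxedconstraints212342} and no new perturbation analysis is required.
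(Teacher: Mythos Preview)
Your proposal is correct and follows essentially the same route as the paper: linearize $f_\epsilon$ over the relaxed set $\St_{\epsilon'}$, dualize the resulting linear SDP with the two-sided inequality constraints, and then invoke the continuity bound $\zeta_\epsilon$ together with the inclusion $\mathbf{S}\subseteq\St_{\epsilon'}$. The only cosmetic difference is that the paper first introduces slack variables to recast the inequality constraints as equalities in SDP standard form and then performs the change of variables $\vec{y}=\vec{x}_1-\vec{x}_2$, $\vec{z}=-\vec{x}_1-\vec{x}_2$, whereas you go directly through the Lagrangian with a pair of nonnegative multipliers per constraint and identify $(y_i,z_i)$ as their difference and sum; these are equivalent bookkeeping choices leading to the same dual.
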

\bigskip

Notice that as $\epsilon'$ goes to zero the last term in \eqref{eqnthm3betaeeprime} vanishes and hence 
\begin{align}
\lim_{\epsilon'\to 0+} \beta_{\epsilon\epsilon'}(\sigma) = \beta_{\epsilon}(\sigma),
\end{align}
which implies that Theorem~\ref{thm3} generalizes Theorem~\ref{thm2}.

The idea is that Theorem~\ref{thm3} provides a method that is robust to constraint violation due to numerical imprecision. Hence, Theorem~\ref{thm3} is directly useful for numerical key rate calculations.  Although it is more complicated than Theorems~\ref{thm1} and \ref{thm2}, Theorem~\ref{thm3} is what we employ in practice for our key rate calculations. For example, the calculations presented in Sec.~\ref{sctexamples} use this theorem.

\subsection{Convergence and tightness}\label{sctTightness}

Theorem~\ref{thm3} is directly useful in key rate calculations, and as such we discuss some of its convergence properties.

In practice, Step 1 does not return a density matrix $\rho^*$ that minimizes $f$ over $\mathbf{S}$. Instead it finds a matrix $\rho_{\epsilon'} \in \St_{\epsilon'}$ that approximately minimizes $f$ over $\St_{\epsilon'}$. Hence, we answer the natural question of how close the lower bound produced from $\rho_{\epsilon'}$ will be to $\alpha = f(\rho^*)$.

Note that by introducing the perturbed function $f_{\epsilon}$, it follows from Lemma~\ref{lemma_gradient_exists} that the lower bound produced by Theorem~\ref{thm3} will be continous with respect to its argument. Thus, if $\rho_{\epsilon\epsilon'}^*$ minimizes $f_{\epsilon}$ over $\St_{\epsilon'}$ and $\rho$ is close to $\rho_{\epsilon\epsilon'}^*$ under some norm, it follows that the lower bounds produced by applying Theorem~\ref{thm3} to $\rho$ and $\rho_{\epsilon\epsilon'}^*$ will be close. Concretely, we have
\begin{align} \label{eqnlimitconvergencerelaxed}
\lim_{\rho \to \rho_{\epsilon \epsilon'}^*} \beta_{\epsilon\epsilon'}(\rho ) - \zeta_\epsilon = \beta_{\epsilon\epsilon'}(\rho_{\epsilon\epsilon'}^*) - \zeta_\epsilon \,,
\end{align}
showing that our lower bounding method converges (i.e. the optimal lower bound of $\rho_{\epsilon\epsilon'}^*$ is approachable). 

In Appendix~\ref{sctAppTightness} we show that the minimizer $\rho^*$ of $f$ over $\mathbf{S}$ and the minimizer $\rho_{\epsilon\epsilon'}^*$ of $f_{\epsilon}$ over $\St_{\epsilon'}$ satisfy
\begin{align} \label{eqnlimittightnessrelaxed}
\lim_{\epsilon,\epsilon' \to 0} \beta_{\epsilon\epsilon'}( \rho_{\epsilon\epsilon'}^* )-\zeta_\epsilon = f(\rho^*) \,.
\end{align}
Therefore, the lower bound produced by Theorem~\ref{thm3} is tight when applied to $ \rho_{\epsilon\epsilon'}^* $.

Combining \eqref{eqnlimitconvergencerelaxed} and \eqref{eqnlimittightnessrelaxed}, it follows that the lower bound produced by applying Theorem~\ref{thm3} to $\rho$ will be arbitrarily close to $f(\rho^*)$ provided that $\epsilon,\epsilon'$ are small and $\rho$ is close to $\rho_{\epsilon\epsilon'}^*$. So provided a suitable computer implementation, our method will produce lower bounds on the true key rate that are arbitrarily tight.

\subsection{Finding a near-optimal attack}\label{sctalgorithm}

Thus far we focused on Step~2 of the key rate calculation procedure. However, Step~1 needs to be addressed since obtaining a reasonable lower bound from Step~2 requires a $\rho \in \mathbf{S}$ that is sufficiently close to the optimal solution $\rho^*$. Here we present an algorithm that has proved effective in practice.

Note that in what follows we do not distinguish between exact and inexact representations of $\{\Gamma_i\}$ or $\{\gamma_i\}$, nor do we distinguish between exact and approximate set membership. The validity of this approach is justified by observing that Step~1 is decoupled from Step~2. Specifically, we can apply Theorem~\ref{thm3} to any positive semidefinite matrix that is approximately consistent with the constraints, and still obtain a reliable lower bound.

By applying the Gram-Schmidt process to the original set of observables $\{\Gamma_i\}$, we obtain a set $\{\bar{\Gamma}_i\}$ of Hermitian operators that are orthogonal under the Hilbert-Schmidt norm. The expectation value of each of these operators is denoted
\begin{align}
\bar{\gamma}_i:= \ave{\bar{\Gamma}_i}  \,,
\end{align}
for states $\rho\in \mathbf{S}$. We extend this set to an orthonormal basis $\{\bar{\Gamma}_i\} \cup \{\Omega_j\}$ for the complete Hermitian operator space with suitable operators $\{\Omega_j\}$. With this basis, we may rewrite \eqref{eqnconstraintSet} as
\begin{align}
\label{eqnSubspaceS}
\mathbf{S} = \left \{ \sum_i \bar{\gamma}_i \bar{\Gamma}_i + \sum_j \omega_j \Omega_j \in \mathbf{H_{+}} \mid \vec{\omega} \in \mathbb{R}^m \right \} \,,
\end{align}
where $m$ is the number of free parameters. This perspective on $\mathbf{S}$ divides the operator space into ``fixed'' and ``free'' subspaces. From a practical optimization point-of-view, the benefit of this representation is that it reduces the original equality constrained problem (which is cumbersome to work with) to a constrained minimization subject to a single semidefinite constraint.

We now adapt the Frank-Wolfe method \cite{Frank1956} to problem \eqref{eqnPAEC4}. As discussed below, the minimization in Algorithm~\ref{alalgorithm1} is a semidefinite program and hence is easier to perform than the original minimization problem in \eqref{eqnPAEC4}.
\begin{algorithm}[H]
\caption{Minimization algorithm for Step 1}\label{alalgorithm1}
\begin{algorithmic}[1]
\State Let $\epsilon > 0$, $\rho_0 \in \mathbf{S}$ and set $i = 0$.
\State Compute $\Delta\rho := \arg \min_{\Delta\rho}\Tr\left[(\Delta\rho)^T  \nabla f(\rho_i)\right]$ subject to $\Delta\rho+\rho_i \in \mathbf{S}$, where ``$\arg \min$'' denotes the argument that achieves the minimization.
\State If $\Tr\left[(\Delta\rho)^T  \nabla f(\rho_i)\right] < \epsilon$ then STOP.
\State Find $\lambda \in (0,1)$ that minimizes $f(\rho_i + \lambda \Delta\rho)$.
\State Set $\rho_{i+1} = \rho_i + \lambda \Delta\rho$, $i \gets i + 1$ and go to 2.
\end{algorithmic}
\end{algorithm}
There are two concrete reasons why adopting the subspace perspective, as in \eqref{eqnSubspaceS}, is useful for solving Algorithm \ref{alalgorithm1}. First, finding a $\rho_0 \in \mathbf{S}$ becomes simple. We only need to find $\vec{\omega}$ so that
\begin{align}
\rho_0 = \sum_i \bar{\gamma}_i \bar{\Gamma}_i + \sum_j \omega_j \Omega_j \in \mathbf{H_{+}}\,.
\end{align}
Second, $\Delta\rho$ has the form
\begin{align}
\Delta\rho = \sum_j \omega_j \Omega_j\,.
\end{align}
Calculating $\vec{\omega} = \{\omega_1, ..., \omega_m \}$ requires solving the standard linear semidefinite program (SDP)
\begin{align}
&\vec{\omega} = \arg \min_{\vec{\omega}} \sum_j \omega_j \Tr\left[\Omega_j^T \nabla f(\rho_i)  \right] \\
&\text{subject to} \sum_j \left( \omega_j \Omega_j \right) + \rho_i \in \mathbf{H_{+}}\,.
\end{align}
Problems of this form have been extensively studied (e.g., see \cite{Boyd2010}) and efficient SDP solvers are widely available.

\section{General framework for protocols}\label{sctqkdframework}

\subsubsection{Prototypical protocol}

Having stated our main result in abstract form, we now connect our result to concrete QKD protocols. Our goal is to present a framework that applies to the known discrete-variable (DV) QKD protocols in the literature. This includes both entanglement-based (EB) and prepare-and-measure (PM) protocols. Examples of protocols that fall under our framework include the BB84 \cite{Bennett1984}, B92 \cite{Bennett1992}, SARG \cite{Scarani2004,Scarani2009}, six-state \cite{Bruss2002}, and decoy-state protocols \cite{Lo2005}. Rather than show how our approach applies to each of these examples, we instead construct a generic protocol that encompasses these examples. We call this the ``prototypical protocol'', shown in Fig.~\ref{fgrPrototypical}. (In Fig.~\ref{fgrPrototypical}, the distinction between the EB and PM scenarios is depicted by a box around the source with a dashed outline, indicating that the source may or may not be located inside Alice's lab.) We will now show how to apply our approach to this prototypical protocol.

We remark that our framework can be easily extended to cover protocols with a central node between Alice and Bob, such as the MDI (measurement-device independent) \cite{Lo2012} and the simplified trusted node \cite{Stacey2015} protocols. We direct the reader to Ref.~\cite{Coles2016} for a discussion of this extension.

Let us now describe the basic steps involved in our prototypical protocol:
\begin{enumerate}
  \item In the EB scenario, Alice and Bob each receive a quantum signal ($A$ and $B$, respectively) from a source and they measure the signal according to the respective POVMs $P^A= \{P^A_j\}$ and $P^B=\{P^B_j\}$, producing the raw data. In the PM scenario, the same mathematical description applies (via the so-called source-replacement scheme \cite{Bennett1992a,Ferenczi2012}) since one can think of Alice's prepared states as resulting from Alice performing a measurement $P^A$ on a register system $A$. 
  \item Alice and Bob make a public announcement, announcing some aspect of their measurement outcomes.
  \item Alice and Bob perform post-selection based on these announcements.
  \item Alice implements a key map. The key map is a function that maps Alice's raw data and the announcements to a key symbol, chosen from $\{0,1,...,  N-1\}$ where $N$ is the number of key symbols.
  \item Alice performs one-way error correction, leaking some information to Bob, and Bob forms his key.
  \item Alice performs privacy amplification, typically by applying a random universal hash function and then communicating the choice of hash function to Bob.
\end{enumerate}

\begin{figure}
\begin{center}
\includegraphics[width=3.3in]{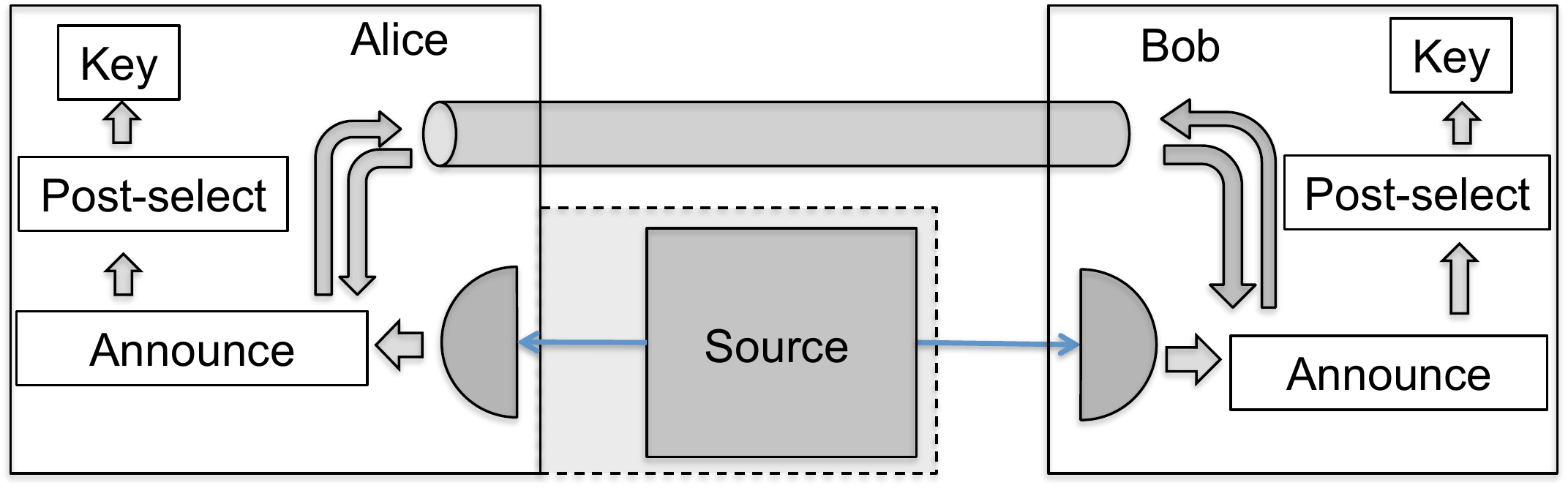}

\caption{A prototypical QKD protocol that we use to illustrate our framework. The source sends systems $A$ and $B$ to Alice and Bob, respectively. The dashed line around the source indicates that it may or may not be inside of Alice's laboratory, respectively treating the PM and EB scenarios. Alice and Bob respectively measure POVMs $P^A$ and $P^B$. Then they publicly announce some information related to their measurement outcomes. These announcements are used to perform post-selection, where some announcement outcomes are discarded. Alice then implements a key map that maps her information (raw data + announcements) to a key variable. Finally, Alice leaks some information (about the result of the key map) to Bob for error correction purposes, and then Bob forms his key.}
\label{fgrPrototypical}
\end{center}
\end{figure}

\subsubsection{Mathematical model for prototypical protocol}

The quantum state shared by Alice and Bob (prior to their measurements) can be written as $\rho_{AB}$. In the worst-case scenario, Eve possesses a purification of $\rho_{AB}$ denoted as system $E$, and we assume this worst-case scenario to be as pessimistic as possible. In the following discussion of the protocol, we will note that Eve obtains access to additional systems due to public announcements made by Alice and Bob. (In total, by the end of the protocol, Eve will have access to $E\At \Bt$ where $\At$ and $\Bt$ are respectively the registers that store Alice's and Bob's public announcements.)

Consider the experimental constraints on the state $\rho_{AB}$. These constraints have the form:
\begin{align} \label{eqnconstraints}
& \Tr((P^A_j \ot P^B_k ) \rho_{AB}) = p_{jk}\,.
\end{align}
For prepare-and-measure (PM) protocols, we add additional constraints, as follows. We employ the source-replacement scheme \cite{Bennett1992a,Ferenczi2012}, which treats system $A$ as a register that stores the information about which state Alice prepared. This corresponds to Alice preparing the bipartite state
\begin{align} \label{eqnSourceReplace94}
\ket{\psi}_{AA'} = \sum_i \sqrt{p_i}\ket{i}_A \ket{\phi_i}_{A'}
\end{align}
where $\{\ket{\phi_i}\}$ are the signal states and $\{p_i\}$ are their associated probabilities. Eve's attack maps system $A'$ to system $B$, producing the state $\rho_{AB}$. On the other hand, system $A$ is inaccessible to Eve, and hence 
\begin{align} \label{eqnSourceReplace95}
\rho_A = \Tr_{B}(\rho_{AB})  = \Tr_{A'}(\dya{\psi}_{AA'})
\end{align}
is fixed, independent of Eve's attack. To fix $\rho_A$ we add in constraints of the form
\begin{align} \label{eqnconstraints22}
\Tr( ( \Theta_j \ot \id_{B}) \rho_{AB} ) = \theta_j
\end{align}
where $\{\Theta_j\}$ is a finite set of tomographically complete observables on $A$. Hence, \eqref{eqnconstraints} and \eqref{eqnconstraints22} together represent the constraints that Alice and Bob have on their state.

Step 2 of the above protocol involves Alice and Bob each making an announcement based on their measurement results. In this case, it is helpful to group together the POVM elements into sets associated with particular announcements. Let us write Alice's POVM as $P^A = \{P^A_j\} =\{P^A_{(a,\alpha_a)}\}$ and Bob's POVM as $P^B = \{P^B_k\} =\{P^B_{(b,\beta_b)}\}$. Here the first index denotes the announcement, and the second index denotes a particular element associated with that announcement. Bob's announcement is given by a quantum channel with Kraus operators 
\begin{align} \label{eqnkrausbob}
K^B_b &= \sum_{\beta_b} \sqrt{  P^B_{(b,\beta_b)} } \ot \ket{b}_{\Bt} \ot \ket{\beta_b}_{\Bb} \,.
\end{align}
We remark that $K^B_b$ expands the Hilbert space by introducing the registers $\Bt$ and $\Bb$, which is why the operators on these subsystems appear as kets in \eqref{eqnkrausbob}. Similarly, Alice's announcement is given by a quantum channel with Kraus operators
\begin{align} \label{eqnkrausalice}
K^A_a &=   \sum_{\alpha_a} \sqrt{ P^A_{(a, \alpha_a)} } \ot \ket{a}_{\At} \ot \ket{\alpha_a}_{\Ab}\,.
\end{align}
Here, $\At$ and $\Bt$ are registers that store Alice's and Bob's announcements, respectively. Also, $\Ab$ and $\Bb$ are registers that store Alice's and Bob's measurement outcomes, for a given announcement. So, the state after making these announcements becomes
\begin{align} \label{eqnstateafterannounce}
  \rho^{(2)}_{A\At \Ab B \Bt \Bb }&=\AC(\rho_{AB})\\
 &= \sum_{a,b}(K^A_a\ot K^B_b)\rho_{AB}(K^A_a\ot K^B_b)\ad\,,
\end{align}
where $\AC$ is a CPTP map. We remark that the form of \eqref{eqnstateafterannounce} is such that $\At$ and $\Bt$ are classical registers, meaning that the purifying system has a copy of the registers. This is the way one models a public announcement.

Step 3 is post-selection. Here, Alice and Bob select some announcements to keep and some to discard. Let $\mathbf{A}$ be the set of all announcements that are kept. Then define the projector
\begin{align} \label{eqnstateafterannounce22}
 \Pi =  \sum_{(a,b)\in \mathbf{A}} \dya{a}_{\At} \ot \dya{b}_{\Bt}   \,,
\end{align}
with identity acting on the other systems. The post-selection is modeled by projecting with this projector to obtain the state
\begin{align} \label{eqnstateafterannounce33}
 \rho^{(3)}_{A\At \Ab B \Bt \Bb } = \frac{\Pi\rho^{(2)}_{A\At \Ab B \Bt \Bb   } \Pi }{p_{\pass}}  \,.
\end{align}
where $p_{\pass} = \Tr(\rhot_{A\At \Ab B \Bt \Bb } \Pi)$.

In step 4, Alice chooses a key map. A key map is a function $g$ whose arguments include the outcome of Alice's measurements $(a,\alpha_a)$ and Bob's announcement $b$. The function outputs a value in $\{0,1,..., N-1\}$ where $N$ is the number of key symbols. Hence, we write the key map as the function $g(a,\alpha_a, b)$. We define an isometry $V$ that stores the key information in a register system $R$, as follows
\begin{align} \label{eqnstateafterannounce44}
V = \sum_{a,\alpha_a, b} &\ket{g(a,\alpha_a, b)}_R \ot \dya{a}_{\At} \ot \dya{\alpha_a}_{\Ab} \ot \dya{b}_{\Bt}\,.
\end{align}
We first act with this isometry on the state $\rho^{(3)}$ in \eqref{eqnstateafterannounce33}, 
\begin{align}
 \rho^{(4)}_{R A\At \Ab B \Bt \Bb } = V \rho^{(3)}_{A\At \Ab B \Bt \Bb } V\ad   \,,
\end{align}
which stores the key information in the standard basis $\{\ket{j}_R\}$ on $R$. Then we decohere $R$ in this basis, which turns $R$ into a classical register denoted $Z^R$, giving the final state
\begin{align} \label{eqnstateafterannounce66}
 \rho^{(5)}_{Z^R A\At \Ab B \Bt \Bb } = \ZC\left(\rho^{(4)}_{R A\At \Ab B \Bt \Bb }  \right)  \,,
\end{align}
where $\ZC$ is a pinching quantum channel, whose action is given by $\ZC(\sigma) = \sum_j (\dya{j}_R \ot \id) \sigma (\dya{j}_R \ot \id)$.

\subsubsection{Key rate}

Finally, Alice performs error correction, which gives $\leak$ number of bits about the key map results to Eve, followed by privacy amplification.  This gives the following formula for the key rate \cite{Devetak2005}:
\begin{align}
\label{eqnkeyrate1}
K &= p_{\pass}\left[ H(Z^R | E \At \Bt   )_{\rho^{(5)}} - \leak \right] \,.
\end{align} 
Here, $H(A|B)_{\rho} = H(\rho_{AB}) - H(\rho_B)$ denotes the conditional von Neumann entropy with $H(\sigma) = -\Tr(\sigma \log \sigma)$. Note that the first term in \eqref{eqnkeyrate1} refers to the state $\rho^{(5)}_{Z^R A\At \Ab B \Bt \Bb }$ defined in \eqref{eqnstateafterannounce66}. Also, as noted above, $E$ is a purifying system of $\rho_{AB}$.

More precisely, the expression in \eqref{eqnkeyrate1} must be minimized over all density operators $\rho_{AB}$ that satisfy the constraints in \eqref{eqnconstraints} and \eqref{eqnconstraints22}. Hence we write:
\begin{align}
\label{eqnkeyrate33}
K &=  \min_{\rho_{AB}\in \mathbf{S} }\left( p_{\pass} H(Z^R | E \At \Bt   )_{\rho^{(5)}}\right) - p_{\pass} \leak \,,
\end{align} 
where $\mathbf{S}$ has the general form in \eqref{eqnconstraintSet}, with the constraints given by \eqref{eqnconstraints} and \eqref{eqnconstraints22}. 

Having now expressed the key rate in explicit form, we can now relate \eqref{eqnkeyrate33} back to the discussion in Sec.~\ref{sctbackground}. This involves simplifying the notation. Namely, for the constraints in \eqref{eqnconstraints} and \eqref{eqnconstraints22}, we rewrite them as $\Tr(\Gamma_i \rho) = \gamma_i$, as in \eqref{eqnconstraints1234}. Note that we drop the subsystem labels on the state $\rho = \rho_{AB}$. Finally, we write the optimization problem in \eqref{eqnkeyrate33} as
\begin{align}
\alpha &= \min_{\rho \in \mathbf{S}}f(\rho)\label{eqnprimal16342}\,,
\end{align}
where
\begin{align}
\label{eqnfrho1}
f(\rho) &= p_{\pass} \cdot H(Z^R | E \At \Bt   )_{\rho^{(5)}} \\
\label{eqnfrho2}
&= p_{\pass} \cdot D\left(\rho^{(4)}_{R A\At \Ab B \Bt \Bb} || \rho^{(5)}_{Z^R A\At \Ab B \Bt \Bb} \right)\\
\label{eqnfrho232}
&= D\left(\GC(\rho_{AB}) || \ZC(\GC(\rho_{AB} )) \right)\,.
\end{align}
Note that \eqref{eqnfrho2} removes the dependence on Eve's system $E$ and is derived using Theorem 1 from \cite{Coles2012}. Equation \eqref{eqnfrho232} is derived from the previous line using the property $D( c \sigma || c \tau) = c D( \sigma || \tau)$ for any constant $c > 0$.  Furthermore we define $\GC$ such that its action on an operator $\sigma$ is given by
\begin{align}
\label{eqnfrho235431}
\GC(\sigma) =   V \hspace{2pt} \Pi  \hspace{2pt}\AC(\sigma)  \hspace{2pt} \Pi  \hspace{2pt}V\ad\,.
\end{align}

Note that \eqref{eqnfrho232} has the same form as Eq.~\eqref{eqnPAEC4b}. In summary, to apply our numerical approach to a given protocol, one formulates $\GC$ via \eqref{eqnfrho235431} and the constraints via \eqref{eqnconstraints} and \eqref{eqnconstraints22}. With these objects defined, one then applies our numerical method outlined in Sec.~\ref{sctmainresult}.

\section{Examples}\label{sctexamples}

In this section we consider three practically important scenarios that show the power of our approach. In particular, we consider (1) the BB84 protocol with detector efficiency mismatch, (2) the Trojan-horse attack on the BB84 protocol, and (3) the BB84 protocol with phase-coherent signal states. Each of these three scenarios involves a BB84-style protocol but with some ``imperfection'' accounted for (detector inefficiency, existence of a side channel, and lack of phase randomization). For each scenario, our approach yields significantly higher key rates than those previously obtained in the literature. We remark that the robustness of our numerical approach could allow us to investigate all three imperfections in a single protocol, although for simplicity we consider them separately.

For illustration purposes, for the following examples we assume that error correction is performed at the Shannon limit. This means that the error correction term in \eqref{eqnkeyrate33} can be written as a conditional entropy,
\begin{align}
\leak &= H(Z^R | Z^{\Bb} \At \Bt  )_{\rho^{(5)}}\,,
\end{align}
with the state $\rho^{(5)}$ defined in \eqref{eqnstateafterannounce66}. Here, $Z^{\Bb}$ can be viewed as the classical register that one would obtain from measuring in the standard basis on $\Bb$. 

Let us briefly remark about the gaps between our upper and lower bounds on the key rate, where the upper and lower bounds are respectively obtained from our "Step 1" and "Step 2" from Sec.~\ref{sctmainresult}. In all of our figures below, we only plot our lower bounds, since they are guaranteed to be reliable. But it is worth noting that, for the examples considered below, the gaps between our upper and lower bounds are extremely small. Namely we observed the gap to be $\sim 10^{-6}$ times smaller than the key rate value. Hence for these examples our upper and lower bounds essentially coincide. In these cases, one can think of our Step 2 as simply verifying that Step 1 worked well, i.e., that Step 1 found the optimal attack. We remark that the examples considered below are low dimensional, and it is possible that the gap between the upper and lower bounds could grow with the number of signal states, which could cause Step 1 to be less reliable.

For each example, further details about the constraints used in our calculations can be found in Appendix~\ref{sctAppExamples}.

\subsection{Efficiency mismatch}\label{sctEffMis}

Consider a polarization-encoded BB84 protocol, where Bob actively choses his detection basis setting. In this case, Bob's measurement involves two detectors, $D_1$ and $D_2$, that are associated with the two polarization states for a given basis. 

In practice, it is likely that $D_1$ and $D_2$ have different efficiencies, commonly referred to as efficiency mismatch. This mismatch can be further enhanced (by Eve) by manipulating the spatial mode of the incoming light~\cite{Sajeed2015}. When efficiency mismatch is large enough, successful hacking strategies on QKD systems have been demonstrated~\cite{Lydersen2010a}. Furthermore, even with a small amount of efficiency mismatch, the security analysis of QKD becomes difficult to perform. 

\begin{figure}[tbp]
\begin{center}
\vspace{0.2cm}
\hspace{-0.01cm}
\begin{overpic}[width=7.9cm]{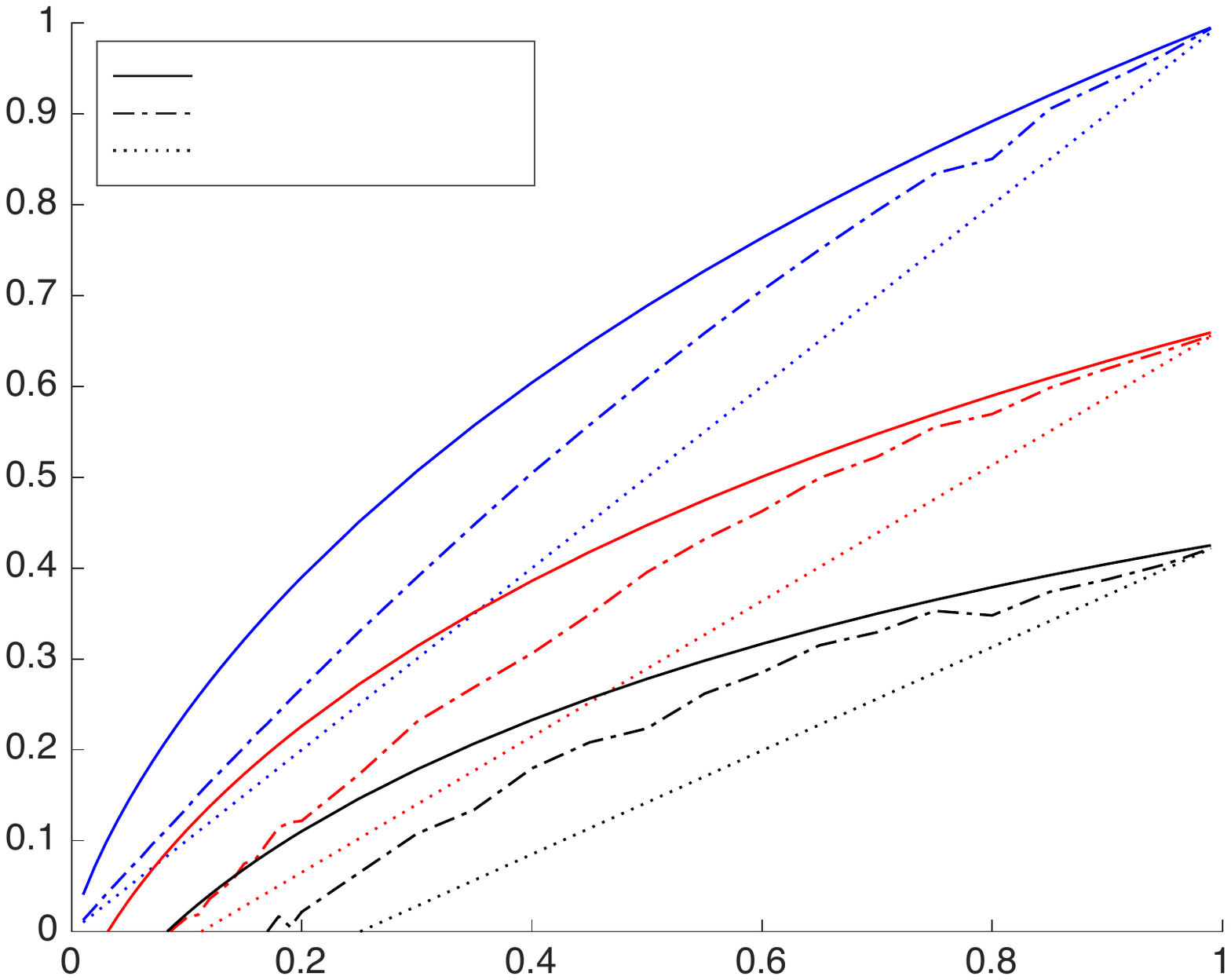}

\put(18,73){\scriptsize Our method}
\put(18,69.9){\scriptsize Ref.\cite{Coles2016} method}
\put(18,66.8){\scriptsize Ref.\cite{Fung2009} method}

\put(41,54){\color{blue}\footnotesize $p = 0$}
\put(68,49){\color{red}\footnotesize $p = 0.05$}
\put(76,24){\color{black}\footnotesize $p = 0.1$}

\put(52,-2){\makebox(0,0){\footnotesize Detector efficiency, $\eta$}}
\put(-2,40){\makebox(0,0){\footnotesize \rotatebox{90}{Key rate}}}
\end{overpic}
\vspace{0.2cm}
\caption{Key rate for the BB84 protocol with detector efficiency mismatch. Curves are shown for three values of depolarizing probability $p$ (0, 0.05, 0.1). The $x$-axis is the efficiency of the least efficient detector, with the other detector's efficiency being set to one.}
\label{fgrEffMismatch}
\end{center}
\end{figure}

This motivates the application of our numerical approach the case of detector efficiency mismatch. For simplicity, we consider single-photon signal states, and we assume that no multiple photons arrive at Bob's measurement apparatus. (Our numerical approach can handle multi-photon signals and multi-photon detection events; however, we leave a detailed discussion of the general case for future work.) The single-photon case was previously treated analytically by Fung et al.~\cite{Fung2009}, and hence it provides an opportunity to compare our numerics to the literature.

To further simplify the analysis, we assume one of Bob's detectors is perfect while the other detector has an efficiency $\eta$. This allows us to plot the key rate as a function of $\eta$, as shown in Fig.~\ref{fgrEffMismatch}. The curves are shown for various depolarizing noise levels $p$, where 
\begin{align}
\rho \rightarrow (1-p) \rho + p \id/d
\end{align}
is the depolarizing channel. (We emphasize that our security analysis does not assume that the depolarizing channel is the actual attack; we simply use this channel to generate artificial data for Alice and Bob.) The plot shows that our new numerical method outperforms our previous numerical method based on the dual problem \cite{Coles2016}, which in turn outperforms the analytical method from Ref.~\cite{Fung2009}.

\subsection{Trojan-horse attack}\label{sctTHA}

Consider the phase-encoded BB84 protocol. Here, Alice's light source produces a pulse that passes through an interferometer, one arm of which applies a variable phase $\theta$ chosen from the set $\{0,\pi / 2 , \pi, 3\pi / 2\}$ to encode the information. Bob decodes this phase information with an interferometer in his lab.

\begin{figure}[tbp]
\begin{center}
\vspace{0.2cm}
\hspace{-0.01cm}
\begin{overpic}[width=7.9cm]{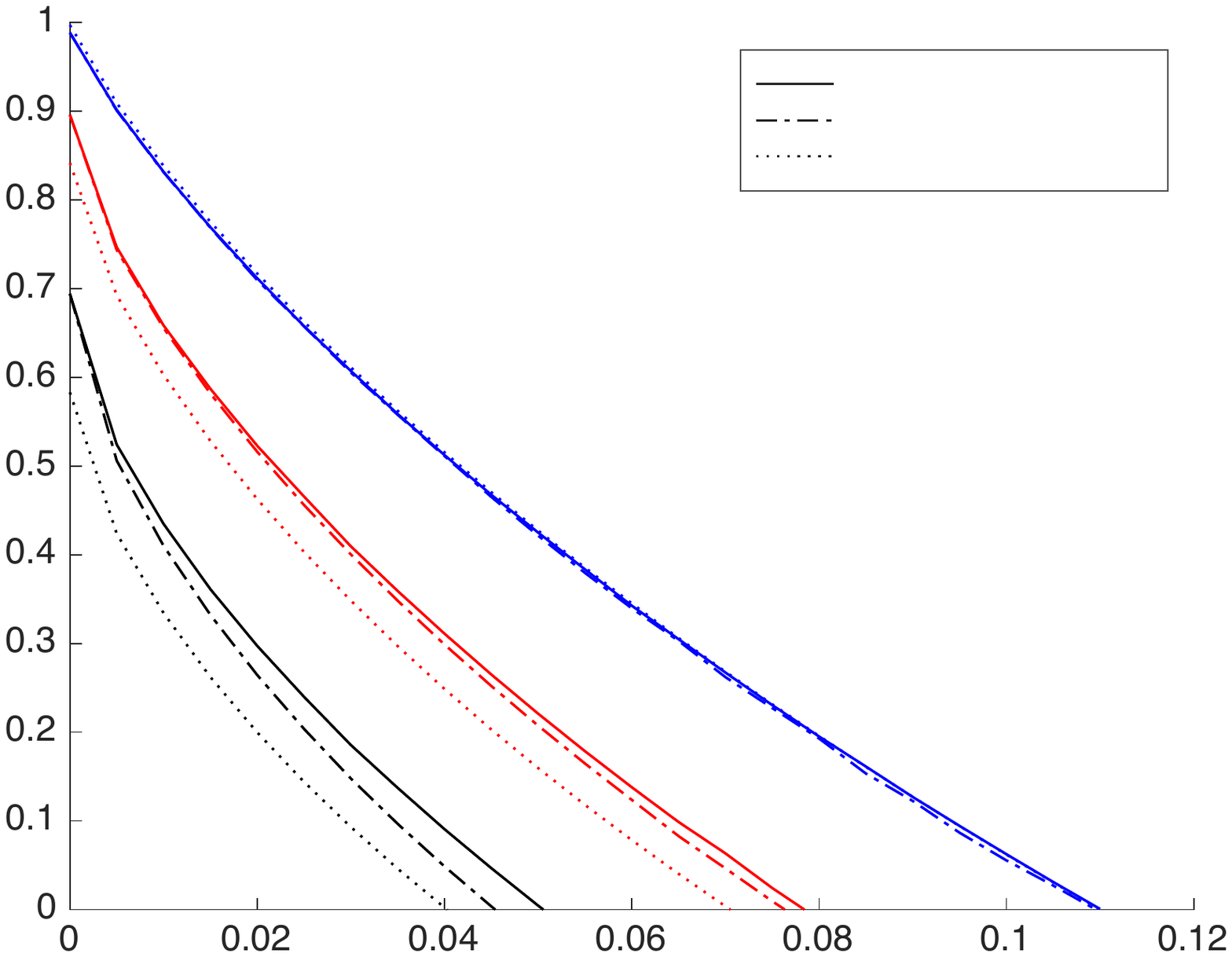}

\put(69.5,70.7){\scriptsize Our method}
\put(69.5,67.6){\scriptsize Ref.\cite{Coles2016} method}
\put(69.5,64.5){\scriptsize Ref.\cite{Lucamarini2015} method}

\put(52,32){\color{blue}\footnotesize $\mu_\text{out} = 0$}
\put(34,22){\color{red}\footnotesize $\mu_\text{out} = 0.01$}
\put(12,10){\color{black}\footnotesize $\mu_\text{out} = 0.04$}

\put(52,-2){\makebox(0,0){\footnotesize Error rate, $Q$}}
\put(-2,40){\makebox(0,0){\footnotesize \rotatebox{90}{Key rate}}}
\end{overpic}
\vspace{0.2cm}

\caption{Key rate vs error rate for the single-photon BB84 protocol under a Trojan-horse attack. The key rate is plotted for different values of $\mu_\text{out}$. Our numerical method improves on our previous approach in Ref.~\cite{Coles2016}, which in turn gives higher key rates than the analytical method of Ref.~\cite{Lucamarini2015}.}
\label{fgrTrojan}
\end{center}
\end{figure}

There is a simple hacking attack on this protocol that exploits a side channel in Alice's encoder (i.e., a channel by which Eve can obtain additional information, beyond the direct channel from Alice to Bob). The attack involves Eve sending a bright pulse of light into Alice's lab \cite{Vakhitov2001}. Some fraction of this pulse reaches Alice's phase encoder and is encoded with the same information that Alice is attempting to send to Bob. A portion of this light is reflected back to Eve, who can then decode some of the phase information, potentially compromising the protocol's security. This is called the Trojan-horse attack (THA) \cite{Gisin2006}, since it involves a ``malicious gift'' from Eve.

For simplicity, let us restrict our attention here to the case where Alice's light source outputs only a single photon per signal. Following the approach of Lucamarini et al.~\cite{Lucamarini2015}, we describe Eve's input pulse and back-reflected pulse as coherent states denoted by $\ket{\sqrt{\mu_\text{in}}}$ and $\ket{e^{i\theta}\sqrt{\mu_\text{out}}}$, respectively. Here, $\theta$ stores Alice's phase encoding setting, and the input and output intensities typically satisfy $\mu_\text{out} \ll \mu_\text{in}$. The signal states emerging from Alice's source are:
\begin{align}
\ket{\phi_{z+}} &= \ket{z_{+}}_{S} \ot \ket{+\sqrt{\mu_\text{out}}}_{S'} \label{eqntrojansignal1} \\
\ket{\phi_{z-}} &= \ket{z_{-}}_{S} \ot \ket{-\sqrt{\mu_\text{out}}}_{S'} \label{eqntrojansignal2} \\
\ket{\phi_{x+}} &= \ket{x_{+}}_{S} \ot \ket{+i\sqrt{\mu_\text{out}}}_{S'} \label{eqntrojansignal3} \\
\ket{\phi_{x-}} &= \ket{x_{-}}_{S}\ot \ket{-i\sqrt{\mu_\text{out}}}_{S'}\,, \label{eqntrojansignal4}
\end{align}
where
\begin{align}
\ket{z_{\pm}} &:= \frac{1}{\sqrt{2}}(\ket{1}_L \ket{0}_M \pm \ket{0}_L \ket{1}_M)\\
\ket{x_{\pm}} &:= \frac{1}{\sqrt{2}}(\ket{1}_L \ket{0}_M \pm i \ket{0}_L \ket{1}_M)\,.
\end{align}
Here, $\ket{n}_L$ ($\ket{n}_M$) is the $n$-photon state of the long (short) arm of the interferometer.

Figure~\ref{fgrTrojan} shows the results of our numerics for the THA. Here we plot key rate versus error rate (assuming the $z$ and $x$ error rates are identical, for simplicity) for various values of $\mu_{\text{out}}$. The plot shows that our new numerical method gives higher key rates than both the analytical method from Ref.~\cite{Lucamarini2015} as well as the numerical method from Ref.~\cite{Coles2016}.

\subsection{BB84 protocol with phase-coherent signal states}\label{sctLoPreskill}

\begin{figure}[t!]
\begin{center}
\vspace{0.2cm}
\hspace{-0.01cm}
\begin{overpic}[width=7.9cm]{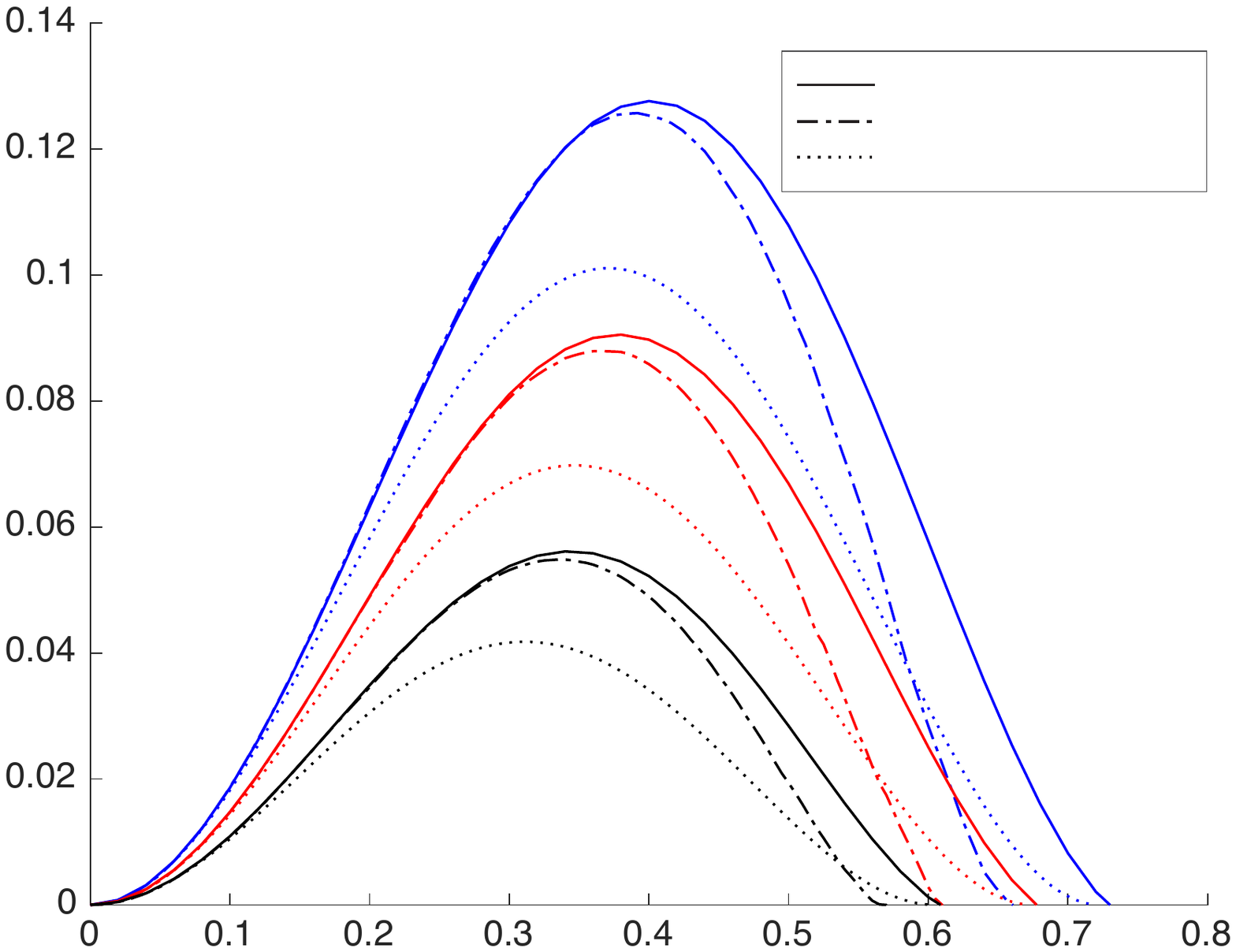}

\put(72.4,69.8){\scriptsize Our method}
\put(72.4,66.7){\scriptsize Ref.\cite{Coles2016} method}
\put(72.4,63.6){\scriptsize Ref.\cite{Lo2007} method}

\put(47.5,62){\color{blue}\footnotesize $\eta = 1$}
\put(43.5,43){\color{red}\footnotesize $\eta = 0.8$}
\put(39.5,27.5){\color{black}\footnotesize $\eta = 0.6$}

\put(53,-2){\makebox(0,0){\footnotesize Amplitude, $\alpha$}}
\put(-2,40){\makebox(0,0){\footnotesize \rotatebox{90}{Key rate}}}
\end{overpic}
\vspace{0.2cm}
\caption{Key rate versus signal-state amplitude $\alpha$ for three values of transmission probability - $\eta = 1,0.8,0.6$ - for the Huttner et al.~\cite{Huttner1995} protocol.}
\label{fgrLP}
\end{center}
\end{figure}

Here we consider a protocol proposed by Huttner et al.~\cite{Huttner1995} and analyzed by Lo and Preskill~\cite{Lo2007}. This is a phase-encoded BB84 protocol, but using coherent states instead of single-photon states. This is quite practical, since one can use attenuated laser pulses from mode-locked lasers to generate these signal states. In addition the protocol is practical because the experimenter does not need to do phase randomization. So it is worth investigating the key rate of this protocol.

The signal states prepared by Alice are~\cite{Lo2007}
\begin{align}
\ket{\phi_{z+}} &= \ket{+\alpha}_S \otimes \ket{\alpha}_{S'} \label{eqnlpsignal1} \\
\ket{\phi_{z-}} &= \ket{-\alpha}_S \otimes \ket{\alpha}_{S'} \label{eqnlpsignal2} \\
\ket{\phi_{x+}} &= \ket{+i\alpha}_S \otimes \ket{\alpha}_{S'} \label{eqnlpsignal3} \\
\ket{\phi_{x-}} &= \ket{-i\alpha}_S \otimes \ket{\alpha}_{S'} \label{eqnlpsignal4}
\end{align}
where $\alpha$ is the amplitude of the coherent state, $S$ is the signal mode and $S'$ is the reference mode. When Bob receives the signal, he performs a polarization measurement (in one of two complementary bases), discarding no-click events, and assigning a random bit value to double-click events.

Lo and Preskill~\cite{Lo2007} gave an analytical lower bound on the key rate for this protocol, as a function of transmission probability $\eta$ and amplitude $\alpha$. Their theoretical curves are shown as dotted lines in Fig.~\ref{fgrLP}, for several values of $\eta$. In the same plot, we show the result of our numerical optimization as solid lines, with the key rates obtained from the method in Ref.~\cite{Coles2016} shown as dashed-dotted lines. Interestingly our numerics give higher key rates than the previous literature over the entire parameter range. This is an important result due to the practicality of this protocol.

\section{Conclusions}\label{sctconclusion}

In conclusion, we presented a new numerical approach for calculating key rates for QKD. For concreteness, we name our approach the ``reliable primal method'' or the ``reliable primal problem''. As discussed in Sec.~\ref{sctmainresult}, Step~1 of our method is simply the primal optimization problem. Step~2 of our method converts the output of Step~1 (a nearly optimal eavesdropping attack) into a reliable lower bound on the key rate. We presented an efficient method for Step~1 in Sec.~\ref{sctalgorithm}. Our main contribution is a method for Step~2, which is presented in Theorems~\ref{thm1}, \ref{thm2}, and \ref{thm3}.

Reliability is the most important issue with numerical key rate calculations, since key rates must come with a security guarantee. In this work, we highlighted the various issues associated with numerical key rate calculations, such as constraint violation and inexact variable storage by computers. Furthermore, we showed how to address these issues. Our most robust result, Theorem~\ref{thm3}, allows one to lower bound the key rate despite numerical imprecision.  

We discussed that our method is arbitrarily tight in Sec~\ref{sctTightness}. This allowed us to make significant improvements over previous literature key rates for three interesting examples in Sec.~\ref{sctexamples}. Furthermore, the tightness of our approach implies that the solid curves that we plotted in Figs.~\ref{fgrEffMismatch}, \ref{fgrTrojan}, and \ref{fgrLP} are essentially unbeatable, i.e., they cannot be improved upon. Eliminating looseness from key rate calculations is a major advance for the field of QKD research. 

Future applications of our work include investigating device imperfections, side channels, multi-photon detection events, decoy-state protocols with partial phase randomization, measurement-device independent protocols, differential-phase shift protocols, and coherent one-way protocols. Perhaps more importantly, our approach can allow researchers to explore and evaluate novel protocol ideas that have yet to be discovered.

As noted in the Introduction, our group released a user-friendly software for key rate calculations based on the dual problem from Ref.~\cite{Coles2016}. Interestingly, our reliable primal method presented here improves on the approach of Ref.~\cite{Coles2016} in terms of both speed and tightness. Therefore, we plan to improve our publicly-available software in the future by incorporating the reliable primal method. We believe this software has the potential to be used throughout the QKD community, both in industry and academia.

Finally, we hope to extend our approach to finite-key analysis \cite{Scarani2008, Renner2005, Sano2010, Tomamichel2012a} in the near future.

\section{Acknowledgements}\label{sctackknowledge}

We thank Yanbao Zhang for preparing and doing the calculations for Fig.~\ref{fgrEffMismatch}. We are furthermore grateful for helpful discussions with Dr. Zhang and Shaun Ren. We acknowledge support from Industry Canada, Sandia National Laboratories, Office of Naval Research (ONR), NSERC Discovery Grant, and Ontario Research Fund (ORF).

\bibliographystyle{naturemag}
\bibliography{primal3}

\onecolumngrid
\appendix

\section{Proof of Theorem~\ref{thm1}}\label{sctAppProof}

\subsection{Standard form for semidefinite programs}\label{sctAppAsdps}

To prove Theorem~\ref{thm1}, we will make use of the so-called standard form for semidefinite programs (e.g., see page 265 of \cite{Boyd2010}), given as follows:
\begin{align}
\label{eqnSDPform}
&\text{\underline{Primal Problem}}			&\text{\underline{Dual Problem}}\\
&\min_X \hspace{5pt}\ipa{A}{X} 			&\max_{\vec{y}}\hspace{3pt}\vec{b} \cdot \vec{y}\hspace{21pt}\notag\\
&\text{subject to:}						&\text{subject to: }\hspace{10pt}\notag\\
&\ipa{B_1}{X} = b_1				&A \geq \vec{y}\cdot \vec{B}  \hspace{19pt}\notag\\
&\hspace{20pt}\vdots 					&\vec{y}\in \mathbb{R}^n \hspace{29pt}\notag\\
&\ipa{B_m}{X} = b_m\notag\\
&X\geq 0\notag
\end{align}
Here, the inner product is of the Hilbert-Schmidt form, $\ipa{A}{B}:= \Tr(A\ad B)$, and the vector notation $\vec{a}\cdot \vec{b}$ is shorthand for $\sum_{j=1}^m a_j b_j$.

If $\mu$ and $\nu$ are, respectively, the solutions of the primal and dual problems, then in general one has
\begin{align}
\label{eqnWeakDualityGeneral}
\mu \geq \nu\,,
\end{align}
which is known as weak duality \cite{Boyd2010}. Furthermore, under certain conditions, \eqref{eqnWeakDualityGeneral} turns into an equality $\mu = \nu$, known as strong duality. A sufficient condition for strong duality to hold is Slater's condition (e.g., page 265 of \cite{Boyd2010}), which can be stated as follows for the standard-form SDP in \eqref{eqnSDPform}.

\textbf{Slater's condition:} Let $\mathbf{X} = \{X \geq 0 : \ipa{B_j}{X} = b_j\,, \forall j\}$ be the set of feasible $X$ for the primal problem in \eqref{eqnSDPform}. Suppose that $\mathbf{X} \neq \emptyset$. Furthermore, suppose that there exists a $\vec{y} \in \mathbb{R}^n$ such that $A > \vec{y}\cdot \vec{B}$. Then
\begin{align}
\label{eqnSlatersConditionGeneral}
\mu = \nu\,.
\end{align}

\subsection{Inequality in \eqref{eq:DualLowerBound}}

In what follows, we first prove the inequality in \eqref{eq:DualLowerBound}, 
\begin{align}
\label{eqnEq8Repeated}
\alpha \geq \beta(\rho)\,,
\end{align}
and in the next subsection we establish the equality condition.

For some matrix $\sigma$, let $\vec{\sigma}:=\vectext (\sigma)$ denote the vectorization obtained by stacking the columns of $\sigma$. For two matrices $\sigma$ and $\tau$, note that the inner product between their vectorizations can be written as
\begin{align}
\label{eqninnerprodsigtau}
\vec{\sigma}\cdot \vec{\tau} = \Tr (\sigma^T \tau) \,,
\end{align}
where $^T$ is the transpose taken in the basis used to define the vectorization.

Now consider the gradient matrix $\nabla f(\rho)$ defined in \eqref{eqnGradDef836} and its vectorization $\vec{\nabla f}(\rho)$. Note that the quantity
\begin{align}
\label{eqngdef}
g(\rho,\sigma) &:= (\vec{\sigma} - \vec{\rho})\cdot  \vec{\nabla f}(\rho)\\
\label{eqnproofstep2}
&= \Tr \left[ (\sigma-\rho)^T\nabla f(\rho)\right]
\end{align}
quantifies how much the function $f$ changes whenever one moves from point $\rho$ to point $\sigma$. [Note that Eq.~\eqref{eqnproofstep2} rewrote $g(\rho,\sigma)$ in matrix notation, where $^T$ is the transpose in the same basis that is used to represent the gradient matrix.] More precisely, $g(\rho,\sigma)$ quantifies how much the linearization $L_{\rho}$ of $f$ changes, where $L_{\rho}$ is the linearization about point $\rho$. Specifically, the linearization is given by
\begin{align}
\label{eqlineardef}
L_{\rho}(\sigma) = f(\rho)+g(\rho,\sigma)\,.
\end{align}

The assumption of Thm.~\ref{thm1} is that $f$ is differentiable about the point $\rho$ of interest. Since $f$ is a convex, differentiable function over a convex set $\mathbf{S}$, the linearization $L_{\rho}$ always lies below the curve $f$ (e.g. page 69 of \cite{Boyd2010}). Hence, for any two points $\rho,\sigma \in \mathbf{S}$ we have 
\begin{align}
\label{eqnproofstep1}
f(\sigma) - f(\rho) & \geq g(\rho,\sigma)\,.
\end{align}

Now let $\rho^* \in \mathbf{S}$ minimize $f$ over $\mathbf{S}$. Then
\begin{align}
\label{eqnproofstep3}
f(\rho^*)  &\geq f(\rho)+ \Tr((\rho^*-\rho)^T\nabla f(\rho)) \\
\label{eqnproofstep4}
&\geq f(\rho)+ \min_{\sigma \in \mathbf{S}} \left[  \Tr((\sigma-\rho)^T\nabla f(\rho)) \right] \\
\label{eqnproofstep5}
&= f(\rho)- \Tr(\rho^T\nabla f(\rho)) + \min_{\sigma \in \mathbf{S}} \Tr(\sigma^T \nabla f(\rho))\,,
\end{align}
where \eqref{eqnproofstep4} exploits the fact that $\rho^* \in \mathbf{S}$. Hence finding a lower bound on $\alpha$ reduces to the minimization problem
\begin{align}
\label{eqnproofstep6}
\min_{\sigma \in \mathbf{S}} \Tr(\sigma^T \nabla f(\rho)).
\end{align}
This is a linear semidefinite program (SDP) and we may apply duality theory to obtain the dual SDP. In particular, our problem is essentially the standard SDP form given in Sec.~\ref{sctAppAsdps}, which gives the following dual problem
\begin{align}\label{eqnproofstep8}
\max_{\vec{y}\ \in \mathbf{S}^*(\rho)} \vec{\gamma} \cdot \vec{y}\,,
\end{align}
where
\begin{align}
\mathbf{S} &= \{\rho\in \mathbf{H_{+}} \mid \Tr(\Gamma_i \rho) = \gamma_i, \forall i\} \,, \\
\mathbf{S^*}(\sigma) &= \left\{ \vec{y} \in \mathbb{R}^n \mid \sum_i y_i \Gamma_i^T \leq \nabla f(\sigma) \right\}\,.
\end{align}

Weak duality from \eqref{eqnWeakDualityGeneral} implies that
\begin{align}
\label{eq:PrimalDualIneq}
\min_{\sigma \in \mathbf{S}} \Tr(\sigma^T \nabla f(\rho)) \geq \max_{\vec{y}\ \in \mathbf{S}^*(\rho)} \vec{\gamma} \cdot \vec{y}\,.
\end{align}
Inserting \eqref{eq:PrimalDualIneq} into \eqref{eqnproofstep5} gives the desired lower bound in \eqref{eq:DualLowerBound}.

\subsection{Equality in \eqref{eq:DualLowerBound}}\label{sctAppAequality}

With the inequality in Theorem~\ref{thm1} proven, we now show that equality in \eqref{eq:DualLowerBound} holds if $\rho$ satisfies $f(\rho) = f(\rho^*) = \alpha$.

First, consider the inequality in \eqref{eq:PrimalDualIneq}. Slater's condition stated in Sec.~\ref{sctAppAsdps} provides a sufficient criteria for strong duality to hold and, hence, for \eqref{eq:PrimalDualIneq} to be an equality. To satisfy this condition it is adequate to show that $\mathbf{S} \neq \emptyset$ and there exists $\vec{y} \in \mathbb{R}^n$ such that $\sum_i y_i \Gamma_i^T < \nabla f(\rho)$. Since the set of constraints $\{\Gamma_i\}$ correspond to a valid density matrix, it immediately follows that $\mathbf{S} \neq \emptyset$. Since density matrices are constrained to have trace one, without loss of generality we may take $\Gamma_1 = \id$ and $\gamma_1 = 1$. Thus, if $\lambda_{\min}$ is the smallest eigenvalue of $\nabla f(\rho)$, it follows that $(\lambda_{\min}-1) \Gamma_1^T < \nabla f(\rho)$. So $\vec{y} = (\lambda_{\min}-1,0,...,0)^T $ satisfies $\sum_i y_i \Gamma_i^T < \nabla f(\rho)$. With Slater's condition satisfied, strong duality holds and
\begin{align}
\label{eqnproofstep7}
\min_{\sigma \in \mathbf{S}} \Tr(\sigma^T \nabla f(\rho)) = \max_{\vec{y}\ \in \mathbf{S}^*(\rho)} \vec{\gamma} \cdot \vec{y}\,.
\end{align}

We now show that if $f(\rho) = f(\rho^*)$ then equality in \eqref{eq:DualLowerBound} holds. Suppose that $f(\rho) = f(\rho^*)$, then \eqref{eqnproofstep5} yields
\begin{align}
\label{eqnproofstep10}
\min_{\sigma \in \mathbf{S}} \Tr((\sigma-\rho)^T\nabla f(\rho)) \leq 0\,.
\end{align}

Next, we state a lemma that provides a bound in the opposite direction of \eqref{eqnproofstep10}.
\begin{lemma}
For a point $\rho$ that minimizes $f$ over $\mathbf{S}$,
\begin{align}
\label{eqnproofstep11}
\min_{\sigma \in \mathbf{S}} \Tr((\sigma-\rho)^T\nabla f(\rho)) \geq 0\,.
\end{align}
\begin{proof}
The Karush-Kuhn-Tucker (KKT) conditions (e.g. page 243 of \cite{Boyd2010}) provide necessary conditions for the optimality of $\rho$. For our problem they say that if $\rho$ is optimal, then there exists a pair $(\vec{\lambda},Z) \in \mathbb{R}^n\times\mathbf{H}$ where $\mathbf{H}$ denotes the set of $d \times d$ Hermitian matrices such that
\begin{align}
\nabla f(\rho) + \sum_i\lambda_i{\Gamma_i}^T - Z &= 0 \,, \label{eqn:kktderive} \\
\Tr(Z^T\rho) &= 0 \,, \label{eqn:kktrhoZrelation} \\
Z &\geq 0 \,.
\end{align}
Let $\sigma \in \mathbf{S}$ then
\begin{align}
\Tr((\sigma-\rho)^T\nabla f(\rho)) &= \Tr((\sigma-\rho)^T(-\sum_i\lambda_i{\Gamma_i}^T + Z)) \\
&= \sum_i\lambda_i\Tr(\Gamma_i \rho) - \sum_i\lambda_i\Tr(\Gamma_i \sigma) + \Tr(Z^T \sigma) - \Tr(Z^T \rho) \\
&= \Tr(Z^T\sigma) \,,
\end{align}
where we have used the definition of $\mathbf{S}$ and \eqref{eqn:kktrhoZrelation} in the last equality. Since $\sigma$ and $Z^T$ are both positive semidefinite it follows that the trace of their product is nonnegative. Hence,
\begin{align}
\Tr((\sigma-\rho)^T\nabla f(\rho)) \geq 0 \,,
\end{align}
and since $\sigma$ is arbitrary, the desired result follows.
\end{proof}
\end{lemma}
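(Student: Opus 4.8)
The plan is to recognize \eqref{eqnproofstep11} as nothing more than the first-order optimality condition for minimizing a differentiable convex function over a convex set, and to derive it directly from the definition of a minimizer. I would give the argument first in elementary form and then indicate the KKT/SDP version that matches the language of the rest of this appendix.

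First I would fix an arbitrary $\sigma \in \mathbf{S}$ and use convexity of $\mathbf{S}$: the segment $\rho_t := (1-t)\rho + t\sigma$ lies in $\mathbf{S}$ for all $t\in[0,1]$. Since $\rho$ minimizes $f$ over $\mathbf{S}$, we have $f(\rho_t)-f(\rho)\geq 0$, so the difference quotient $\big(f(\rho_t)-f(\rho)\big)/t$ is nonnegative for $t\in(0,1]$. Next I would let $t\to 0^+$; because $f$ is differentiable at $\rho$ (the standing hypothesis of Theorem~\ref{thm1}, under which this lemma is used), this limit is the directional derivative of $f$ at $\rho$ along $\sigma-\rho$, which is precisely $g(\rho,\sigma)=\Tr\big((\sigma-\rho)^T\nabla f(\rho)\big)$ as in \eqref{eqngdef}--\eqref{eqnproofstep2}. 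Hence $\Tr\big((\sigma-\rho)^T\nabla f(\rho)\big)\geq 0$ for every $\sigma\in\mathbf{S}$, and taking the minimum over $\sigma\in\mathbf{S}$ preserves the inequality, giving \eqref{eqnproofstep11}.

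An alternative route, closer to the SDP formalism used elsewhere in this appendix, goes through the KKT conditions. Since Slater's condition holds for $\mathbf{S}$ (established in Sec.~\ref{sctAppAequality}), optimality of $\rho$ produces multipliers $\vec{\lambda}\in\mathbb{R}^n$ and a Hermitian $Z\geq 0$ with $\nabla f(\rho)=Z-\sum_i\lambda_i\Gamma_i^T$ together with the complementary-slackness relation $\Tr(Z^T\rho)=0$. Substituting this expression for $\nabla f(\rho)$ into $\Tr\big((\sigma-\rho)^T\nabla f(\rho)\big)$ and using $\Tr(\Gamma_i\sigma)=\Tr(\Gamma_i\rho)=\gamma_i$ for $\sigma,\rho\in\mathbf{S}$ collapses the expression to $\Tr(Z^T\sigma)$, which is nonnegative because $Z^T$ and $\sigma$ are both positive semidefinite; since $\sigma\in\mathbf{S}$ was arbitrary, \eqref{eqnproofstep11} follows.

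The only delicate points are bookkeeping: in the elementary argument, identifying the one-sided derivative at $t=0$ with $g(\rho,\sigma)$ uses differentiability of $f$ at $\rho$; in the KKT argument, the subtlety is that the Lagrange multiplier for the semidefinite constraint $\rho\geq 0$ is a matrix $Z$ rather than a scalar, so the complementary-slackness identity $\Tr(Z^T\rho)=0$ must be tracked carefully, and one must confirm that the constraint qualification already verified in Sec.~\ref{sctAppAequality} indeed licenses invoking KKT here. Neither is a serious obstacle, so I expect the proof to be short once the optimality characterization is set up.
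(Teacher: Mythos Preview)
Your KKT argument in the second paragraph is essentially identical to the paper's proof: the paper invokes the KKT conditions to produce $(\vec{\lambda},Z)$ with $\nabla f(\rho)+\sum_i\lambda_i\Gamma_i^T-Z=0$, $\Tr(Z^T\rho)=0$, and $Z\geq 0$, then substitutes and uses $\Tr(\Gamma_i\sigma)=\Tr(\Gamma_i\rho)=\gamma_i$ together with complementary slackness to reduce the expression to $\Tr(Z^T\sigma)\geq 0$.

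Your first, elementary argument via the directional derivative along the segment $\rho_t=(1-t)\rho+t\sigma$ is a genuinely different route that the paper does not take. It is shorter and strictly lighter in hypotheses: it uses only convexity of $\mathbf{S}$ and differentiability of $f$ at $\rho$, and in particular it needs neither convexity of $f$ nor any constraint qualification such as Slater's condition. The KKT route, by contrast, requires a constraint qualification to guarantee existence of the multipliers $(\vec{\lambda},Z)$; its payoff is that it exposes the dual certificate $Z$ explicitly, which fits the SDP-duality language used throughout the appendix. Either argument is fully adequate here.
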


Combining \eqref{eqnproofstep10} and \eqref{eqnproofstep11} we must have
\begin{align}
\min_{\sigma \in \mathbf{S}} \Tr((\sigma-\rho)^T\nabla f(\rho)) = 0\,.
\end{align}
Consequently,
\begin{align}
\label{eqnproofstep12}
f(\rho^*) &= f(\rho) + \min_{\sigma \in \mathbf{S}} \Tr((\sigma-\rho)^T\nabla f(\rho)) \\
\label{eqnproofstep13}
&= f(\rho)-\Tr(\rho^T\nabla f(\rho))+\max_{\vec{y} \in \mathbf{S}^*(\rho)} \vec{\gamma} \cdot \vec{y} \\
&= \beta(\rho)\,.
\end{align}

\section{Proof of Lemma~\ref{lemma_gradient_exists}}\label{sctAppGradient}

As noted in \eqref{eqngradient1123}, the gradient of interest has the form
\begin{align}
\label{eqngradient2}
\left[ \nabla f_{\epsilon}(\rho) \right]^T = \GC_{\epsilon}^\dagger (\log \GC_{\epsilon}(\rho)) - \GC_{\epsilon}^\dagger (\log \ZC(\GC_{\epsilon}(\rho))) \,.
\end{align}
To show that this expression is well-defined, we simply need to show that logarithms can be evaluated, which would be true if $\GC_{\epsilon}(\rho)$ and $\ZC(\GC_{\epsilon}(\rho))$ are full-rank matrices.

Consider the first of these matrices
\begin{align}
\label{eqngradient3}
\GC_{\epsilon}(\rho) = (\DC_{\epsilon}\circ \GC)(\rho)  = (1-\epsilon)\GC( \rho ) + \epsilon \hspace{2 pt} \id / d'\,.
\end{align}
Since $\GC( \rho )\geq 0$, it follows that
\begin{align}
\label{eqngradient3}
\GC_{\epsilon}(\rho) \geq \epsilon \hspace{2 pt} \id / d' > 0\,.
\end{align}
In other words, $\GC_{\epsilon}(\rho)$ is full rank.

Now consider the second of these matrices
\begin{align}
\label{eqngradient4}
\ZC(\GC_{\epsilon}(\rho)) &= (1-\epsilon)\ZC(\GC( \rho )) + \epsilon \hspace{2 pt} \ZC(\id )/ d'\\
&= (1-\epsilon)\ZC(\GC( \rho )) + \epsilon \hspace{2 pt} \id / d'
\end{align}
where the second line notes that $\ZC$ is a pinching quantum channel and hence $\ZC(\id ) = \id$. Since $\ZC(\GC( \rho )) \geq 0$, we have
\begin{align}
\label{eqngradient3}
\ZC(\GC_{\epsilon}(\rho)) \geq \epsilon \hspace{2 pt} \id / d' > 0\,.
\end{align}
So both $\GC_{\epsilon}(\rho)$ and $\ZC(\GC_{\epsilon}(\rho))$ are full rank, implying that \eqref{eqngradient2} is well-defined.

\section{Proof of Theorem~\ref{thm2}}\label{appthm2proof}

\subsection{Continuity in $\rho$}\label{sctAppContinuity}

Here we state that the objective function $f(\rho)$ is continuous, which will be useful for our proof of Theorem~\ref{thm2}. 

First we state two helpful lemmas for the trace distance (or trace norm). The first lemma was proved in Ref.~\cite{Coles2014b}.
\begin{lemma}
\label{lemma1normPart1}
Let $\rho\geq 0$ and $\sigma \geq 0$ be positive semidefinite matrices. Then
\begin{align}
\label{eqn1normlemmaPart1}
\|\rho - \sigma \|_1^2 \leq (\Tr \rho +\Tr \sigma)^2 - 4 \| \sqrt{\rho} \sqrt{\sigma}    \|_1^2  \,,
\end{align}
where $\| A \|_1 = \Tr\sqrt{A\ad A}$ is the trace norm.
\end{lemma}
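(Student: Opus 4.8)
The plan is to recognize this inequality as the (unnormalized) Fuchs--van de Graaf upper bound relating trace distance to fidelity, and to prove it by the purification method together with monotonicity of the trace norm under partial trace. The essential point --- and precisely the place where one must be careful to obtain the stated bound rather than a weaker one --- is that I will \emph{not} bound $\|\sqrt\rho\sqrt\sigma\|_1$ by any smaller scalar such as $\Tr(\sqrt\rho\sqrt\sigma)$. Instead I will produce a pair of purifications whose overlap equals the \emph{full} trace-norm fidelity $F:=\|\sqrt\rho\sqrt\sigma\|_1$, so that $F$ itself, and not a smaller quantity, appears in the final estimate.

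First I would fix a reference system and write explicit purifications $\ket{\psi}$, $\ket{\phi}$ of $\rho$ and $\sigma$. Using the unnormalized maximally entangled vector $\ket{\Omega}=\sum_i\ket{i}\ket{i}$ and the identity $\bra{\Omega}(X\ot Y)\ket{\Omega}=\Tr(XY\trp)$, the purifications $\ket{\psi}=(\sqrt\rho\ot\id)\ket{\Omega}$ and $\ket{\phi}=(\sqrt\sigma\ot V)\ket{\Omega}$ satisfy $\ip{\psi}{\phi}=\Tr(\sqrt\rho\sqrt\sigma\,V\trp)$, together with $\ip{\psi}{\psi}=\Tr\rho$ and $\ip{\phi}{\phi}=\Tr\sigma$. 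Maximizing $|\Tr(\sqrt\rho\sqrt\sigma\,V\trp)|$ over unitaries $V$, via the variational formula $\|A\|_1=\max_U|\Tr(UA)|$, yields exactly $F=\|\sqrt\rho\sqrt\sigma\|_1$, and after absorbing a phase I may take $\ip{\psi}{\phi}=F\ge 0$. This is the achievability half of the generalized Uhlmann theorem, and it is the step that retains $\|\sqrt\rho\sqrt\sigma\|_1$ inside the bound.

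Next I would invoke monotonicity of the trace norm under the partial trace (a trace-preserving completely positive map) to get
\begin{align}
\label{eqnPlanMono}
\|\rho-\sigma\|_1 \;\le\; \big\|\dya{\psi}-\dya{\phi}\big\|_1\,,
\end{align}
and then evaluate the right-hand side exactly. The Hermitian operator $D:=\dya{\psi}-\dya{\phi}$ is supported on the (at most) two-dimensional span of $\ket{\psi},\ket{\phi}$, so it has only two nonzero eigenvalues $\lambda_\pm$, fixed by $\lambda_++\lambda_-=\Tr D=\Tr\rho-\Tr\sigma$ and $\Tr D^2=(\Tr\rho)^2+(\Tr\sigma)^2-2|\ip{\psi}{\phi}|^2$, whence $\lambda_+\lambda_-=|\ip{\psi}{\phi}|^2-\Tr\rho\,\Tr\sigma$. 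By Cauchy--Schwarz, $|\ip{\psi}{\phi}|^2\le\ip{\psi}{\psi}\ip{\phi}{\phi}=\Tr\rho\,\Tr\sigma$, so $\lambda_+\lambda_-\le 0$; the eigenvalues have opposite signs, and therefore
\begin{align}
\label{eqnPlanDnorm}
\|D\|_1=|\lambda_+|+|\lambda_-|=\sqrt{(\lambda_++\lambda_-)^2-4\lambda_+\lambda_-}=\sqrt{(\Tr\rho+\Tr\sigma)^2-4F^2}\,.
\end{align}
Combining \eqref{eqnPlanMono} with \eqref{eqnPlanDnorm} and squaring gives precisely $\|\rho-\sigma\|_1^2\le(\Tr\rho+\Tr\sigma)^2-4\|\sqrt\rho\sqrt\sigma\|_1^2$.

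I expect the main obstacle to be the purification/Uhlmann step: one must verify that the family $\ket{\phi}=(\sqrt\sigma\ot V)\ket{\Omega}$ ranges over all purifications as $V$ varies and that the optimum of $|\Tr(\sqrt\rho\sqrt\sigma\,V\trp)|$ is the trace-norm fidelity $\|\sqrt\rho\sqrt\sigma\|_1$ and not merely $|\Tr(\sqrt\rho\sqrt\sigma)|$ --- this is exactly the subtlety to be handled with care. The remaining ingredients, namely monotonicity under partial trace, the two-dimensional eigenvalue computation, and the Cauchy--Schwarz sign condition $\lambda_+\lambda_-\le 0$, are routine.
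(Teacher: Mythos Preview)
The paper does not give its own proof of this lemma; it simply cites an external reference (Ref.~\cite{Coles2014b}) for the result. Your purification argument is correct and is in fact the standard route to the unnormalized Fuchs--van de Graaf upper bound: construct purifications achieving the Uhlmann fidelity $F=\|\sqrt\rho\sqrt\sigma\|_1$, use monotonicity of the trace norm under partial trace, and evaluate the rank-two difference exactly. The only step requiring care is the one you already flagged---ensuring that the optimal unitary $V$ in $\ket{\phi}=(\sqrt\sigma\ot V)\ket{\Omega}$ yields overlap $\|\sqrt\rho\sqrt\sigma\|_1$ rather than $|\Tr(\sqrt\rho\sqrt\sigma)|$---and your use of the variational characterization $\|A\|_1=\max_U|\Tr(UA)|$ handles it correctly.
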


The next lemma is a corollary of Lemma~\ref{lemma1normPart1}.
\begin{lemma}
\label{lemma1norm}
Let $\rho\geq 0$ and $\sigma \geq 0$ be positive semidefinite matrices. Let $\EC$ be a completely positive trace-nonincreasing (CPTNI) map. Then
\begin{align}
\label{eqn1normlemma}
\|\rho - \sigma \|_1 \geq \| \EC(\rho) - \EC(\sigma) \|_1\,,
\end{align}
where $\| A \|_1 = \Tr\sqrt{A\ad A}$ is the trace norm.
\end{lemma}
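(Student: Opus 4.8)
\textbf{Proof proposal for Lemma~\ref{lemma1norm}.}

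The plan is to derive Lemma~\ref{lemma1norm} as a straightforward corollary of Lemma~\ref{lemma1normPart1}. First I would apply the bound in \eqref{eqn1normlemmaPart1} not to $\rho$ and $\sigma$ directly, but to their images $\EC(\rho)$ and $\EC(\sigma)$ under the CPTNI map. This gives
\begin{align}
\label{eqnplanstep1}
\|\EC(\rho) - \EC(\sigma)\|_1^2 \leq (\Tr \EC(\rho) + \Tr \EC(\sigma))^2 - 4\|\sqrt{\EC(\rho)}\sqrt{\EC(\sigma)}\|_1^2\,.
\end{align}
The task then reduces to controlling the right-hand side of \eqref{eqnplanstep1} by the corresponding expression for $\rho$ and $\sigma$, namely $(\Tr\rho + \Tr\sigma)^2 - 4\|\sqrt{\rho}\sqrt{\sigma}\|_1^2$, which by Lemma~\ref{lemma1normPart1} would already need to be an upper bound for $\|\rho - \sigma\|_1^2$ — but in fact I want the reverse: I need the RHS for the $\EC$-images to be \emph{at most} $\|\rho - \sigma\|_1^2$. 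So the cleaner route is to show directly that the RHS of \eqref{eqnplanstep1} is bounded above by $(\Tr\rho+\Tr\sigma)^2 - 4\|\sqrt\rho\sqrt\sigma\|_1^2$, and then invoke Lemma~\ref{lemma1normPart1} once more (this time as an inequality going the other way is not available, so instead one uses the known identity that for positive operators $\|\rho-\sigma\|_1^2 \leq (\Tr\rho+\Tr\sigma)^2 - 4\|\sqrt\rho\sqrt\sigma\|_1^2$ with equality structure tight enough) — actually the honest approach is: combine \eqref{eqnplanstep1} with the two facts below.

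The two facts I would establish are: (i) trace non-increase, $\Tr\EC(\rho) \leq \Tr\rho$ and $\Tr\EC(\sigma)\leq \Tr\sigma$, which is immediate from $\EC$ being trace-nonincreasing and $\rho,\sigma\geq 0$; and (ii) a lower bound on the fidelity-like term, $\|\sqrt{\EC(\rho)}\sqrt{\EC(\sigma)}\|_1 \geq \|\sqrt\rho\sqrt\sigma\|_1$, i.e.\ the (generalized) fidelity $F(\rho,\sigma) = \|\sqrt\rho\sqrt\sigma\|_1$ is non-decreasing under CPTNI maps. Fact (ii) is the monotonicity of fidelity under trace-nonincreasing completely positive maps, a standard extension of Uhlmann's theorem; I would either cite it or give the one-line Stinespring/Uhlmann argument. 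Combining (i) and (ii) in \eqref{eqnplanstep1} yields
\begin{align}
\label{eqnplanstep2}
\|\EC(\rho)-\EC(\sigma)\|_1^2 \leq (\Tr\rho+\Tr\sigma)^2 - 4\|\sqrt\rho\sqrt\sigma\|_1^2\,,
\end{align}
and then Lemma~\ref{lemma1normPart1} applied to $\rho,\sigma$ identifies the right-hand side of \eqref{eqnplanstep2} as an upper bound for $\|\rho-\sigma\|_1^2$ — wait, that gives $\|\EC(\rho)-\EC(\sigma)\|_1^2 \leq (\text{something} \geq \|\rho-\sigma\|_1^2)$, which is the wrong direction. So the correct final step must instead use that \eqref{eqn1normlemmaPart1} is in fact an \emph{equality} up to the contribution of the kernel, or more simply: note that for the pair $\rho,\sigma$ Lemma~\ref{lemma1normPart1} is known to be tight in the sense that $(\Tr\rho+\Tr\sigma)^2 - 4\|\sqrt\rho\sqrt\sigma\|_1^2 \leq \|\rho-\sigma\|_1^2$ also holds when one is careful — I would look this up, but the intended argument is surely that \eqref{eqn1normlemmaPart1} combined with the reverse inequality (valid for positive operators, e.g.\ via $\|\rho-\sigma\|_1 \geq \Tr\rho+\Tr\sigma - 2F(\rho,\sigma)$ and the bound $F^2 \geq \ldots$) pins things down.

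The main obstacle, and the step I would spend the most care on, is getting the direction of the inequalities consistent: specifically establishing that $(\Tr\rho + \Tr\sigma)^2 - 4\|\sqrt\rho\sqrt\sigma\|_1^2$ is simultaneously (a) an \emph{upper} bound for $\|\EC(\rho)-\EC(\sigma)\|_1^2$ via \eqref{eqnplanstep1} plus monotonicity, and (b) a \emph{lower} bound for $\|\rho-\sigma\|_1^2$. Part (b) is the subtle one — it is the reverse of Lemma~\ref{lemma1normPart1} — and I expect it follows because for the \emph{same} pair of operators the bound in Lemma~\ref{lemma1normPart1} is actually an equality (it comes from choosing optimal purifications), or from the elementary estimate $\|\rho-\sigma\|_1 \geq |\Tr\rho - \Tr\sigma|$ together with the Fuchs--van de Graaf-type relation. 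Once both directions are in hand, chaining them gives $\|\EC(\rho)-\EC(\sigma)\|_1^2 \leq \|\rho-\sigma\|_1^2$ and hence \eqref{eqn1normlemma}. If part (b) turns out not to hold as an equality, the fallback is to prove Lemma~\ref{lemma1norm} without Lemma~\ref{lemma1normPart1} at all, by dilating $\EC$ to a trace-preserving map on a larger space (append an environment and a ``loss'' flag), using monotonicity of the trace distance under CPTP maps and under restriction to a subsystem — that dilation argument is clean and self-contained, and is the route I would actually recommend.
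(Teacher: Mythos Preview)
Your main line of argument does not work, and you essentially discover this yourself. Two concrete problems:

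First, fact (ii) is false for CPTNI maps. Fidelity $F(\rho,\sigma)=\|\sqrt\rho\sqrt\sigma\|_1$ is non-decreasing under \emph{trace-preserving} CP maps, but under trace-nonincreasing maps it can drop: take $\EC=0$ and $\rho=\sigma=\dya{0}$, so $F(\rho,\sigma)=1$ while $F(\EC(\rho),\EC(\sigma))=0$. So the inequality $\|\sqrt{\EC(\rho)}\sqrt{\EC(\sigma)}\|_1 \geq \|\sqrt\rho\sqrt\sigma\|_1$ is simply not available here.

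Second, even granting (ii), you correctly notice that the chain ends at $(\Tr\rho+\Tr\sigma)^2-4\|\sqrt\rho\sqrt\sigma\|_1^2$, which Lemma~\ref{lemma1normPart1} places \emph{above} $\|\rho-\sigma\|_1^2$, not below. The ``reverse inequality'' you speculate about does not hold in general (e.g.\ take $\rho=\dya{0}$, $\sigma=\dya{1}$: then $\|\rho-\sigma\|_1^2=4$ while $(\Tr\rho+\Tr\sigma)^2-4F^2=4$, tight; but for $\sigma=2\dya{1}$ one gets $\|\rho-\sigma\|_1^2=9$ versus $(1+2)^2-0=9$ again tight; however for non-orthogonal states the gap can be strict in the direction Lemma~\ref{lemma1normPart1} states). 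There is no equality structure to exploit here.

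Your dilation fallback would work, but the paper's argument is far more direct and is the idea you are missing: take the Jordan decomposition $\rho-\sigma=Q-S$ with $Q,S\geq 0$ having orthogonal supports. Then $\|\rho-\sigma\|_1=\Tr Q+\Tr S$ exactly, the trace-nonincreasing property gives $\Tr Q+\Tr S\geq \Tr\EC(Q)+\Tr\EC(S)$, and finally $\Tr\EC(Q)+\Tr\EC(S)\geq \|\EC(Q)-\EC(S)\|_1=\|\EC(\rho)-\EC(\sigma)\|_1$. The last step is where the paper invokes Lemma~\ref{lemma1normPart1} (applied to $\EC(Q),\EC(S)$, dropping the fidelity term), though it is really just the triangle inequality for $\|\cdot\|_1$ on positive operators. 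Four lines, no dilation, no fidelity monotonicity.
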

\begin{proof}
Let $\rho - \sigma = Q - S$, where $Q\geq 0$ and $S\geq 0$, and $Q$ and $S$ have orthogonal support. Hence 
\begin{align}
\label{eqn1normlemmaPart12}\|\rho - \sigma \|_1 &= \|Q - S \|_1 \\
\label{eqn1normlemmaPart13}&=  \Tr Q +\Tr S \\
\label{eqn1normlemmaPart14}&\geq \Tr [\EC(Q)] +\Tr [\EC(S)] \\
\label{eqn1normlemmaPart15}&\geq \|\EC(Q) - \EC(S) \|_1 \\
\label{eqn1normlemmaPart16}&= \|\EC(\rho) - \EC(\sigma) \|_1\,.
\end{align}
Here, \eqref{eqn1normlemmaPart14} used the fact that $\EC$ is trace-nonincreasing. Also, \eqref{eqn1normlemmaPart15} follows directly from \eqref{eqn1normlemmaPart1}.
\end{proof}

Next we state a lemma, known as Fannes' inequality, that entropy is continuous. The proof can be found, e.g., in Ref.~\cite{Nielsen2000}. 
\begin{lemma}
\label{lemmaFannes}
Let $\rho\geq 0$ and $\sigma \geq 0$ be $n \times n$ positive semidefinite matrices, such that $\|\rho - \sigma\|_1 \leq \kappa \leq 1/e$. Then
\begin{align}
\label{eqn1normlemma2}
|H(\rho) - H(\sigma)| \leq \kappa \log (n /\kappa)\,.
\end{align}
\end{lemma}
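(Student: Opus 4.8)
\textbf{Proof proposal for Lemma~\ref{lemmaFannes} (Fannes' inequality).}
The plan is to reduce the bound on $|H(\rho)-H(\sigma)|$ to a statement about the eigenvalue spectra of $\rho$ and $\sigma$, and then to control the entropy difference using concavity of the scalar function $t\mapsto -t\log t$. First I would write $\rho = \sum_i p_i \dya{e_i}$ and $\sigma = \sum_i q_i \dya{f_i}$ in their respective eigenbases, with eigenvalue vectors $\vec{p}$ and $\vec{q}$ ordered decreasingly. The key preliminary fact is that the trace distance controls the $\ell^1$ distance of the \emph{ordered} spectra: one has $\sum_i |p_i - q_i| \leq \|\rho-\sigma\|_1 \leq \kappa$. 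This follows from a standard majorization / min-max argument (e.g.\ via Weyl's or Lidskii's inequalities for Hermitian matrices), and I would either cite it or give the short argument that for each $i$ the $i$-th largest eigenvalues differ by at most $\|\rho-\sigma\|_\infty \leq \|\rho-\sigma\|_1$, combined with the stronger $\ell^1$ bound from interlacing.

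Having reduced to spectra, I would define $\delta := \tfrac12\sum_i |p_i - q_i| \leq \kappa/2$ and estimate $|H(\rho)-H(\sigma)| = |\sum_i \eta(p_i) - \sum_i \eta(q_i)|$ where $\eta(t) = -t\log t$. The central estimate is the elementary inequality $|\eta(s) - \eta(t)| \leq \eta(|s-t|) + |s-t|\log$(something), or more cleanly: for $s,t\in[0,1]$ with $|s-t|\leq 1/2$ one has $|\eta(s)-\eta(t)| \leq \eta(|s-t|)$ when $|s-t|\leq 1/e$, using that $\eta$ is concave, $\eta(0)=0$, and monotone on $[0,1/e]$. Summing over $i$ and invoking concavity of $\eta$ together with $\sum_i |p_i - q_i| \leq \kappa$ and the fact that there are at most $n$ nonzero terms, a Jensen-type step gives $\sum_i \eta(|p_i-q_i|) \leq \kappa \log(n/\kappa)$ for $\kappa \leq 1/e$ — this is exactly where the hypothesis $\kappa \leq 1/e$ is used, to keep the arguments $|p_i-q_i|$ in the region where $\eta$ is increasing and where the bound $\eta(x) \leq x\log(n/x)$ behaves well after summation.

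The main obstacle, and the step I would be most careful about, is the passage from the per-coordinate bound $|\eta(p_i)-\eta(q_i)| \leq \eta(|p_i-q_i|)$ (valid only when $|p_i-q_i|$ and the values stay in the monotone regime) to the clean aggregate $\kappa\log(n/\kappa)$: one must handle coordinates where $|p_i - q_i|$ could in principle exceed $1/e$, and one must correctly apply concavity to the sum $\sum_i \eta(|p_i-q_i|)$ subject to $\sum_i|p_i-q_i|\leq\kappa$ over at most $n$ indices, which by Jensen is maximized when the mass is spread evenly, yielding $n\cdot\eta(\kappa/n) = \kappa\log(n/\kappa)$. Since $\kappa\leq 1/e$ forces $\kappa/n \leq 1/e$, each term sits in the increasing branch and the estimate closes. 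I would then remark that the statement is standard and reference Ref.~\cite{Nielsen2000} for the detailed bookkeeping rather than reproduce every inequality here.
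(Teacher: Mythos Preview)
The paper does not actually prove this lemma: it simply states ``The proof can be found, e.g., in Ref.~\cite{Nielsen2000}'' and moves on. Your proposal is therefore doing strictly more than the paper does, and your sketch is essentially the standard Fannes argument that appears in that reference: reduce to ordered spectra via a Lidskii/Hoffman--Wielandt type bound, apply the scalar estimate $|\eta(s)-\eta(t)|\leq\eta(|s-t|)$ coordinatewise, and then use concavity of $\eta$ (Jensen) to aggregate into $\kappa\log(n/\kappa)$.

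Two small points worth tightening. First, the per-coordinate bound $|\eta(s)-\eta(t)|\leq\eta(|s-t|)$ is the usual Fannes scalar lemma and holds for $s,t\in[0,1]$ with $|s-t|\leq 1/2$; the threshold $1/e$ is not needed there. Since $\sum_i|p_i-q_i|\leq\kappa\leq 1/e<1/2$, each individual $|p_i-q_i|$ is automatically at most $1/2$, so there are in fact no ``bad'' coordinates to handle separately. Second, the lemma as stated in the paper allows $\rho,\sigma$ to be merely positive semidefinite rather than trace-one; the scalar estimate requires $s,t\in[0,1]$, so strictly speaking one should note (as the paper's applications tacitly assume) that the eigenvalues lie in $[0,1]$, e.g.\ because the relevant matrices have trace at most one. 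With those caveats, your argument is correct and matches the cited textbook proof; since the paper itself defers to the reference, your closing remark that one would cite \cite{Nielsen2000} for the bookkeeping is exactly what the paper does.
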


Note that the right-hand side of \eqref{eqn1normlemma2} goes to zero as $\kappa\to 0$. Finally we state the continuity of our objective function.
\begin{lemma}
\label{lemmaContinuity}
Let $\rho$ and $\sigma$ be (normalized) density matrices such that $\|\rho - \sigma\|_1 \leq \kappa \leq 1/e$. Let $f(\rho)$ be defined as in \eqref{eqnPAEC4b},
\begin{align} \label{eqnPAEC4b_App}
f(\rho) = D(\hspace{2pt} \GC(\rho) || \ZC(\GC(\rho))\hspace{2pt})\,,
\end{align}
where $\GC$ is a completely positive trace-nonincreasing (CPTNI) map and $\ZC$ is a pinching quantum channel. Suppose $\GC(\rho)$ and $\GC(\sigma)$ are $d ' \times d '$. Then
\begin{align}
\label{eqn1normlemma3}
|f(\rho) - f(\sigma)| \leq 2\kappa \log (d' / \kappa)\,.
\end{align}
\end{lemma}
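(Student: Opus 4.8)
The plan is to express $|f(\rho) - f(\sigma)|$ in terms of a difference of two conditional von Neumann entropies and then apply the continuity results already established (Lemmas~\ref{lemma1norm} and \ref{lemmaFannes}). First I would observe that the relative entropy to a pinching channel is a conditional entropy: since $\ZC$ pinches in the $R$-basis, one has $D(\GC(\rho)\|\ZC(\GC(\rho))) = H(Z^R|\text{rest})_{\ZC(\GC(\rho))}$, or more simply $D(\tau\|\ZC(\tau)) = H(\ZC(\tau)) - H(\tau)$ for any state $\tau$ (this is the standard identity $D(\tau\|\ZC(\tau)) = -\Tr(\tau\log\tau)+\Tr(\tau\log\ZC(\tau)) = H(\ZC(\tau))-H(\tau)$, using that $\tau$ and $\ZC(\tau)$ have the same diagonal blocks so $\Tr(\tau\log\ZC(\tau)) = \Tr(\ZC(\tau)\log\ZC(\tau))$). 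Hence
\begin{align}
\label{eqnproofplanA}
f(\rho) = H(\ZC(\GC(\rho))) - H(\GC(\rho)),
\end{align}
and similarly for $\sigma$.

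Next I would bound the change in each of the two entropy terms separately by the triangle inequality:
\begin{align}
\label{eqnproofplanB}
|f(\rho)-f(\sigma)| \leq |H(\GC(\rho))-H(\GC(\sigma))| + |H(\ZC(\GC(\rho)))-H(\ZC(\GC(\sigma)))|.
\end{align}
For the first term, note $\GC$ is CPTNI, so by Lemma~\ref{lemma1norm} we get $\|\GC(\rho)-\GC(\sigma)\|_1 \leq \|\rho-\sigma\|_1 \leq \kappa \leq 1/e$, and since $\GC(\rho),\GC(\sigma)$ are $d'\times d'$, Fannes' inequality (Lemma~\ref{lemmaFannes}) gives $|H(\GC(\rho))-H(\GC(\sigma))| \leq \kappa\log(d'/\kappa)$. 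For the second term, $\ZC\circ\GC$ is also CPTNI (composition of a CPTNI map with a CPTP pinching channel), so again Lemma~\ref{lemma1norm} yields $\|\ZC(\GC(\rho))-\ZC(\GC(\sigma))\|_1 \leq \kappa \leq 1/e$; since $\ZC(\GC(\rho))$ is likewise $d'\times d'$, Fannes' inequality gives $|H(\ZC(\GC(\rho)))-H(\ZC(\GC(\sigma)))| \leq \kappa\log(d'/\kappa)$. Adding the two bounds gives $|f(\rho)-f(\sigma)| \leq 2\kappa\log(d'/\kappa)$, as claimed.

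The only genuinely delicate point is the entropy-difference identity \eqref{eqnproofplanA} and, relatedly, making sure the Fannes bound is being applied to matrices of the stated dimension $d'$ in both terms — the pinched state $\ZC(\GC(\rho))$ lives on the same space as $\GC(\rho)$, so this is fine, but it is worth stating explicitly. Everything else is a routine assembly of results already proved earlier in the excerpt; there is no real obstacle, only bookkeeping about which maps are trace-nonincreasing and about the dimension on which Fannes' inequality is invoked.
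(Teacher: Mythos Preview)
Your proposal is correct and follows essentially the same approach as the paper: rewrite $f(\rho)=H(\ZC(\GC(\rho)))-H(\GC(\rho))$, use Lemma~\ref{lemma1norm} to propagate the trace-distance bound through $\GC$ and $\ZC\circ\GC$, and apply Fannes' inequality (Lemma~\ref{lemmaFannes}) to each of the two entropy differences. Your justification of the identity \eqref{eqnproofplanA} is in fact more explicit than the paper's, which simply asserts it as ``straightforward to show.''
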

\begin{proof}
It is straightforward to show that
\begin{align} \label{eqn1normlemma4}
f(\rho) = H\left[\ZC(\GC(\rho))\right] - H\left[\GC(\rho)\right] \,.
\end{align}
Next note that Lemma~\ref{lemma1norm} implies that $\|\GC(\rho) - \GC(\sigma)\|_1 \leq \kappa$ and $\|\ZC(\GC(\rho)) - \ZC(\GC(\sigma))\|_1 \leq \kappa$. So from \eqref{eqn1normlemma4} we have
\begin{align} 
\label{eqn1normlemma5}
\big |f(\rho) - f(\sigma) \big | &=  \Big |  H \left[\ZC(\GC(\rho))\right] - H\left[ \GC(\rho)\right] - H \left[\ZC(\GC(\sigma))\right] + H\left[\GC(\sigma)\right] \Big |\\
\label{eqn1normlemma6}
  &\leq  \Big |  H \left[\ZC(\GC(\rho))\right] - H \left[\ZC(\GC(\sigma))\right] \Big | + \Big | H\left[ \GC(\rho)\right]  -  H\left[\GC(\sigma)\right]  \Big | \\
\label{eqn1normlemma7}
  &\leq  \kappa \log (d' /\kappa) + \kappa \log (d' /\kappa) \\
\label{eqn1normlemma8}
  &\leq  2\kappa \log (d' /\kappa)\,,
\end{align}
where \eqref{eqn1normlemma7} invoked Lemma~\ref{lemmaFannes}.
\end{proof}

\subsection{Continuity in $\epsilon$}\label{sctAppContinuityEpsilon}

Here we show another sort of continuity, namely, the continuity of $f_{\epsilon}(\rho)$ in $\epsilon$. First we need the following lemma.
\begin{lemma} \label{LemmaDelta1}
For any normalized density matrix $\rho$ and any completely positive trace-nonincreasing map (CPTNI) $\GC$, let $\GC_{\epsilon}(\rho) = \DC_{\epsilon}( \GC (\rho)) = (1-\epsilon)\GC (\rho) + \epsilon \id / d'$. Also, let $\ZC$ be a completely positive trace-preserving (CPTP) map. Then we have
\begin{align}
\label{eqnlemmadelta1}
\|\GC(\rho) - \GC_{\epsilon}(\rho)\|_1 &\leq  \epsilon (d' - 1)\,,\quad\text{and}\\
\label{eqnlemmadelta2}
\| \ZC(\GC(\rho)) - \ZC(\GC_{\epsilon}(\rho))\|_1 &\leq  \epsilon (d' -1)\,.
\end{align}
\begin{proof}
Note that \eqref{eqnlemmadelta2} follows from \eqref{eqnlemmadelta1} due to Lemma~\ref{lemma1norm}. So we just need to prove \eqref{eqnlemmadelta1}. Towards this end, we define
\begin{align}
\label{eqnlemmadelta12345a}
\Delta_{\epsilon}&:= \GC(\rho) - \GC_{\epsilon}(\rho) \\
\label{eqnlemmadelta12345b}
&= \GC(\rho) - ((1-\epsilon)*\GC(\rho)  +\epsilon \id /d')\\
\label{eqnlemmadelta12345c}
&= \epsilon*(\GC(\rho) -   \id /d' )=\epsilon \Delta
\end{align}
with $\Delta:=\GC(\rho) -   \id /d' $. Now note that $ 0 \leq \GC(\rho) \leq \id$, which implies
\begin{align}
\label{eqnlemmadelta111}
 -   \id /d' \leq \Delta  \leq (1-  1 / d') \id
\end{align}
Since $1/d' \leq (1-1/d')$, we have
\begin{align}
\label{eqnlemmadelta112}
 -(1-1/d')\id  \leq \Delta  \leq (1-  1 / d') \id\,.
\end{align}
This means that the $d'$ eigenvalues $\{\lambda_j\}$ of $\Delta$ each satisfy $| \lambda_j | \leq (1-  1 / d')$. Hence
\begin{align}
\label{eqnlemmadelta113}
\| \Delta \|_1  \leq d' -1\,.
\end{align}
Combining \eqref{eqnlemmadelta113} with \eqref{eqnlemmadelta12345c} proves \eqref{eqnlemmadelta1}.
\end{proof}
\end{lemma}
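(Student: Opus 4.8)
The plan is to prove the second bound \eqref{eqnlemmadelta2} as an immediate corollary of the first, and to reduce the first bound \eqref{eqnlemmadelta1} to an elementary eigenvalue estimate for a single auxiliary Hermitian matrix. For \eqref{eqnlemmadelta2}, I would note that $\ZC$, being CPTP, is in particular trace-nonincreasing, and that both $\GC(\rho)$ and $\GC_{\epsilon}(\rho)$ are positive semidefinite: the former because $\GC$ is completely positive and $\rho \geq 0$, the latter because it is a convex combination of $\GC(\rho)$ and $\id/d'$, both $\geq 0$. Hence Lemma~\ref{lemma1norm} applies verbatim with $\EC = \ZC$, giving $\|\ZC(\GC(\rho)) - \ZC(\GC_{\epsilon}(\rho))\|_1 \leq \|\GC(\rho) - \GC_{\epsilon}(\rho)\|_1$, so \eqref{eqnlemmadelta2} follows the instant \eqref{eqnlemmadelta1} is in hand.

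For \eqref{eqnlemmadelta1} I would first make the $\epsilon$-dependence explicit. Substituting $\GC_{\epsilon}(\rho) = (1-\epsilon)\GC(\rho) + \epsilon\,\id/d'$ yields
\begin{align}
\GC(\rho) - \GC_{\epsilon}(\rho) = \epsilon\left(\GC(\rho) - \id/d'\right) =: \epsilon \Delta \,,
\end{align}
so that $\|\GC(\rho) - \GC_{\epsilon}(\rho)\|_1 = \epsilon\,\|\Delta\|_1$ and the entire task collapses to showing $\|\Delta\|_1 \leq d'-1$. Since $\Delta$ is a $d' \times d'$ Hermitian matrix, its trace norm equals $\sum_j |\lambda_j|$ over its $d'$ real eigenvalues, so it suffices to control each $|\lambda_j|$ individually.

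The key step, and the main obstacle, is establishing the operator sandwich $0 \leq \GC(\rho) \leq \id$. The lower bound is immediate from complete positivity. The upper bound is the only genuinely nontrivial input: because $\GC$ is trace-nonincreasing and $\rho$ is normalized, $\Tr[\GC(\rho)] \leq \Tr[\rho] = 1$, and since $\GC(\rho)$ is positive semidefinite its largest eigenvalue is bounded above by its trace, hence by $1$, which is precisely $\GC(\rho) \leq \id$. Granting this sandwich, subtracting $\id/d'$ gives $-\id/d' \leq \Delta \leq (1 - 1/d')\id$; invoking $1/d' \leq 1 - 1/d'$ (valid for $d' \geq 2$, with the $d'=1$ case being trivial) symmetrizes this to $|\lambda_j| \leq 1 - 1/d'$ for every eigenvalue. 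Summing the $d'$ eigenvalues then yields $\|\Delta\|_1 \leq d'(1 - 1/d') = d' - 1$, and multiplying through by $\epsilon$ completes the proof of \eqref{eqnlemmadelta1}, and with it \eqref{eqnlemmadelta2}.
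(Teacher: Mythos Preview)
Your proof is correct and follows essentially the same route as the paper: reduce \eqref{eqnlemmadelta2} to \eqref{eqnlemmadelta1} via Lemma~\ref{lemma1norm}, factor out $\epsilon$ to write the difference as $\epsilon\Delta$ with $\Delta=\GC(\rho)-\id/d'$, use $0\le\GC(\rho)\le\id$ to bound each eigenvalue of $\Delta$ by $1-1/d'$, and sum. You even supply a justification the paper omits, namely that $\GC(\rho)\le\id$ follows because the largest eigenvalue of a positive semidefinite matrix is bounded by its trace, which is at most $1$ since $\GC$ is trace-nonincreasing.
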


Now we prove our desired continuity statement.
\begin{lemma} \label{Lemmafepsiloncontinuity}
Let $\epsilon$ be such that $0 < \epsilon \leq 1/[e(d'-1)]$. Let $\rho$ be any density matrix. Then
\begin{align} \label{eqnfepsiloncont1}
\abs{ f(\rho) - f_{\epsilon}(\rho)} \leq \zeta_{\epsilon} \,,
\end{align}
where $\zeta_{\epsilon} := 2 \epsilon (d' - 1) \log \frac{d'}{\epsilon (d' - 1)} $.
\begin{proof}
We begin by expanding the expression of interest as follows
\begin{align} \label{eqnfepsiloncont2}
\abs{ f(\rho) - f_{\epsilon}(\rho)} &= \abs{H(\ZC(\GC(\rho))) - H(\GC(\rho)) + H(\GC_{\epsilon}(\rho)) - H(\ZC(\GC_{\epsilon}(\rho)))} \\
\label{eqnfepsiloncont3}
&\leq \abs{ H(\ZC(\GC(\rho))) - H(\ZC(\GC_{\epsilon}(\rho)))} + \abs{H(\GC(\rho)) - H(\GC_{\epsilon}(\rho)) } \,.
\end{align}
Next we exploit Lemma~\ref{LemmaDelta1} and Lemma~\ref{lemmaFannes}. In particular, applying Lemma~\ref{lemmaFannes} with $\kappa = \epsilon (d' - 1)$ and $n = d'$ gives
\begin{align} 
\label{eqnfepsiloncont3}
\abs{H(\GC(\rho)) - H(\GC_{\epsilon}(\rho)) } &\leq \kappa \log (n /\kappa) = \epsilon (d' - 1) \log \frac{d'}{\epsilon (d' - 1)}\\
\label{eqnfepsiloncont4}
\abs{ H(\ZC(\GC(\rho))) - H(\ZC(\GC_{\epsilon}(\rho)))} &\leq  \epsilon (d' - 1) \log \frac{d'}{\epsilon (d' - 1)}\,.
\end{align}
Finally, we obtain
\begin{align} 
\label{eqnfepsiloncont5}
\abs{ f(\rho) - f_{\epsilon}(\rho)} \leq 2 \epsilon (d' - 1) \log \frac{d'}{\epsilon (d' - 1)} = \zeta_{\epsilon}\,.
\end{align}
\end{proof}
\end{lemma}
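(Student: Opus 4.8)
The plan is to reduce the statement to Fannes' inequality (Lemma~\ref{lemmaFannes}) once the trace distance between the perturbed and unperturbed channel outputs has been controlled by Lemma~\ref{LemmaDelta1}. First I would rewrite both $f$ and $f_{\epsilon}$ as differences of von Neumann entropies, using the identity $f(\rho) = H(\ZC(\GC(\rho))) - H(\GC(\rho))$ from \eqref{eqn1normlemma4}, which follows purely algebraically from the fact that $\ZC$ is a pinching channel and which does not require the argument of $\GC$ to be normalized. Applying the same identity with $\GC$ replaced by $\GC_{\epsilon}$ and then using the triangle inequality gives
\begin{align}
\abs{f(\rho) - f_{\epsilon}(\rho)} \leq{}& \abs{H(\GC(\rho)) - H(\GC_{\epsilon}(\rho))} \notag\\
&+ \abs{H(\ZC(\GC(\rho))) - H(\ZC(\GC_{\epsilon}(\rho)))}\,,
\end{align}
so the problem splits into two entropy-continuity estimates.

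Next I would invoke Lemma~\ref{LemmaDelta1}, which bounds both $\|\GC(\rho) - \GC_{\epsilon}(\rho)\|_1$ and $\|\ZC(\GC(\rho)) - \ZC(\GC_{\epsilon}(\rho))\|_1$ by $\kappa := \epsilon(d'-1)$. The hypothesis $\epsilon \leq 1/[e(d'-1)]$ is exactly what makes $\kappa \leq 1/e$, so Fannes' inequality (Lemma~\ref{lemmaFannes}) applies to each of the two entropy differences with ambient dimension $n = d'$, bounding each by $\kappa \log(d'/\kappa) = \epsilon(d'-1)\log\frac{d'}{\epsilon(d'-1)}$. Adding the two contributions gives $2\epsilon(d'-1)\log\frac{d'}{\epsilon(d'-1)} = \zeta_{\epsilon}$, as claimed.

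I do not expect a genuine obstacle: the argument is a short chain of estimates built entirely from results established earlier. The only points needing a line of care are (i) that the entropy-difference rewriting and Lemma~\ref{lemmaFannes} are being applied to the possibly sub-normalized operators $\GC(\rho)$ and $\GC_{\epsilon}(\rho)$, which is fine because the pinching identity is algebraic and Lemma~\ref{lemmaFannes} is already stated for arbitrary positive semidefinite matrices, and (ii) checking that the assumed bound on $\epsilon$ really does place $\kappa$ in the regime $\kappa\leq 1/e$ where Fannes holds. An essentially equivalent alternative route is to set $h(\tau):=D(\tau\,||\,\ZC(\tau))$, note that $f(\rho)=h(\GC(\rho))$ and $f_{\epsilon}(\rho)=h(\GC_{\epsilon}(\rho))$, and apply the continuity estimate of Lemma~\ref{lemmaContinuity} (with the identity map in the role of $\GC$ there) to the pair $\GC(\rho),\GC_{\epsilon}(\rho)$, whose trace distance is again controlled by Lemma~\ref{LemmaDelta1}.
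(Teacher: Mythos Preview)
Your proposal is correct and follows essentially the same route as the paper's proof: rewrite $f$ and $f_{\epsilon}$ as entropy differences, split via the triangle inequality, bound the two trace distances with Lemma~\ref{LemmaDelta1}, and apply Fannes' inequality (Lemma~\ref{lemmaFannes}) to each term. Your additional remarks about sub-normalization and the $\kappa\le 1/e$ check are apt and actually spell out points that the paper's proof leaves implicit.
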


\subsection{Proof of Theorem~\ref{thm2}}\label{sctAppThm2}

We first define a perturbed optimization problem. Let $\epsilon \in \mathbb{R}$ such that $0 < \epsilon < 1/(de)$. Then consider
\begin{align}
\alpha(\epsilon) := \min_{\rho \in \mathbf{S}} f_{\epsilon}(\rho)
\end{align}
where $f_{\epsilon}$ was defined in \eqref{eqnPerturbedfunction1} as
\begin{align}
f_{\epsilon}(\rho) =D(\GC_{\epsilon}(\rho) || \ZC( \GC_{\epsilon}(\rho)))\,.
\end{align}

Next we relate $\alpha(\epsilon)$ to $\alpha$ with the following lemma.
\begin{lemma} \label{thm2corollary}
Let $\epsilon$ be such that $0 < \epsilon \leq 1/[e(d'-1)]$. Then
\begin{align} \label{eqn:corollarythm2}
\abs{ \alpha - \alpha(\epsilon )} \leq \zeta_{\epsilon} \,,
\end{align}
where $\zeta_{\epsilon} := 2 \epsilon (d' - 1) \log \frac{d'}{\epsilon (d' - 1)} $.
\begin{proof}
Note that \eqref{eqn:corollarythm2} is equivalent to
\begin{align} \label{eqn:corollarythm22845}
- \zeta_\epsilon \leq \alpha - \alpha(\epsilon) \leq \zeta_\epsilon \,.
\end{align}

Now let $\alpha = f(\rho^*)$ and $\alpha(\epsilon) = f_{\epsilon}(\rho_{\epsilon}^*)$, where $\rho^* \in \mathbf{S}$ and $\rho_{\epsilon}^* \in \mathbf{S}$. We prove \eqref{eqn:corollarythm22845} in two steps. First we have
\begin{align}
\label{eqn:corollarythm226}
\alpha - \alpha(\epsilon) & = f(\rho^*) - f_{\epsilon}(\rho_{\epsilon}^*)\\
\label{eqn:corollarythm227}
&\leq  f(\rho_{\epsilon}^*) - f_{\epsilon}(\rho_{\epsilon}^*)\\
\label{eqn:corollarythm228}
&\leq  \zeta_\epsilon \,.
\end{align}
where the last line follows from Lemma~\ref{Lemmafepsiloncontinuity}.

Second we have
\begin{align}
\label{eqn:corollarythm229}
\alpha - \alpha(\epsilon) & = f(\rho^*) - f_{\epsilon}(\rho_{\epsilon}^*)\\
\label{eqn:corollarythm230}
&\geq  f(\rho^*) - f_{\epsilon}(\rho^*)\\
\label{eqn:corollarythm231}
&\geq -  \zeta_\epsilon \,.
\end{align}
where again the last line follows from Lemma~\ref{Lemmafepsiloncontinuity}. Taken together, \eqref{eqn:corollarythm228} and \eqref{eqn:corollarythm231} give the desired result. 
\end{proof}
\end{lemma}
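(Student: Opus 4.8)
The plan is to reduce this statement about the difference of two \emph{minimized} quantities to the pointwise continuity bound already established in Lemma~\ref{Lemmafepsiloncontinuity}, namely $\abs{f(\rho) - f_{\epsilon}(\rho)} \leq \zeta_{\epsilon}$ for every density matrix $\rho$ (valid precisely when $0 < \epsilon \leq 1/[e(d'-1)]$). The key observation is that $\alpha$ and $\alpha(\epsilon)$ are both minima over the \emph{same} feasible set $\mathbf{S}$, so I can name their respective minimizers $\rho^* \in \mathbf{S}$ with $\alpha = f(\rho^*)$ and $\rho_{\epsilon}^* \in \mathbf{S}$ with $\alpha(\epsilon) = f_{\epsilon}(\rho_{\epsilon}^*)$, and then play the two minimality properties against one another.

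I would split the estimate $\abs{\alpha - \alpha(\epsilon)} \leq \zeta_{\epsilon}$ into its two one-sided halves. For the upper bound $\alpha - \alpha(\epsilon) \leq \zeta_{\epsilon}$, the trick is to evaluate the ``wrong'' function at the ``wrong'' minimizer: since $\rho_{\epsilon}^*$ is feasible for the $f$-problem, minimality of $\rho^*$ gives $\alpha = f(\rho^*) \leq f(\rho_{\epsilon}^*)$, and therefore
\begin{align}
\alpha - \alpha(\epsilon) \leq f(\rho_{\epsilon}^*) - f_{\epsilon}(\rho_{\epsilon}^*) \leq \zeta_{\epsilon}\,,
\end{align}
where the last step is just Lemma~\ref{Lemmafepsiloncontinuity} applied at the single point $\rho_{\epsilon}^*$. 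Symmetrically, for the lower bound $\alpha - \alpha(\epsilon) \geq -\zeta_{\epsilon}$, I use that $\rho^*$ is feasible for the $f_{\epsilon}$-problem, so minimality of $\rho_{\epsilon}^*$ gives $\alpha(\epsilon) = f_{\epsilon}(\rho_{\epsilon}^*) \leq f_{\epsilon}(\rho^*)$, whence
\begin{align}
\alpha - \alpha(\epsilon) \geq f(\rho^*) - f_{\epsilon}(\rho^*) \geq -\zeta_{\epsilon}\,,
\end{align}
again invoking Lemma~\ref{Lemmafepsiloncontinuity}, this time at $\rho^*$. Combining the two displays yields $\abs{\alpha - \alpha(\epsilon)} \leq \zeta_{\epsilon}$.

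There is essentially no analytic obstacle here; all the real work (the Fannes-type entropy continuity and the trace-norm perturbation bound $\|\GC(\rho)-\GC_{\epsilon}(\rho)\|_1 \leq \epsilon(d'-1)$) has already been packaged into Lemma~\ref{Lemmafepsiloncontinuity}. The only point requiring a little care is the \emph{cross-evaluation} bookkeeping: one must be disciplined about which minimizer is substituted into which objective, and confirm that in each case the substituted point genuinely lies in $\mathbf{S}$ so that the relevant minimality inequality is legitimate. This is immediate because $\mathbf{S}$ is a common feasible set independent of $\epsilon$ (the perturbation modifies only the objective, via $\GC \mapsto \GC_{\epsilon}$, not the constraints). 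I should also note for completeness that the hypothesis $0 < \epsilon \leq 1/[e(d'-1)]$ is exactly the condition under which Lemma~\ref{Lemmafepsiloncontinuity} holds, so the same restriction propagates to this lemma without further assumption.
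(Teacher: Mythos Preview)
Your proposal is correct and follows essentially the same approach as the paper: you split into two one-sided inequalities, use the minimality of $\rho^*$ and $\rho_{\epsilon}^*$ over the common feasible set $\mathbf{S}$ to pass to a single evaluation point, and then apply Lemma~\ref{Lemmafepsiloncontinuity} at that point. The paper's proof is line-for-line the same argument, with your version being slightly more explicit about why the cross-evaluation step is legitimate.
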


Lemma~\ref{thm2corollary} connects $\alpha$ to $\alpha(\epsilon)$. So now we can proceed to lower bound $\alpha(\epsilon)$ to ultimately get a lower bound on $\alpha$.

The key observation is that $\alpha(\epsilon)$ corresponds precisely to $\alpha$ provided that one makes the substitution $\GC \to \GC_{\epsilon}$. Since this change simply involves substituting a different quantum channel for $\GC$, the lower bound that we previously found on $\alpha$ in Theorem~\ref{thm1} directly applies, with the appropriate substitution made. Recall that in Theorem~\ref{thm1} we found that, for any $\rho \in \mathbf{S}$ such that $\nabla f(\rho)$ exists,
\begin{align}
\label{eqnRecallTheorem1}
\alpha \geq \beta(\rho) = f(\rho)-\Tr(\rho^T \nabla f(\rho))+\max_{\vec{y} \in \mathbf{S}^*(\rho)} \vec{\gamma} \cdot \vec{y}\,,
\end{align}
where
\begin{align}
\label{eqnRecallTheorem12}
\mathbf{S^*}(\sigma) &:= \left\{ \vec{y} \in \mathbb{R}^n \mid \sum_i y_i \Gamma_i^T \leq \nabla f(\sigma) \right\}\,.
\end{align}

This result can be directly applied find a similar lower bound on $\alpha(\epsilon)$. One simply needs to substitute $\GC_{\epsilon}$ for $\GC$, which is equivalent to substituting $f_{\epsilon}$ for $f$. Hence we obtain
\begin{align}
\label{eqnalphaepsilonLB1}
\alpha(\epsilon) \geq \beta_{\epsilon}(\rho) := f_{\epsilon}(\rho)-\Tr(\rho^T \nabla f_{\epsilon}(\rho))+\max_{\vec{y} \in \mathbf{S}_{\epsilon}^*(\rho)} \vec{\gamma} \cdot \vec{y}\,,
\end{align}
where
\begin{align}
\label{eqnalphaepsilonLB2}
\mathbf{S}_{\epsilon}^*(\sigma) &:= \left\{ \vec{y} \in \mathbb{R}^n \mid \sum_i y_i \Gamma_i^T \leq \nabla f_{\epsilon}(\sigma) \right\}\,.
\end{align}
It is crucial to note that, while \eqref{eqnRecallTheorem1} applies only if $\nabla f(\rho)$ exists, \eqref{eqnalphaepsilonLB1} applies to all $\rho \in \mathbf{S}$ since $\nabla f_{\epsilon}(\rho)$ always exists.

Furthermore, we remark that the argument in Sec.~\ref{sctAppAequality} directly applies to $\alpha(\epsilon)$, so that we have
\begin{align}
\label{eqnalphaepsilonLB1equality}
\alpha(\epsilon) = \beta_{\epsilon}(\rho_{\epsilon}^*)
\end{align}
where $\rho_{\epsilon}^*$ is a state that satisfies $\alpha(\epsilon) = f_{\epsilon}(\rho_{\epsilon}^*)$.

Finally, we combine \eqref{eqnalphaepsilonLB1} with Lemma~\ref{thm2corollary}, specifically \eqref{eqn:corollarythm231}. This gives the following bound for any $\rho \in \mathbf{S}$:
\begin{align}
\alpha &\geq \alpha (\epsilon) - \zeta_{\epsilon}\\
&\geq \beta_{\epsilon}(\rho) - \zeta_{\epsilon}\,.
\end{align}

\section{Handling numerical imprecision}

\subsection{Imprecise representations}\label{impreciserepresentations}

As noted in the main text (Sec.~\ref{sctmainresult4}), it is impossible to provide exact floating-point representations of $\{\Gamma_i\}$ and $\{\gamma_i\}$. We address this impossibility by defining approximate representations, which we denote by $\{\tilde{\Gamma}_i\}$ and $\{\tilde{\gamma}_i\}$ respectively. Of course, we could define the set of states that satisfy the constraints associated with these approximate representations:
\begin{align}
\label{eqnStildeDef}
\tilde{\mathbf{S}}:= \{\rho\in \mathbf{H_{+}} \mid \Tr(\tilde{\Gamma}_i \rho) = \tilde{\gamma}_i, \forall i\}\,.
\end{align}
But as illustrated in Fig.~\ref{fgrImprecision}, $\tilde{\mathbf{S}}$ does not coincide with the true set of interest $\mathbf{S}$. To address this, we will expand the set $\tilde{\mathbf{S}}$ until it encompasses $\mathbf{S}$, as the set $\mathbf{S}_{\epsilon '}$ does in Fig.~\ref{fgrImprecision}.

We begin by relating the approximate and exact representations by
\begin{align}
\label{eqnimprecision1}
\tilde{\Gamma}_i=\Gamma_i+\delta\Gamma_i\quad \text{and}\quad\tilde{\gamma}_i=\gamma_i+\delta\gamma_i\,,
\end{align}
where
\begin{align}
\label{eqnimprecision2}
\norm{\delta\Gamma_i}_\text{HS} < \epsilon_1 \quad \text{and}\quad \abs{\delta\gamma_i} < \epsilon_2\,,
\end{align}
for all $i$. Here, the Hilbert-Schmidt norm is defined by $\norm{A}_\text{HS} := \sqrt{\Tr (A\ad A)}$. Determining the constants $\epsilon_1$ and $\epsilon_2$ is rather technical and depends heavily on how the approximate observables and expectation values are computed. However, if the mantissa of the underlying representation is increased in length (or arbitrary precision arithmetic is used) and appropriate numerical algorithms are applied, the approximate representations will become arbitrarily accurate. Hence, $\epsilon_1$ and $\epsilon_2$ may be made as small as needed.

We define the quantity
\begin{align}
\label{eqnimprecision3}
\epsilon_\text{rep} := \epsilon_1 + \epsilon_2\,,
\end{align}
which measures our overall uncertainty in the variable representation. 

We now prove a lemma that motivates our treatment of the imprecision. Namely, it allows us to upper bound the numerical constraint violation.
\begin{lemma}
\label{lemma2}
Let $\rho \in \mathbf{S}$ and let the definitions in \eqref{eqnimprecision1}, \eqref{eqnimprecision2}, and \eqref{eqnimprecision3} hold. Then
\begin{align}
\abs{\Tr(\tilde{\Gamma}_i\rho)-\tilde{\gamma}_i} < \epsilon_\text{rep} \,.
\end{align}
\begin{proof}
We can apply the triangle inequality, the Cauchy-Schwarz inequality and the fact that $\rho$ is a density matrix to obtain
\begin{align}
\abs{\Tr(\tilde{\Gamma}_i\rho)-\tilde{\gamma}_i} &= \abs{\Tr(\delta\Gamma_i\rho)-\delta\gamma_i} \\
&\leq \abs{\Tr(\delta\Gamma_i\rho)}+\abs{\delta\gamma_i} \\
&\leq \norm{\delta\Gamma_i}_\text{HS}\norm{\rho}_\text{HS}+\abs{\delta\gamma_i} \\
&\leq \norm{\delta\Gamma_i}_\text{HS}+\abs{\delta\gamma_i} \\
&< \epsilon_1 + \epsilon_2\,.
\end{align}
\end{proof}
\end{lemma}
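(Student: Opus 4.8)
The plan is to exploit the fact that membership $\rho\in\mathbf{S}$ means the \emph{exact} constraints hold, $\Tr(\Gamma_i\rho)=\gamma_i$, so that the residual against the \emph{approximate} constraints is controlled entirely by the perturbations $\delta\Gamma_i$ and $\delta\gamma_i$ introduced in \eqref{eqnimprecision1}. First I would substitute $\tilde{\Gamma}_i=\Gamma_i+\delta\Gamma_i$ and $\tilde{\gamma}_i=\gamma_i+\delta\gamma_i$ into $\Tr(\tilde{\Gamma}_i\rho)-\tilde{\gamma}_i$; the term $\Tr(\Gamma_i\rho)-\gamma_i$ vanishes by hypothesis, leaving $\Tr(\delta\Gamma_i\rho)-\delta\gamma_i$.

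Next I would apply the triangle inequality to get $\abs{\Tr(\delta\Gamma_i\rho)-\delta\gamma_i}\le\abs{\Tr(\delta\Gamma_i\rho)}+\abs{\delta\gamma_i}$, and then bound the first summand by Cauchy--Schwarz for the Hilbert--Schmidt inner product, $\abs{\Tr(\delta\Gamma_i\rho)}\le\norm{\delta\Gamma_i}_\text{HS}\norm{\rho}_\text{HS}$. This step is legitimate because $\delta\Gamma_i=\tilde{\Gamma}_i-\Gamma_i$ is Hermitian (a difference of Hermitian operators), so $\Tr(\delta\Gamma_i\rho)$ really is the inner product $\langle\delta\Gamma_i,\rho\rangle_\text{HS}$.

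The only step requiring a moment's thought---and it is hardly an obstacle---is the estimate $\norm{\rho}_\text{HS}\le 1$: since $\rho$ is a normalized density matrix its eigenvalues $\lambda_j$ lie in $[0,1]$, hence $\lambda_j^2\le\lambda_j$ and $\norm{\rho}_\text{HS}=\bigl(\sum_j\lambda_j^2\bigr)^{1/2}\le\bigl(\sum_j\lambda_j\bigr)^{1/2}=\sqrt{\Tr\rho}=1$. Inserting this together with the assumed bounds $\norm{\delta\Gamma_i}_\text{HS}<\epsilon_1$ and $\abs{\delta\gamma_i}<\epsilon_2$ from \eqref{eqnimprecision2}, and recalling $\epsilon_\text{rep}=\epsilon_1+\epsilon_2$, gives $\abs{\Tr(\tilde{\Gamma}_i\rho)-\tilde{\gamma}_i}<\epsilon_1+\epsilon_2=\epsilon_\text{rep}$, as claimed. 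In short, the proof is a direct chain of elementary inequalities with no genuinely hard part; the entire content is that a first-order perturbation of the constraint data perturbs the constraint residual by at most the sum of the two perturbation magnitudes, which is exactly what is needed to justify enlarging $\tilde{\mathbf{S}}$ to the relaxed set $\St_{\epsilon'}$ containing $\mathbf{S}$.
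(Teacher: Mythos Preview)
Your proof is correct and follows exactly the same route as the paper's: substitute the perturbations, use $\rho\in\mathbf{S}$ to cancel the exact constraint, apply the triangle inequality and Cauchy--Schwarz for the Hilbert--Schmidt inner product, and finish with $\norm{\rho}_\text{HS}\le 1$. If anything, you supply slightly more justification than the paper does for the last two steps.
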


Motivated by Lemma~\ref{lemma2}, we consider the set
\begin{align}
\label{eqnSepprimeDefAgain}
\St_{\epsilon'} := \left\{\rho \in \mathbf{H}_+ \middle | \abs{\Tr(\tilde{\Gamma}_i\rho)-\tilde{\gamma}_i} < \epsilon' , \forall i \right\}\,,
\end{align}
where it is clear that $\lim_{\epsilon' \to 0} \St_{\epsilon'} = \St$. Furthermore, if we choose $\epsilon' = \epsilon_\text{rep}$, we obtain the set
\begin{align}
\label{eqnSeprep}
\St_{\epsilon_\text{rep}}   \supseteq \mathbf{S} \,,
\end{align}
which follows directly from Lemma~\ref{lemma2}. Hence, $\St_{\epsilon_\text{rep}}$ encompasses both $\mathbf{S}$ and $\tilde{\mathbf{S}}$. So we could use $\St_{\epsilon_\text{rep}}$ as the basis for our optimization. However, we have not yet accounted imprecision in the numerical solver. As discussed in the next subsection, we may need to expand $\St_{\epsilon_\text{rep}}$ further to allow for solver imprecision.

\subsection{Imprecise solvers}

\begin{figure}[tbp]
\begin{center}
\vspace{0.2cm}
\begin{overpic}[width=8.1cm]{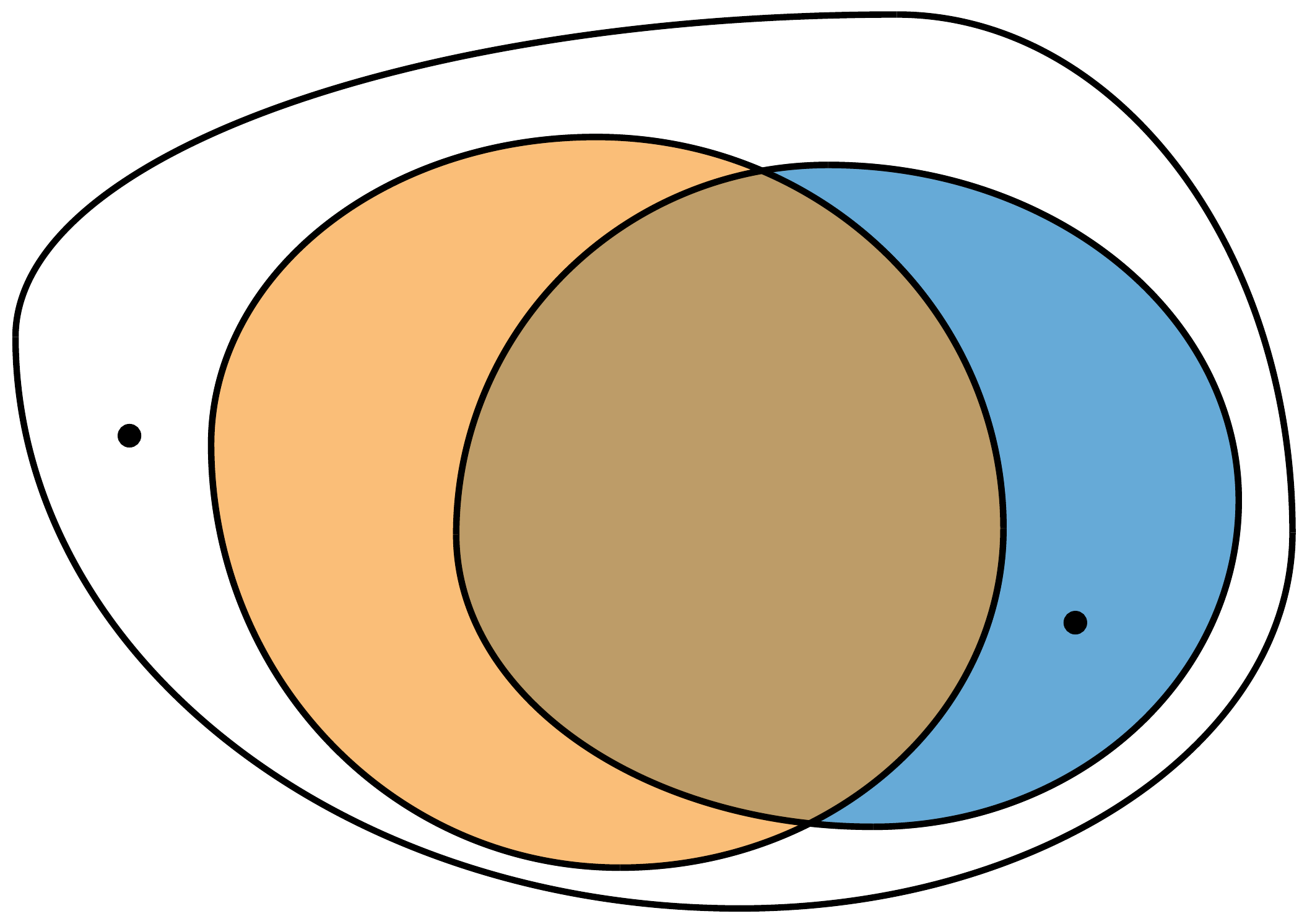}
\put(8.5,40){\normalsize $\tilde{\rho}'$}
\put(81,26){\normalsize $\rho^*$}
\put(83,38){\normalsize $\mathbf{S}$}
\put(25,38){\normalsize $\tilde{\mathbf{S}}$}
\put(65,62){\normalsize $\St_{\epsilon'}$}
\end{overpic}
\caption{Illustration of our method for handling numerical imprecision. Ideally we want to represent the set $\mathbf{S}$ in the computer. However, since the computer has finite precision, it stores $\tilde{\mathbf{S}}$ that approximates $\mathbf{S}$, where $\tilde{\mathbf{S}}$ is defined by \eqref{eqnStildeDef}. Furthermore, when we attempt to optimize over $\tilde{\mathbf{S}}$, the answer $\tilde{\rho}'$ returned by the computer may not even be in $\tilde{\mathbf{S}}$. We grow $\tilde{\mathbf{S}}$ into a new set $\St_{\epsilon'}$ that contains both $\mathbf{S}$ and $\tilde{\rho}'$, where $\tilde{\mathbf{S}}$ is defined in \eqref{eqnSepprimeDefAgain}. (Note that $\lim_{\epsilon ' \to 0} \St_{\epsilon'} = \tilde{\mathbf{S}}$.) With this expanded set $\St_{\epsilon'}$, we may then find a reliable lower bound by applying the technique discussed in the main text (Sec. \ref{sctmainresult4}).}
\label{fgrImprecision}
\end{center}
\end{figure}

Up to this point in our analysis, we have neglected the fact that no numerical solver is exact. In reality, the matrix $\tilde{\rho}$ returned by the solver may not be positive semidefinite or satisfy the approximate constraints defined by $\{\tilde{\Gamma}_i\}$ and $\{\tilde{\gamma}_i\}$. Here we present a method for addressing this situation.

First, we add a bit of the identity to $\tilde{\rho}$ to obtain a positive semidefinite matrix. Let $\lambda_\text{min}$ denote the smallest eigenvalue of $\tilde{\rho}$, then
\begin{align}
\tilde{\rho}' :=
\begin{cases}
\tilde{\rho} - \lambda_\text{min}\id \,, & \lambda_\text{min} < 0 \\
\tilde{\rho} \,, & \text{otherwise}
\end{cases}
\end{align}
is positive semidefinite. 

Second, we expand the set over which we optimize such that it encompasses $\tilde{\rho}'$, as illustrated in Fig.~\ref{fgrImprecision}. Let $\epsilon_\text{sol}$ be a positive real number such that
\begin{align} \label{eqnepsilonsolineq}
\abs{\Tr(\tilde{\Gamma}_i\tilde{\rho}')-\tilde{\gamma}_i} < \epsilon_\text{sol} \,
\end{align}
for all $i$. The quantity $\epsilon_\text{sol}$ describes how close $\tilde{\rho}'$ is to satisfying the approximate constraints provided that it is chosen to be as small as possible. The closer $\tilde{\rho}'$ is to satisfying the constraints, the smaller $\epsilon_\text{sol}$ will be.

We now have two quantities: $\epsilon_\text{rep}$ describes the representation precision and $\epsilon_\text{sol}$ describes the solver precision. We define the quantity
\begin{align}
\epsilon' = \max(\epsilon_\text{rep},\epsilon_\text{sol}) \,.
\end{align}
The reason for defining $\epsilon'$ is that it specifies how large of a set we need to contain both the imprecise solution returned by the solver and the exact solution. If we can construct such a set, then we may relax both Theorem~\ref{thm1} and Theorem~\ref{thm2} to obtain the computationally feasible Theorem~\ref{thm3}.

Recall the relaxed set of approximate density matrices defined in \eqref{eqnSepprimeDefAgain}, rewritten here for convenience:
\begin{align}
\St_{\epsilon'} = \left\{\rho \in \mathbf{H}_+ \middle | \abs{\Tr(\tilde{\Gamma}_i\rho)-\tilde{\gamma}_i} < \epsilon' , \forall i \right\}\notag \,.
\end{align}
From Lemma~\ref{lemma2}, it follows that $\mathbf{S} \subseteq \St_{\epsilon'}$, and \eqref{eqnepsilonsolineq} implies that $\tilde{\rho}' \in \St_{\epsilon'}$. This is illustrated in Fig.~\ref{fgrImprecision}. 

In summary, our method for handling numerical imprecision involves minimizing the objective function $f(\rho)$ over $\St_{\epsilon'}$. Because $\mathbf{S} \subseteq \St_{\epsilon'}$, our method will yield a reliable lower bound on $\alpha = \min_{\rho \in \mathbf{S}} f(\rho)$.

\subsection{Proof of Theorem~\ref{thm3}}\label{app:proofRelaxed}

Our proof strategy for Theorem~\ref{thm3} will be similar to that for Theorem~\ref{thm2}, in that we will consider a slightly perturbed optimization problem to ensure a well-defined gradient, and we will invoke continuity to argue that the perturbed problem is close to the problem of interest. The only difference is that now the problem of interest will allow for small violations of the constraints.

Of course, we are ultimately interested in lower bounding the quantity defined in \eqref{eqnPAEC4}, 
\begin{align}
\alpha &=  \min_{\rho \in \mathbf{S}} f(\rho) \,.
\end{align}
But due to numerical imprecision (discussed above), we must consider a slightly different problem,
\begin{align}
\alpha_{\epsilon'} &:=  \min_{\rho \in \St_{\epsilon'}} f(\rho) \,.
\end{align}
where $\St_{\epsilon'}$ was defined in \eqref{eqnrelaxedconstraints2}. Fortunately, we have that $\mathbf{S} \subseteq \St_{\epsilon'}$ (see previous subsection), which implies that
\begin{align}
\label{eqnalphaepprime}
\alpha \geq \alpha_{\epsilon'}\,.
\end{align}
Hence, any lower bound on $\alpha_{\epsilon'}$ is also a lower bound on $\alpha$. So let us now find a lower bound on $\alpha_{\epsilon'}$. 

As noted above, we will in fact consider a further perturbation on the optimization problem, to ensure a well-defined gradient. Again, this is analogous to what we did in Theorem~\ref{thm2}. This perturbed optimization problem has the form
\begin{align}
 \alpha_{\epsilon'} (\epsilon ) := \min_{\rho \in \St_{\epsilon'}   } f_{\epsilon}(\rho)\,,
\end{align}
where $f_{\epsilon}$ was defined in \eqref{eqnPerturbedfunction1}. Suppose that $\rho_{\epsilon \epsilon' }^* \in \St_{\epsilon'}$ achieves this optimization, i.e., $\alpha_{\epsilon'} (\epsilon ) = f_{\epsilon}(\rho_{\epsilon \epsilon' }^*)$. Next, we consider any $\rho  \in \St_{\epsilon'}$, and we invoke an argument identical to that used in Eqs.~\eqref{eqnproofstep3}-\eqref{eqnproofstep5} to write
\begin{align}\label{thm333ineq123}
\alpha_{\epsilon'} (\epsilon ) = f_{\epsilon }(\rho_{\epsilon \epsilon' }^*)   &\geq f_{\epsilon }(\rho )+ \Tr \left[ ( \rho_{\epsilon \epsilon' }^*  -  \rho  )^T\nabla f_{\epsilon }( \rho  ) \right] \\
\label{thm333ineq}
&\geq f_{\epsilon}(\rho )- \Tr\left[ \rho^T \nabla f_{\epsilon}(\rho )\right] + \min_{\sigma \in \St_{\epsilon'} } \Tr \left[\sigma^T \nabla f_{\epsilon}(\rho )\right]\,.
\end{align}

Next we rewrite the minimization in \eqref{thm333ineq} as a maximization via duality.
\begin{lemma}
\label{lemmaDoublePerturbation2}
Let $\rho  \in \St_{\epsilon'}$, then
\begin{align} \label{eqnlemmaDoublePerturbation2}
 \min_{\sigma \in \St_{\epsilon'}} \Tr \left[\sigma^T \nabla f_{\epsilon}(\rho )\right]  = \max_{\vec{x} \in \Sh_{\epsilon}^*(\rho)} \vec{v}(\vec{\tilde{\gamma}}, \epsilon') \cdot \vec{x} \,.
\end{align}
where $\vec{v}(\vec{\tilde{\gamma}}, \epsilon') : = (\vec{\tilde{\gamma}}+\epsilon') \oplus (-\vec{\tilde{\gamma}}+\epsilon')$ is a $2n$-dimensional vector, and
\begin{align} \label{eqnlemmaDoublePerturbation223}
\Sh_{\epsilon}^*(\sigma) &:= \left\{ \vec{x} \in \mathbb{R}^{2n} \mid \sum_{i=1}^n x_i (\overline{\Gamma}_i^+)^T + \sum_{i=1}^n x_{i+n} (\overline{\Gamma}_i^-)^T \leq \overline{\nabla f_{\epsilon}(\sigma)} \right\} \\
\overline{\nabla f_{\epsilon}(\sigma)} &:= \nabla f_{\epsilon}(\sigma) \oplus \mathbf{0} \oplus \mathbf{0} \\
\overline{\Gamma}_i^+ &:= \tilde{\Gamma}_i \oplus \mathbf{i} \oplus \mathbf{0}  \\
\overline{\Gamma}_i^- &:= -\tilde{\Gamma}_i \oplus \mathbf{0} \oplus  \mathbf{i}\,.
\end{align}
Here, $\mathbf{0}$ is the $n\times n$ matrix of zeros, and $\mathbf{i} = \dya{i}$ is the projector onto the $i$-th element of the standard basis in $n$-dimensions.
\begin{proof}
First we rewrite the minimization problem as
\begin{align}
\min_{\sigma \in \mathbf{H}} & \Tr(\sigma^T \nabla f_{\epsilon}(\rho)) \\
\text{s.t.} & \Tr(\tilde{\Gamma}_i\sigma) \leq \tilde{\gamma}_i + \epsilon' \notag \\
& \Tr(-\tilde{\Gamma}_i\sigma) \leq -\tilde{\gamma}_i + \epsilon' \notag \\
& \sigma \geq 0 \notag
\end{align}
Next, we utilize slack variables (e.g., see page 131 of \cite{Boyd2010}); the idea is to replace each inequality constraint with an equality constraint, and a nonnegativity constraint. Letting $\vec{a}$ and $\vec{b}$ denote slack variables, the problem becomes
\begin{align}
\min_{\sigma \in \mathbf{H}} & \Tr(\sigma^T \nabla f_{\epsilon}(\rho)) \\
\text{s.t.} & \Tr(\tilde{\Gamma}_i\sigma) + a_i = \tilde{\gamma}_i + \epsilon' \notag \\
& \Tr(-\tilde{\Gamma}_i\sigma) + b_i = -\tilde{\gamma}_i + \epsilon' \notag \\
& \sigma \geq 0 \notag \\
& \vec{a},\vec{b} \geq 0 \notag
\end{align}

Next we recast the problem so that there is again one positive semidefinite variable. First we define the following $n \times n$ diagonal matrices:
\begin{align}
\mathbf{a} = \text{diag}(\vec{a}), \quad \mathbf{b} = \text{diag}(\vec{b}), \quad \mathbf{0} = \text{diag}(\vec{0}),\quad \text{and }\mathbf{i} = \dya{i}\,,
\end{align}
where $\vec{0}$ is the $n$-dimensional zero vector, and $\dya{i}$ is the projector onto the $i$-th element of the standard $n$-dimensional basis. Next we define the block-diagonal matrices
\begin{align}
\overline{\sigma} &= \sigma \oplus \mathbf{a} \oplus \mathbf{b} \\
\overline{\nabla f_{\epsilon}(\rho)} &= \nabla f_{\epsilon}(\rho) \oplus \mathbf{0} \oplus \mathbf{0} \\
\overline{\Gamma}_i^+ &= \tilde{\Gamma}_i \oplus \mathbf{i} \oplus \mathbf{0}  \\
\overline{\Gamma}_i^- &= -\tilde{\Gamma}_i \oplus \mathbf{0} \oplus  \mathbf{i}\,.
\end{align}
The optimization problem then becomes
\begin{align}
\min_{\overline{\sigma} \in \mathbf{H}} & \Tr \left(\overline{\sigma}^T   \overline{\nabla f_{\epsilon}(\rho)} \right) \\
\text{s.t.} & \Tr \left(\overline{\Gamma}_i^\pm \overline{\sigma} \right) = \pm \tilde{\gamma}_i + \epsilon' \notag \\
& \overline{\sigma} \geq 0 \notag
\end{align}
This is a semidefinite program of a form identical to the one that appears in the proof of Theorem~\ref{thm1}. Duality theory, as discussed briefly in Sec.~\ref{sctAppAsdps}, yields the dual problem
\begin{align} \label{thm3dual}
\max_{\vec{x} \in \Sh_{\epsilon}^*(\rho)} \vec{v}(\vec{\tilde{\gamma}}, \epsilon') \cdot \vec{x} \,.
\end{align}
where $\vec{v}(\vec{\tilde{\gamma}}, \epsilon') : = (\vec{\tilde{\gamma}}+\epsilon') \oplus (-\vec{\tilde{\gamma}}+\epsilon')$ is a $2n$-dimensional vector.
\end{proof}
\end{lemma}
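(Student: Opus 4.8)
The plan is to recognize the left-hand side of \eqref{eqnlemmaDoublePerturbation2} as a linear semidefinite program with inequality constraints, recast it in the standard primal form of Sec.~\ref{sctAppAsdps}, and read off its dual. The one structural mismatch is that $\St_{\epsilon'}$ is cut out by two-sided inequalities rather than the equality constraints appearing in the standard form, so the first job is to eliminate the inequalities.

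First I would split each constraint $\abs{\Tr(\tilde{\Gamma}_i\sigma)-\tilde{\gamma}_i}\le\epsilon'$ into the pair $\Tr(\tilde{\Gamma}_i\sigma)\le\tilde{\gamma}_i+\epsilon'$ and $-\Tr(\tilde{\Gamma}_i\sigma)\le-\tilde{\gamma}_i+\epsilon'$, then introduce nonnegative slack variables $a_i,b_i\ge 0$ to turn these into equalities $\Tr(\tilde{\Gamma}_i\sigma)+a_i=\tilde{\gamma}_i+\epsilon'$ and $-\Tr(\tilde{\Gamma}_i\sigma)+b_i=-\tilde{\gamma}_i+\epsilon'$ (the standard slack-variable trick; see \cite{Boyd2010}). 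Next I would bundle $\sigma$ with the diagonal matrices $\mathbf a=\text{diag}(\vec a)$ and $\mathbf b=\text{diag}(\vec b)$ into a single block-diagonal positive semidefinite variable $\overline\sigma=\sigma\oplus\mathbf a\oplus\mathbf b$, and likewise form the cost matrix $\overline{\nabla f_{\epsilon}(\rho)}=\nabla f_{\epsilon}(\rho)\oplus\mathbf 0\oplus\mathbf 0$ and the constraint matrices $\overline{\Gamma}_i^{\pm}$ as in the lemma, so that the conditions $\overline\sigma\ge 0$ and $\Tr(\overline{\Gamma}_i^{\pm}\overline\sigma)=\pm\tilde{\gamma}_i+\epsilon'$ for all $i$ are exactly equivalent to $\sigma\ge 0$ with $\abs{\Tr(\tilde{\Gamma}_i\sigma)-\tilde{\gamma}_i}\le\epsilon'$. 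At that point the minimization is literally an instance of the standard primal in \eqref{eqnSDPform} with $2n$ equality constraints, and transcribing its dual from that template gives the maximization of $\vec v(\vec{\tilde{\gamma}},\epsilon')\cdot\vec x$ over $\vec x\in\mathbb R^{2n}$ subject to $\sum_{i}x_i(\overline{\Gamma}_i^{+})^T+\sum_{i}x_{i+n}(\overline{\Gamma}_i^{-})^T\le\overline{\nabla f_{\epsilon}(\rho)}$, i.e.\ the right-hand side of \eqref{eqnlemmaDoublePerturbation2}. (In the proof of Theorem~\ref{thm3} this $2n$-dimensional dual would then be rewritten in the $(\vec y,\vec z)$ form of \eqref{eqnthm3betaeeprime} through the linear change of variables $y_i=x_i-x_{i+n}$, $z_i=-(x_i+x_{i+n})$, but that is not needed for the lemma itself.)

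The remaining point, and the one I expect to need the most care, is that the lemma claims an \emph{equality}, so weak duality alone is not enough; I would need strong duality. I plan to get it from Slater's condition as stated in Sec.~\ref{sctAppAsdps}: the primal feasible set is nonempty because $\mathbf S\subseteq\St_{\epsilon'}$ and $\mathbf S$ contains a genuine density matrix (the corresponding slacks are then automatically nonnegative, so that $\overline\sigma$ is feasible), and strict dual feasibility can be arranged just as in the Theorem~\ref{thm1} argument of Sec.~\ref{sctAppAequality}: taking $\tilde{\Gamma}_1$ proportional to $\id$ and choosing $x_1$ sufficiently negative while keeping the other components of $\vec x$ small and negative makes the first diagonal block strictly below $\nabla f_{\epsilon}(\rho)$ and the two slack blocks strictly negative, so the dual-feasibility inequality holds strictly. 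With Slater verified, strong duality upgrades weak duality to the asserted equality. A minor technicality I would note in passing is that $\St_{\epsilon'}$ is written with strict inequalities while the SDP naturally uses non-strict ones; since the objective is linear and continuous the infimum is unchanged, so this does not affect the value.
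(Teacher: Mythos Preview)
Your proposal is correct and follows essentially the same route as the paper: split the two-sided inequalities, introduce slack variables, pack everything into a single block-diagonal PSD variable, and read off the dual of the resulting standard-form SDP. If anything, you are more careful than the paper, which simply invokes ``duality theory'' from Sec.~\ref{sctAppAsdps} without explicitly checking Slater's condition for the equality; your verification of strict dual feasibility and your remark about strict versus non-strict inequalities in the definition of $\St_{\epsilon'}$ are welcome additions.
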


Next we rewrite the optimization problem in a simpler way, as follows.

\begin{lemma}
\label{lemma_rewrite_simpler}
Let $\rho  \in \St_{\epsilon'}$, then
\begin{align} \label{eqnlemmaDoublePerturbation2_2}
  \max_{\vec{x} \in \Sh_{\epsilon}^*(\rho)} \vec{v}(\vec{\tilde{\gamma}}, \epsilon') \cdot \vec{x}  =   \max_{ (\vec{y},\vec{z}) \in \St_{\epsilon}^*(\rho)}  \left( \vec{\tilde{\gamma}} \cdot \vec{y}  - \epsilon' \sum_{i=1}^n z_i \right) \,,
\end{align}
where
\begin{align}
\St_{\epsilon}^*(\sigma) := \left\{ (\vec{y},\vec{z}) \in (\mathbb{R}^{n}, \mathbb{R}^{n}) \mid -\vec{z} \leq \vec{y} \leq \vec{z} ,\hspace{4pt} \sum_{i=1}^n  y_i   (\tilde{\Gamma}_i )^T  \leq \nabla f_{\epsilon}(\sigma) \right\} \,.
\end{align}
\begin{proof}
First, note that we can rewrite the set $\Sh_{\epsilon}^*(\rho)$ as follows
\begin{align}
\label{eqn_Sh_set}
\Sh_{\epsilon}^*(\sigma) := \left\{ \vec{x} \in \mathbb{R}^{2n} \mid \vec{x} \leq 0,\hspace{4pt} \sum_{i=1}^n ( x_i - x_{i+n} )  (\tilde{\Gamma}_i )^T  \leq \nabla f_{\epsilon}(\sigma) \right\} \,.
\end{align}
This formula is derived by breaking up the constraints in \eqref{eqnlemmaDoublePerturbation223} associated with different blocks. The first block gives the constraint
\begin{align}
\sum_{i=1}^n ( x_i - x_{i+n} )  (\tilde{\Gamma}_i )^T  \leq \nabla f_{\epsilon}(\sigma).
\end{align}
The second block gives the constraint $\sum_i^n x_i \dya{i} \leq 0$, and the third block gives the constraint $\sum_i^n x_{i+n} \dya{i} \leq 0$. These latter two constraints imply that $x_i \leq 0$ for all $i \in [1,2n]$, or in other words, that $\vec{x} \leq 0$.

Next we write $\vec{x} = \vec{x}_1 \oplus \vec{x}_2$ where $\vec{x}_1 \in \mathbb{R}^{n}$ and $\vec{x}_2 \in \mathbb{R}^{n}$. Then we define $\vec{y} := \vec{x}_1 - \vec{x}_2$ and $\vec{z} := -\vec{x}_1 - \vec{x}_2$. With these definitions, we rewrite the objective function as
\begin{align}
\vec{v}(\vec{\tilde{\gamma}}, \epsilon') \cdot \vec{x} &= (\vec{\tilde{\gamma}}+\epsilon') \cdot \vec{x}_1 + (-\vec{\tilde{\gamma}}+\epsilon') \cdot \vec{x}_2 = \vec{\tilde{\gamma}} \cdot \vec{y}  - \epsilon' \sum_{i=1}^n z_i\,.
\end{align}
Next note that we can rewrite the constraint $\vec{x} \leq 0$ as
\begin{align}
-\vec{z} \leq \vec{y} \leq \vec{z}\,.
\end{align}
Hence we can define the set
\begin{align}
\St_{\epsilon}^*(\sigma) := \left\{ (\vec{y},\vec{z}) \in (\mathbb{R}^{n}, \mathbb{R}^{n}) \mid -\vec{z} \leq \vec{y} \leq \vec{z} ,\hspace{4pt} \sum_{i=1}^n  y_i   (\tilde{\Gamma}_i )^T  \leq \nabla f_{\epsilon}(\sigma) \right\} \,,
\end{align}
and the optimization problem becomes
\begin{align}
\max_{ (\vec{y},\vec{z}) \in \St_{\epsilon}^*(\rho)}  \left( \vec{\tilde{\gamma}} \cdot \vec{y}  - \epsilon' \sum_{i=1}^n z_i \right) \,.
\end{align}
\end{proof}
\end{lemma}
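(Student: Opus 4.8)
The plan is to observe that the feasible set $\Sh_{\epsilon}^*(\rho)$ is built from block-diagonal operators sharing the same block structure, so the single semidefinite constraint defining it splits into one nontrivial matrix inequality plus two sign constraints, and then to transport the whole maximization through an explicit invertible linear change of variables onto the set $\St_{\epsilon}^*(\rho)$ while preserving the objective.

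First I would unpack the block structure. Since $\overline{\nabla f_{\epsilon}(\sigma)} = \nabla f_{\epsilon}(\sigma)\oplus\mathbf{0}\oplus\mathbf{0}$, $\overline{\Gamma}_i^+ = \tilde{\Gamma}_i\oplus\mathbf{i}\oplus\mathbf{0}$, and $\overline{\Gamma}_i^- = -\tilde{\Gamma}_i\oplus\mathbf{0}\oplus\mathbf{i}$ are all block-diagonal with the same block sizes, and since a block-diagonal Hermitian operator is positive semidefinite exactly when each of its blocks is, the constraint $\sum_{i} x_i (\overline{\Gamma}_i^+)^T + \sum_{i} x_{i+n}(\overline{\Gamma}_i^-)^T \leq \overline{\nabla f_{\epsilon}(\sigma)}$ is equivalent to three constraints read off block by block: on the first block, $\sum_i (x_i - x_{i+n})\tilde{\Gamma}_i^T \leq \nabla f_{\epsilon}(\sigma)$; on the second block, $\sum_i x_i \dya{i} \leq \mathbf{0}$, which (the $\dya{i}$ being orthogonal rank-one projectors) forces $x_i \leq 0$ for $i=1,\dots,n$; and on the third block, $\sum_i x_{i+n}\dya{i}\leq\mathbf{0}$, forcing $x_{i+n}\leq 0$ for $i=1,\dots,n$. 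Together the last two say $\vec{x}\leq 0$ componentwise; this is the rewritten form \eqref{eqn_Sh_set}.

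Next I would introduce the substitution. Writing $\vec{x} = \vec{x}_1\oplus\vec{x}_2$ with $\vec{x}_1,\vec{x}_2\in\mathbb{R}^n$, define $\vec{y} := \vec{x}_1-\vec{x}_2$ and $\vec{z} := -(\vec{x}_1+\vec{x}_2)$; this is a linear bijection of $\mathbb{R}^{2n}$ with inverse $\vec{x}_1 = (\vec{y}-\vec{z})/2$, $\vec{x}_2 = (-\vec{y}-\vec{z})/2$. Under it the objective becomes $\vec{v}(\vec{\tilde{\gamma}},\epsilon')\cdot\vec{x} = (\vec{\tilde{\gamma}}+\epsilon')\cdot\vec{x}_1 + (-\vec{\tilde{\gamma}}+\epsilon')\cdot\vec{x}_2 = \vec{\tilde{\gamma}}\cdot\vec{y} - \epsilon'\sum_i z_i$; the first-block constraint becomes $\sum_i y_i \tilde{\Gamma}_i^T \leq \nabla f_{\epsilon}(\sigma)$ since $x_i - x_{i+n} = y_i$; and $\vec{x}\leq 0$, i.e. $\vec{x}_1\leq 0$ and $\vec{x}_2\leq 0$, becomes $\vec{y}\leq\vec{z}$ and $-\vec{y}\leq\vec{z}$, i.e. $-\vec{z}\leq\vec{y}\leq\vec{z}$. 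Hence the bijection carries $\Sh_{\epsilon}^*(\rho)$ onto exactly $\St_{\epsilon}^*(\rho)$ while leaving the objective value unchanged, so the two maxima coincide, which is \eqref{eqnlemmaDoublePerturbation2_2}.

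The only delicate point is the first step: one must treat ``$\leq$'' as a positive-semidefiniteness statement and correctly extract what each diagonal block contributes — in particular that $\sum_i x_i \dya{i}\leq\mathbf{0}$ is equivalent to all those scalars being nonpositive — while keeping track of the blockwise transpose (each $\dya{i}$ being symmetric). Once that decomposition is in hand, the change of variables is routine bookkeeping and I do not anticipate any real obstacle; an alternative would be to prove the two maxima equal by exhibiting, in each direction, a feasible point with matching objective, but the explicit bijection is the cleanest route.
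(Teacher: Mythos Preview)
Your proposal is correct and follows essentially the same approach as the paper: first decompose the block-diagonal semidefinite constraint into the three blockwise constraints (yielding the matrix inequality on the first block and the sign conditions $\vec{x}\leq 0$ from the other two), then perform the same linear change of variables $\vec{y}=\vec{x}_1-\vec{x}_2$, $\vec{z}=-(\vec{x}_1+\vec{x}_2)$ to transport both the objective and the feasible set. Your version is in fact slightly more explicit than the paper's in noting that this change of variables is a bijection (with the stated inverse), which is what justifies equality of the two maxima rather than merely an inequality.
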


Combining \eqref{thm333ineq}, \eqref{eqnlemmaDoublePerturbation2}, and \eqref{eqnlemmaDoublePerturbation2_2} gives
\begin{align}\label{thm333ineq12233}
\alpha_{\epsilon'} (\epsilon ) = f_{\epsilon}(\rho_{\epsilon\epsilon' }^*)  \geq  f_{\epsilon}(\rho )- \Tr\left[ \rho^T \nabla f_{\epsilon}(\rho )\right]   +  \max_{ (\vec{y},\vec{z}) \in \St_{\epsilon}^*(\rho)} \left(  \vec{\tilde{\gamma}} \cdot \vec{y}  - \epsilon' \sum_{i=1}^n z_i  \right) \,.
\end{align}

Next we invoke continuity from Lemma~\ref{Lemmafepsiloncontinuity}, as we did in the proof of Theorem~\ref{thm2}, and we obtain the following inequality
\begin{align}
\label{eqn:thm3lowerboundineqoptimals67}
\alpha_{\epsilon '} - \alpha_{\epsilon '}(\epsilon) \geq f(\rho_{\epsilon\epsilon' }^*) - f_{\epsilon}(\rho_{\epsilon\epsilon' }^*) \geq - \zeta_{\epsilon} \,.
\end{align}

Finally, combining \eqref{eqnalphaepprime}, \eqref{thm333ineq12233}, and \eqref{eqn:thm3lowerboundineqoptimals67} gives the desired lower bound on $\alpha$ stated in \eqref{eqnthm3mainresult}, namely
\begin{align}
\label{eqn:thm3lowerboundineqoptimals67132}
\alpha &\geq \alpha_{\epsilon '}(\epsilon) - \zeta_{\epsilon}\\
 &\geq   f_{\epsilon}(\rho )- \Tr\left[ \rho^T \nabla f_{\epsilon}(\rho )\right]   +  \max_{ (\vec{y},\vec{z}) \in \St_{\epsilon}^*(\rho)}  \left( \vec{\tilde{\gamma}} \cdot \vec{y}  - \epsilon' \sum_{i=1}^n z_i \right) - \zeta_{\epsilon}\,.
\end{align}

\section{Tightness}\label{sctAppTightness}

In this section we show that the lower bound in Theorem~\ref{thm3} is tight. We want to show that as $\epsilon \to 0$ and $\epsilon' \to 0$, the minimizer $\rho^*_{\epsilon\epsilon'}$ of $f_{\epsilon}$ over $\St_{\epsilon'}$ satisfies
\begin{align} \label{eqnlimittoshowtight}
\beta_{\epsilon\epsilon'}(\rho^*_{\epsilon\epsilon'})-\zeta_\epsilon \to \alpha
\end{align}
where the left-hand side of \eqref{eqnlimittoshowtight} is the lower bound produced by Theorem~\ref{thm3}. We state this as the following proposition.

\begin{proposition}
Let $\rho^*_{\epsilon\epsilon'}$ be the minimizer of $f_{\epsilon}$ over $\St_{\epsilon'}$. Then
\begin{align}
\lim_{\epsilon,\epsilon' \to 0} \beta_{\epsilon\epsilon'}(\rho^*_{\epsilon\epsilon'})-\zeta_\epsilon = \alpha \,.
\end{align}

\begin{proof}
Taking the limit of the lower bound in Theorem~\ref{thm3} as $\epsilon' \to 0$ we recover the bound in Theorem~\ref{thm2}. That is
\begin{align}
\label{eqnlimittoshowtight1234}
\lim_{\epsilon,\epsilon' \to 0} \beta_{\epsilon\epsilon'}(\rho^*_{\epsilon\epsilon'})-\zeta_\epsilon = \lim_{\epsilon \to 0} \beta_{\epsilon} ( \rho^*_{\epsilon} )-\zeta_\epsilon \,,
\end{align}
where
\begin{align}
\rho^*_{\epsilon\epsilon'} := \arg \min_{\rho \in \St_{\epsilon'}} f_{\epsilon}(\rho)\\
\rho^*_{\epsilon} := \arg \min_{\rho \in \mathbf{S}} f_{\epsilon}(\rho)\,,
\end{align}
and in \eqref{eqnlimittoshowtight1234} we used the fact that $\lim_{\epsilon' \to 0} \rho^*_{\epsilon\epsilon'} = \rho^*_{\epsilon}$.

Now by \eqref{eqnalphaepsilonLB1equality}
\begin{align}
\lim_{\epsilon \to 0} \beta_{\epsilon} ( \rho^*_{\epsilon} )-\zeta_\epsilon &= \lim_{\epsilon \to 0} \alpha(\epsilon)  - \zeta_\epsilon \,.
\end{align}
This gives
\begin{align}
\lim_{\epsilon,\epsilon' \to 0} \beta_{\epsilon\epsilon'}(\rho^*_{\epsilon\epsilon'})-\zeta_\epsilon &= \lim_{\epsilon \to 0} \alpha(\epsilon)  - \zeta_\epsilon \,.
\end{align}
Finally, Lemma~\ref{thm2corollary} implies that
\begin{align}
\lim_{\epsilon,\epsilon' \to 0} \beta_{\epsilon\epsilon'}(\rho^*_{\epsilon\epsilon'})-\zeta_\epsilon = \alpha \,.
\end{align}
\end{proof}

\end{proposition}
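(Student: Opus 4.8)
The plan is to compute the limit by an iterated argument, first sending $\epsilon' \to 0$ with $\epsilon$ held fixed, and then sending $\epsilon \to 0$, so that at each stage the relevant quantity collapses onto something we have already controlled. As a sanity check on the direction of the final equality, note that Theorem~\ref{thm3} applied to $\rho^*_{\epsilon\epsilon'}\in\St_{\epsilon'}$ gives $\alpha \geq \beta_{\epsilon\epsilon'}(\rho^*_{\epsilon\epsilon'}) - \zeta_\epsilon$ for \emph{all} $\epsilon,\epsilon'$, so only the reverse (liminf) inequality in the limit requires work; the chain of equalities below delivers it directly.

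For the inner limit, I would fix $\epsilon$ and use that the feasible sets $\St_{\epsilon'}$ decrease to $\mathbf{S}$ as $\epsilon' \to 0$, while $f_{\epsilon}$ is continuous on the whole cone of positive operators (indeed continuously differentiable, by Lemma~\ref{lemma_gradient_exists}). A standard compactness/epigraphical-convergence argument then shows $\alpha_{\epsilon'}(\epsilon) = \min_{\St_{\epsilon'}} f_{\epsilon} \to \min_{\mathbf{S}} f_{\epsilon} = \alpha(\epsilon)$, and that every accumulation point of $\rho^*_{\epsilon\epsilon'}$ is a minimizer $\rho^*_{\epsilon}$ of $f_{\epsilon}$ over $\mathbf{S}$. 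Inspecting the definition \eqref{eqnthm3betaeeprime} of $\beta_{\epsilon\epsilon'}$, the explicit $\epsilon'$-dependence appears only in the term $-\epsilon' \sum_i z_i$, which vanishes as $\epsilon' \to 0$, and in the feasible set $\St_\epsilon^*$ of the inner SDP, whose value depends continuously on $\nabla f_{\epsilon}(\rho)$ (Slater's condition holds, as in Sec.~\ref{sctAppAequality}) and hence continuously on $\rho$ since $\nabla f_{\epsilon}$ exists everywhere by Lemma~\ref{lemma_gradient_exists}. The remaining pieces $f_{\epsilon}(\rho) - \Tr(\rho^T \nabla f_{\epsilon}(\rho))$ are likewise continuous in $\rho$. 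Combining these facts gives $\lim_{\epsilon'\to 0}\bigl[\beta_{\epsilon\epsilon'}(\rho^*_{\epsilon\epsilon'}) - \zeta_\epsilon\bigr] = \beta_{\epsilon}(\rho^*_{\epsilon}) - \zeta_\epsilon$.

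Next I would invoke the equality case already established for the perturbed problem, Eq.~\eqref{eqnalphaepsilonLB1equality}, which is just the Theorem~\ref{thm1} equality argument with $\GC \to \GC_\epsilon$: since $\rho^*_\epsilon$ attains $\alpha(\epsilon) = f_\epsilon(\rho^*_\epsilon)$, we have $\beta_\epsilon(\rho^*_\epsilon) = \alpha(\epsilon)$ (in particular $\beta_\epsilon$ takes the common value $\alpha(\epsilon)$ on every minimizer, which disposes of any non-uniqueness of $\rho^*_\epsilon$). Hence the inner limit equals $\alpha(\epsilon) - \zeta_\epsilon$. Finally, sending $\epsilon \to 0$, Lemma~\ref{thm2corollary} gives $|\alpha - \alpha(\epsilon)| \le \zeta_\epsilon$, and $\zeta_\epsilon = 2\epsilon(d'-1)\log\frac{d'}{\epsilon(d'-1)} \to 0$ as $\epsilon \to 0^+$; therefore $\alpha(\epsilon) - \zeta_\epsilon \to \alpha$, and chaining the two limits yields $\lim_{\epsilon,\epsilon'\to 0}\beta_{\epsilon\epsilon'}(\rho^*_{\epsilon\epsilon'}) - \zeta_\epsilon = \alpha$.

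The main obstacle I expect is the inner limit $\epsilon' \to 0$: rigorously justifying $\alpha_{\epsilon'}(\epsilon) \to \alpha(\epsilon)$ together with the convergence of the minimizing attacks over the shrinking sets $\St_{\epsilon'}$. This is the one place where one genuinely uses topology rather than substitution into earlier results — a $\Gamma$-convergence / epigraph argument using compactness of the set of density operators and continuity of $f_\epsilon$, handled along subsequences if the minimizer is not unique (every accumulation point being a minimizer of $f_\epsilon$ over $\mathbf{S}$, on all of which $\beta_\epsilon = \alpha(\epsilon)$). Everything else reduces to earlier lemmas or to the elementary limit $\epsilon\log(1/\epsilon)\to 0$.
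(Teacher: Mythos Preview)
Your proposal is correct and follows essentially the same route as the paper: an iterated limit first in $\epsilon'$ then in $\epsilon$, using $\rho^*_{\epsilon\epsilon'}\to\rho^*_{\epsilon}$, then Eq.~\eqref{eqnalphaepsilonLB1equality} to identify $\beta_\epsilon(\rho^*_\epsilon)=\alpha(\epsilon)$, and finally Lemma~\ref{thm2corollary} to pass to $\alpha$. If anything you are more careful than the paper, which simply asserts $\lim_{\epsilon'\to 0}\rho^*_{\epsilon\epsilon'}=\rho^*_\epsilon$ without the compactness/epigraphical justification or the non-uniqueness discussion you flag.
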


Thus, given suitably small $\epsilon,\epsilon'$ it follows that the bound produced by Theorem~\ref{thm3} applied to a near-optimal state is arbitrarily tight.

\section{Examples}\label{sctAppExamples}

\subsection{Efficiency Mismatch}\label{sctAppExamples1}

For the BB84 protocol with detector efficiency mismatch, we model it as an entanglement-based protocol. We write Alice's POVM as
\begin{align}
P^A_1 = p_z \dya{0}\,,\quad P^A_2 = p_z \dya{1}\,,\quad P^A_3 = (1-p_z) \dya{+}\,,\quad P^A_4 = (1-p_z) \dya{-}
\end{align}
where $\{\ket{0},\ket{1}\}$ is the $z$-basis on a qubit, and $\ket{\pm} = 1/\sqrt{2}(\ket{0}\pm \ket{1})$. Here $p_z$ denotes the probability for Alice to measure in the $z$-basis. For our numerics, we chose $p_z \approx 1$, corresponding to using the $z$-basis most of the time.

We model Bob's system as a qutrit, where the one-photon subspace is modeled as a qubit subspace, and the third dimension is the vacuum. This third dimension is incorporated because of detector inefficiency, which may cause a no-click event. Bob's POVM elements associated with detecting a photon are given by
\begin{align}
P^B_1 = p_z \dya{0}\oplus 0\,,\quad P^B_2 = p_z \eta \dya{1}\oplus 0\,,\quad P^B_3 = (1-p_z) \dya{+}\oplus 0\,,\quad P^B_4 = (1-p_z) \eta\dya{-}\oplus 0\,,
\end{align}
where the direct sum is used here to embed qubit operators inside a qutrit Hilbert space. Note that $P^B_2$ and $P^B_4$ have a factor of $\eta$ due to detector inefficiency. (We assume one detector has perfect efficiency, while the other has efficiency $\eta$.) The last of Bob's POVM elements corresponds to a no-click event,
\begin{align}
P^B_5 = \id - \sum_{j=1}^4 P^B_j\,.
\end{align}

To generate the constraints in \eqref{eqnconstraints}, we simulate the data using a depolarizing channel with depolarizing probability~$p$,
\begin{align} \label{eqndepolarizingchannel}
\EC_{\text{dep}}(\rho) = (1-p)\rho + p \id / 2\,.
\end{align}
We consider the bipartite state generated from applying this channel to half of a maximally entangled state $\ket{\Phi}$,
\begin{align}
\rho_{AB}^{\text{sim}} = (\IC\ot \EC_{\text{dep}})(\dya{\Phi})\,,
\end{align}
and we emphasize that this state is only used to simulate experimental data. To obtain the constraints in \eqref{eqnconstraints}, we compute
\begin{align}
p_{jk} = \Tr ((P^A_j \ot P^B_k) \rho_{AB}^{\text{sim}} )\,.
\end{align}

Now consider Alice's and Bob's announcements. For sifting purposes, Alice announces her basis, and so \eqref{eqnkrausalice} becomes  
\begin{align} \label{eqnkrausaliceEM}
K^A_0 &=    \sqrt{ P^A_{1 }} \ot \ket{0}_{\At} \ot \ket{0}_{\Ab} + \sqrt{ P^A_{2 }} \ot \ket{0}_{\At} \ot \ket{1}_{\Ab}\\
K^A_1 &=    \sqrt{ P^A_{3 }} \ot \ket{1}_{\At} \ot \ket{0}_{\Ab} + \sqrt{ P^A_{4 }} \ot \ket{1}_{\At} \ot \ket{1}_{\Ab}\,.
\end{align}
Likewise Bob announces his basis and he announces whether he got a click or not. We can model this with three Kraus operators as follows
\begin{align} \label{eqnkrausbobEM}
K^B_0 &=    \sqrt{ P^B_{1 }} \ot \ket{0}_{\Bt} \ot \ket{0}_{\Bb} + \sqrt{ P^B_{2 }} \ot \ket{0}_{\Bt} \ot \ket{1}_{\Bb}\\
K^B_1 &=    \sqrt{ P^B_{3 }} \ot \ket{1}_{\Bt} \ot \ket{0}_{\Bb} + \sqrt{ P^B_{4 }} \ot \ket{1}_{\Bt} \ot \ket{1}_{\Bb}\\
K^B_2 &=    \sqrt{ P^B_{5 }} \ot \ket{2}_{\Bt} \ot \ket{0}_{\Bb} \,.
\end{align}
Next we consider the post-selection. Events where Bob does not receive a click, or where Alice and Bob use different bases, are discarded. Hence, \eqref{eqnstateafterannounce22} becomes
\begin{align} \label{eqnstateafterannounce22EM}
 \Pi =  \dya{0}_{\At} \ot \dya{0}_{\Bt}+ \dya{1}_{\At} \ot \dya{1}_{\Bt}   \,.
\end{align}

Finally, consider the isometry $V$ associated with the key map defined in \eqref{eqnstateafterannounce44}. We can define the key map such that Alice stores $0$ ($1$) in her key when she obtains outcome $P^A_1$ or $P^A_3$ ($P^A_2$ or $P^A_4$). This gives 
\begin{align} \label{eqnstateafterannounce44EM}
V = \ket{0}_R \ot \dya{0}_{\Ab} + \ket{1}_R \ot \dya{1}_{\Ab}\,,
\end{align}
with identity acting on all other subsystems. The above expressions allow one to define $\GC$ in \eqref{eqnfrho235431}, and hence define the optimization problem.

\subsection{Trojan-horse attack}\label{sctAppExamples2}

We model the BB84 protocol under a Trojan-horse attack as a prepare-and-measure protocol with sifting. As discussed in Sec.~\ref{sctqkdframework}, we treat this by constructing the source-replacement state,
\begin{align}
\ket{\psi}_{AA'} = \sqrt{\frac{p_z}{2}}\ket{0}\ket{\phi_{z+}} + \sqrt{\frac{p_z}{2}}\ket{1}\ket{\phi_{z-}} + \sqrt{\frac{1-p_z}{2}}\ket{2}\ket{\phi_{x+}} + \sqrt{\frac{1-p_z}{2}}\ket{3}\ket{\phi_{x-}}
\end{align}
where $\{\ket{\phi_{z\pm}},\ket{\phi_{x\pm}}\}$ are the signal states specified in \eqref{eqntrojansignal1}-\eqref{eqntrojansignal4}. For high-efficiency sifting~\cite{Lo2004}, we bias the probability distribution so that the $z$-basis is used most of the time, i.e., $p_z \approx 1$. Within this framework, Alice prepares her signal states by acting with a POVM on register system $A$, with POVM elements
\begin{align}
P^A_1 = \dya{0}\,,\quad P^A_2 = \dya{1}\,,\quad P^A_3 = \dya{2}\,,\quad P^A_4 = \dya{3} \,.
\end{align}
Bob measures in either the $z$- or $x$-basis via the following POVM
\begin{align}
P^B_1 = p_z\dya{z_+}\,,\quad P^B_2 = p_z\dya{z_-}\,,\quad P^B_3 = (1-p_z)\dya{x_+}\,,\quad P^B_4 = (1-p_z)\dya{x_-} \,,
\end{align}
where for simplicity we set the $p_z$ appearing in Bob's measurement to be the same value as that used for Alice's signal states.

Next we consider data simulation for the purpose of formulating the constraints in \eqref{eqnconstraints}. We model Eve's attack as a depolarizing channel \eqref{eqndepolarizingchannel} with depolarizing probability $p$. (Note that $p = 2Q$, where $Q$ is the error rate plotted in Fig.~\ref{fgrTrojan}.) Applying this channel to the state $\ket{\psi}_{AA'}$ gives
\begin{align}
\rho_{AB}^{\text{sim}} = (\IC\ot \EC_{\text{dep}})(\dya{\psi}_{AA'})\,.
\end{align}
To obtain the constraints in \eqref{eqnconstraints}, we compute
\begin{align} \label{eqntrojanconstraint1}
p_{jk} = \Tr ((P^A_j \ot P^B_k) \rho_{AB}^{\text{sim}} )\,.
\end{align}
Since Alice's density operator is fixed, we add the additional constraints specified by \eqref{eqnconstraints22}.

Now we consider the announcements made by Alice and Bob. Alice announces her choice of basis, so \eqref{eqnkrausalice} becomes
\begin{align}
K^A_0 &=    \sqrt{ P^A_{1 }} \ot \ket{0}_{\At} \ot \ket{0}_{\Ab} + \sqrt{ P^A_{2 }} \ot \ket{0}_{\At} \ot \ket{1}_{\Ab}\\
K^A_1 &=    \sqrt{ P^A_{3 }} \ot \ket{1}_{\At} \ot \ket{0}_{\Ab} + \sqrt{ P^A_{4 }} \ot \ket{1}_{\At} \ot \ket{1}_{\Ab}\,.
\end{align}
(We remark that, in this case, introducing the additional register system $\Ab$ is redundant since the key information can be read off directly from system $A$, but we do it here for completeness.)

Similarly, Bob announces his choice of basis, so \eqref{eqnkrausbob} becomes
\begin{align}
K^B_0 &= \sqrt{ P^B_{1 }} \ot \ket{0}_{\Bt} \ot \ket{0}_{\Bb} + \sqrt{ P^B_{2 }} \ot \ket{0}_{\Bt} \ot \ket{1}_{\Bb}\\
K^B_1 &= \sqrt{ P^B_{3 }} \ot \ket{1}_{\Bt} \ot \ket{0}_{\Bb} + \sqrt{ P^B_{4 }} \ot \ket{1}_{\Bt} \ot \ket{1}_{\Bb}\,.
\end{align}

For the post-selection, Alice and Bob discard events where they measure in different bases. So \eqref{eqnstateafterannounce22} becomes
\begin{align}
\Pi =  \dya{0}_{\At} \ot \dya{0}_{\Bt}+ \dya{1}_{\At} \ot \dya{1}_{\Bt}   \,.
\end{align}

Finally, consider the isometry $V$ associated with the key map defined in \eqref{eqnstateafterannounce44}. We can define the key map such that Alice stores $0$ ($1$) in her key when she obtains outcome $P^A_1$ or $P^A_3$ ($P^A_2$ or $P^A_4$). This gives 
\begin{align}
V = \ket{0}_R \ot \dya{0}_{\Ab} + \ket{1}_R \ot \dya{1}_{\Ab}\,,
\end{align}
with identity acting on all other subsystems. The above expressions allow one to define $\GC$ in \eqref{eqnfrho235431}, and hence define the optimization problem.

\subsection{BB84 protocol with phase-coherent signal states}\label{sctAppExamples3}

We model the BB84 protocol with phase-coherent signal states as a prepare-and-measure protocol with sifting, similar to how we modeled the Trojan-horse attack above. We apply the source-replacement scheme as described in Sec.~\ref{sctqkdframework}, with the state $\ket{\psi}_{AA'}$ from~\eqref{eqnSourceReplace94} given by 
\begin{align}
\ket{\psi}_{AA'} = \sqrt{\frac{p_z}{2}}\ket{0}\ket{\phi_{z+}} + \sqrt{\frac{p_z}{2}}\ket{1}\ket{\phi_{z-}} + \sqrt{\frac{1-p_z}{2}}\ket{2}\ket{\phi_{x+}} + \sqrt{\frac{1-p_z}{2}}\ket{3}\ket{\phi_{x-}}
\end{align}
where $\{\ket{\phi_{z\pm}},\ket{\phi_{x\pm}}\}$ are specified in \eqref{eqnlpsignal1}-\eqref{eqnlpsignal4}. Here, $p_z$ denotes the probability of Alice preparing a state in the $z$-basis, and it is biased to be close to one.

Alice's POVM acts on the register system $A$ with the standard basis elements, namely
\begin{align}
P^A_1 = \dya{0}\,,\quad P^A_2 = \dya{1}\,,\quad P^A_3 = \dya{2}\,,\quad P^A_4 = \dya{3} \,.
\end{align}

By applying a squashing model \cite{Gittsovich2014}, we model Bob's system as a qutrit, where the one-photon subspace is modeled as a qubit subspace, and the third dimension is the vacuum. This third dimension is incorporated because of channel loss, which may cause a no-click event. Bob's POVM elements are then 
\begin{align}
&P^B_1 = p_z\dya{0} \oplus 0 \,,\quad P^B_2 = p_z\dya{1} \oplus 0 \,,\quad P^B_3 = (1-p_z)\dya{+} \oplus 0 \,,\notag\\
& P^B_4 = (1-p_z)\dya{-} \oplus 0 \,,\quad P^B_5 = \id - \sum_{i=1}^4 P^B_i \,.
\end{align}

Next we consider data simulation for the purpose of formulating the constraints in \eqref{eqnconstraints}. We model the channel between Alice and Bob as a lossy channel $\EC_{\text{loss}}(\rho)$ with transmission probability $\eta$. Note that the action of this channel on a coherent state is $\ket{\alpha} \to \ket{\sqrt{\eta}\alpha}$. We can apply the channel to the state $\ket{\psi}_{AA'}$ to obtain the bipartite state
\begin{align}
\rho_{AB}^{\text{sim}} = (\IC\ot \EC_{\text{loss}})(\dya{\psi}_{AA'})\,.
\end{align}
The remainder of the model is identical to that of Sec.~\ref{sctAppExamples2} from \eqref{eqntrojanconstraint1} onwards. This defines the optimization problem.

\end{document}